\theoremstyle{definition}
\newtheorem{definition}{Definition}
\newtheorem{theorem}{Theorem}
\newtheorem{example}{Example}
\newtheorem{lemma}{Lemma}
\newtheorem{proposition}{Proposition}
\newtheorem{corollary}{Corollary}
\theoremstyle{remark}
\newtheorem{remark}{Remark}
\newcommand{\union}{\cup}
\newcommand{\Nat}{\mathbb{N}}
\newcommand{\Var}{\mathrm{V}}	
\newcommand{\Terms}{\mathcal{T}}	
\newcommand{\sop}{[}
\newcommand{\scl}{]}
\newcommand{\sel}[2]{#1 \backslash #2}
\newcommand{\unsubst}[2]{\sop \sel{#1}{#2} \scl}
\newcommand{\impl}{\supset}
\newcommand{\Land}{\bigwedge}
\newcommand{\entails}{\vDash}
\newcommand{\Lang}{L}	
\newcommand{\Gram}{G}	
\newcommand{\LK}{\ensuremath{\mathbf{LK}}}	
\newcommand{\seq}{\rightarrow}	
\newcommand{\com}[1]{|{#1}|}	
\newcommand{\Hseq}{H}	
\newcommand{\cred}{\rightsquigarrow}	
\newcommand{\credne}{\stackrel{\mathit{ne}}{\rightsquigarrow}}	
\newcommand{\cut}{\mathrm{cut}}
\newcommand{\negl}{\neg_\mathrm{l}}
\newcommand{\andl}{\land_{\mathrm{l}}}
\newcommand{\impll}{\impl_\mathrm{l}}
\newcommand{\foralll}{\forall_\mathrm{l}}
\newcommand{\forallr}{\forall_\mathrm{r}}
\newcommand{\existsl}{\exists_\mathrm{l}}
\newcommand{\existsr}{\exists_\mathrm{r}}
\newcommand{\lkrule}[3][]{\infer[{#1}]{#2}{#3}}
\newcommand{\lkcut}[3][]{\infer[\cut{#1}]{#2}{#3}}
\newcommand{\lknl}[3][]{\infer[\negl{#1}]{#2}{#3}}
\newcommand{\lkil}[3][]{\infer[\impll{#1}]{#2}{#3}}
\newcommand{\lkur}[3][]{\infer[\forallr{#1}]{#2}{#3}}
\newcommand{\all}{\forall}
\newcommand{\DA}{{\rm GC}}
\newcommand{\GG}{{\rm GG}}
\newcommand{\ass}{\setminus}
\newcommand{\bAND}{\bigwedge}
\newcommand{\PCA}{{\rm PCA}} 
\newcommand{\cutintro}[2]{{\rm CI}(#1,#2)}
\newcommand{\nmodels}{\ensuremath{\nvDash}}
\newcommand{\FResOp}{\ensuremath{\mathcal{F}}} 
\newcommand{\dual}{\overline}
\newcommand{\res}{\ensuremath{\mathrm{res}}}
\newcommand{\comp}[1]{\circ_{#1}}
\newcommand{\mS}{\mathcal{S}}
\newcommand{\qequiv}{\quad\equiv\quad}
\newcommand{\Cons}{{\cal C}}
\newcommand{\SF}[1]{\ensuremath{\mathrm{SF}_{#1}}}    
\newcommand{\SFn}[1]{\ensuremath{\SF{#1}^+}}          
\newcommand{\CI}{\SFn{\FResOp}}                       
\newcommand{\dedClos}{\mathcal{D}}                  
\newcommand{\dedClosOp}{\ensuremath{\mS}}  
\newcommand{\SHS}{\ensuremath{\mathrm{SHS}}}
\def\RCS$#1: #2 ${\expandafter\def\csname RCS#1\endcsname{#2}}
\begin{document}

\begin{frontmatter}

\title{Algorithmic Introduction of Quantified Cuts\tnoteref{proj}}
\tnotetext[proj]{This work was supported by the projects P22028-N13, I603-N18 and
P25160-N25 of the Austrian Science Fund (FWF), by the ERC Advanced Grant
ProofCert and the WWTF Vienna Research Group 12-04.}

\author[dm]{Stefan Hetzl}
\ead{stefan.hetzl@tuwien.ac.at}
\author[tal]{Alexander Leitsch}
\ead{leitsch@logic.at}
\author[tal]{Giselle Reis}
\ead{giselle@logic.at}
\author[dm]{Daniel Weller}
\ead{weller@logic.at}

\address[dm]{Institute of Discrete Mathematics and Geometry, Vienna University of Technology, Vienna, Austria}
\address[tal]{Institute of Computer Languages, Vienna University of Technology, Vienna, Austria}

\begin{abstract}
We describe a method for inverting Gentzen's cut-elimination
in classical first-order logic. Our algorithm
is based on first computing a compressed representation
of the terms present in the cut-free proof and
then cut-formulas that realize such a compression.
Finally, a proof using these cut-formulas is constructed.
This method allows an exponential compression of proof length.
It can be applied to the output of automated theorem provers,
which typically produce analytic proofs. An implementation is
available on the web and described in this paper.
\end{abstract}

\end{frontmatter}


\section{Introduction}

Cut-elimination introduced by Gentzen~\cite{Gentzen34Untersuchungen} is the most
prominent form of proof transformation in logic and plays an important role in automating the
analysis of mathematical proofs. The removal of cuts corresponds to the elimination of intermediate statements (lemmas), resulting in a proof which is analytic in the
sense that all statements in the proof are subformulas of the result. Thus a proof of a combinatorial statement is converted into a purely combinatorial
proof. Cut-elimination is therefore an essential tool for the analysis of proofs, especially to make implicit parameters explicit.

In this paper we present a method for inverting Gentzen's cut-elimination by
computing a proof with cut from a given cut-free proof as input. As
cut-elimination is the backbone of proof theory, there is considerable proof-theoretic
interest and challenge in understanding this transformation sufficiently well to be able
to invert it. But our interest in cut-introduction is not only of a purely theoretical nature.
Proofs with cuts have properties that are essential for applications: one the
one hand, cuts are indispensable for formalizing proofs in a human-readable way.
One the other hand cuts have a very strong compression power in terms of proof length.

Computer-generated proofs are typically analytic, i.e.\ they only contain logical material that also appears in the theorem shown. This is due to the fact that
analytic proof systems have a considerably smaller search space which makes proof-search practically feasible. In the case of the sequent calculus, proof-search
procedures typically work on the cut-free fragment. But also resolution is essentially analytic as resolution proofs satisfy the subformula property of first-order
logic. An important property of non-analytic proofs is their considerably smaller length. The exact difference depends on the logic (or theory) under
consideration, but it is typically enormous. In (classical and intuitionistic) first-order logic there are proofs with cut of length $n$ whose theorems have only
cut-free proofs of length $2_n$ (where $2_0 = 1$ and $2_{n+1}=2^{2_n}$) (see~\cite{Statman79Lower} and~\cite{Orevkov79Lower}). The length of a proof plays an important
role in many situations such as human readability, space requirements and time requirements for proof checking. For most of these situations general-purpose data
compression methods cannot be used as the compressed representation is not a proof anymore. It is therefore of high practical interest to develop proof-search methods
which produce non-analytic and hence potentially much shorter proofs. In the method presented in this paper we start with a cut-free proof and abbreviate it by
computing useful cuts based on a structural analysis of the cut-free proof.

There is another, more theoretical, motivation for introducing cuts which derives from the foundations of mathematics: most of the central mathematical notions have
developed from the observation that many proofs share common structures and steps of reasoning. Encapsulating those leads to a new abstract notion, like that of a
group or a vector space. Such a notion then builds the base for a whole new theory whose importance stems from the pervasiveness of its basic notions in mathematics.
From a logical point of view this corresponds to the introduction of cuts into an existing proof database. While we cannot claim to contribute much to the
understanding of such complex historical processes by the current technical state of the art, this second motivation is still worthwhile to keep in mind, if only to
remind us that we are dealing with a difficult problem here.

Gentzen's method of cut-elimination is based on reductions of cut-derivations (subproofs ending in a cut), transforming them into simpler ones;
basically the cut is replaced by one or more cuts with lower logical complexity. A naive reversal of this procedure
is infeasible as it would lead to a search tree which is exponentially branching
on some nodes and infinitely branching on others. Therefore we base our procedure
on a deeper proof-theoretic analysis: in the construction of a Herbrand sequent $S'$ corresponding to a cut-free proof $\varphi'$ (see
e.g.~\cite{Baaz.Leitsch.1994}) obtained by cut-elimination on a proof $\varphi$ of a sequent $S$ with cuts, only the {\em substitutions} generated by cut-elimination
on quantified cuts are relevant. In fact, it is shown in~\cite{Hetzl12Applying} that, for proofs with $\Sigma_1$ and $\Pi_1$-cuts only, $S'$ can be obtained just
by computing
the substitutions defined by cut-elimination without applying Gentzen's procedure as a whole. Via the cuts in the proof $\varphi$ one can define a tree
grammar generating a language consisting exactly of the terms (to be instantiated for quantified variables in $S$) for obtaining the Herbrand sequent
$S'$~\cite{Hetzl12Applying}. Hence, generating a tree grammar $G$ from a set of Herbrand terms $T$ (generating $T$) corresponds to an inversion of the
quantifier part of Gentzen's procedure. The computation of such an inversion forms the basis of the method of {\em cut-introduction} presented in this paper.
Such an inversion of the quantifier part of cut-elimination determines which
instances of the cut-formulas are used but it does not determine the cut-formulas.
In fact, a priori it is not clear that every such grammar can be realized by
actual cut-formulas. However, we could show that, for any such tree
grammar representing the quantifier part of potential cut-formulas, actual
cut-formulas can be constructed. Finally, a proof containing these cut-formulas can be constructed.

Work on cut-introduction can be found at a number of different places in the literature.
Closest to our work are other approaches which aim to abbreviate or structure
{\em a given input proof}:~\cite{WoltzenlogelPaleo10Atomic} is an
algorithm for the introduction of atomic cuts that is capable of exponential proof compression.
The method~\cite{Finger07Equal} for propositional logic is
shown to never increase the size of proofs more than polynomially.
Another approach to the compression of first-order proofs by
introduction of definitions for abbreviating terms is~\cite{Vyskocil10Automated}.

Viewed from a broader perspective, this paper should be considered part of a large
body of work on the generation of non-analytic formulas that has been carried out
by numerous researchers in various communities. Methods for lemma generation are of crucial importance
in inductive theorem proving which frequently requires generalization~\cite{Bundy01Automation},
see e.g.~\cite{Ireland96Productive} for a method in the context of rippling~\cite{Bundy05Rippling}
which is based on failed proof attempts.
In automated theory formation~\cite{Colton01Automated,Colton02Automated}, an eager
approach to lemma generation is adopted. This work has, for example, led to
automated classification results 
of isomorphism classes~\cite{Sorge08Classification} and isotopy classes~\cite{Sorge08Automatic}
in finite algebra. See also~\cite{Johansson11Conjecture} for an approach to inductive theory
formation.
In pure proof theory, an important related topic is Kreisel's conjecture~(see footnote 3 on page 400 of~\cite{Takeuti87Proof})
on the generalization of proofs. Based on methods developed in this tradition,~\cite{Baaz92Algorithmic}
describes an approach to cut-introduction by filling a proof skeleton, i.e.\ an abstract proof structure,
obtained by an inversion of Gentzen's procedure with formulas in order
to obtain a proof with cuts.
The use of cuts for structuring and abbreviating proofs is also of relevance
in logic programming:~\cite{Miller07Tables} shows how to use focusing
in order to avoid proving atomic subgoals twice, resulting in a proof
with atomic cuts.

This paper is organized as follows:\\
In Section \ref{sec.pth_instruct} we define Herbrand sequents and {\em extended} Herbrand sequents which represent proofs with cut. The concept of rigid acyclic regular tree grammars is
applied to establish a relation between an extended Herbrand sequent $S^*$ and a (corresponding) Herbrand sequent $S'$: the language defined by this
grammar is just the set of terms $T$ to be instantiated for quantifiers in the original sequent $S$ to obtain $S'$. Given such a grammar $G$ generating $T$ there exists
a so-called {\em schematic extended Herbrand sequent} $\hat{S}$ in which the (unknown) cut-formulas are represented by monadic second-order variables. It is proved
that $\hat{S}$ always has a solution, the {\em canonical solution}. From this solution, which gives an extended Herbrand sequent and the cut-formulas for a proof, the
actual proof with these cuts is constructed.

To make the underlying methods more transparent, Section \ref{sec.pth_instruct} deals only with end-sequents of the form $\forall x\;F$.
In Section \ref{sec:more_general} the method is generalized to sequents of the form
$$\forall \bar{x}_1 F_1, \ldots, \forall \bar{x}_n F_n \seq \exists \bar{y}_1 G_1, \ldots, \exists \bar{y}_m G_m$$
where the $\bar{x}_i,\bar{y}_j$ are vectors of variables and $\forall z_1 \cdots z_k$ stands for $\forall z_1 \cdots \forall z_k$.
This form of sequents is more useful for practical applications and covers all of
first-order logic as an arbitrary sequent can be transformed into one of this
form by Skolemization and prenexification. We prove that all results obtained
in Section \ref{sec.pth_instruct} carry over to this more general case.

In Section \ref{sec:computation_of_grammar} an algorithm is presented computing a minimal rigid acyclic regular tree grammar generating the Herbrand term set $T$.

Given a cut-free proof $\varphi$ and a corresponding Herbrand term set $T$, the canonical solution corresponding to a non-trivial minimal grammar generating $T$ yields
a proof $\psi$ with lower quantifier-complexity (which is the number of quantifier inferences in a proof) than $\varphi$, but the length of $\psi$ (the total number
of inferences) may be greater than that of $\varphi$. In Section
\ref{sec.improving} a method is presented to overcome this problem. By using a resolution-based method the cut-formulas
are simplified under preservation of the quantifier complexity, resulting in proofs with lower number of inferences.

In Section \ref{sec.method_CI} a nondeterministic algorithm CI is defined which
is based on the techniques developed in Sections \ref{sec.pth_instruct} to
\ref{sec.improving}. We show that CI is, in a suitable sense, an inversion of
Gentzen's cut-elimination method.
A sequence of cut-free proofs is defined and it is proven
that the application of CI to this sequence results in an exponential compression of
proof length. Finally the existing implementation of CI and some experiments
are described in Section \ref{sec.implementation_experiments}.

This paper improves the publication~\cite{Hetzl12Towards} in several crucial
directions: (1) the method for introducing a single $\forall$-cut is generalized
to a method introducing an arbitrary number of $\forall$-cuts, which requires --
among others -- a length-preserving transformation of extended
Herbrand-sequents to proofs with cuts based on Craig interpolation, (2) we show
that our method is, in a suitable sense, an inversion of Gentzen's
cut-elimination method, (3) the end-sequent may contain blocks of quantifiers
instead of just single ones, (4) the decomposition of terms is represented as a
problem of grammars and a practical algorithm for computing grammars is
developed, (5) it is shown that the proof compression obtained by the new method is
exponential while it was only quadratic for the one of~\cite{Hetzl12Towards},
and (6) the algorithm has been
implemented in the gapt-system\footnote{\url{http://www.logic.at/gapt/}}.

The method CI developed in this paper is a systematic, proof-theoretic method to compress the lengths of first-order proofs by the introduction of cuts. Still, the
generated cut-formulas are all universal. A desirable extension of this method to introduce cuts with alternating quantifiers (which is necessary to obtain
super-exponential compressions) is left to future work.  Such an extension is highly non-trivial
as it first requires the development of an adequate notion of tree grammar
for extending the underlying proof-theoretic results to cuts with quantifier alternations.

\section{A Motivating Example}
\label{sec.motex}

Consider the sequents
\[
S_n\ =\  Pa, \forall x\, (Px\impl Pfx) \seq Pf^{2^n}a.
\]
The straightforward cut-free proof of $S_n$ in the sequent calculus uses the successor-axiom $2^n$ times. In fact, it is
easy to show that every cut-free proof has to contain all of these $2^n$ instances.
On the other hand, if we allow the use of cuts, we can give a considerably shorter
proof by first showing
\[
\forall x\, (Px\impl Pf^2x)
\]
from the axiom and then using this formula twice to show
\[
\forall x\, (Px\impl Pf^4x)
\]
and so on. In general we cut with the constant-length proofs of
\[
\forall x\, (Px\impl Pf^{2^i}x) \seq \forall x\, (Px\impl P^{2^{i+1}}x)
\]
and hence obtain a proof of $S_n$ that uses only $O(n)$ inference steps instead
of the $\Omega(2^n)$ of the cut-free proof. Note how
the structures of these two proofs are reminiscent of the binary and
the unary representation of numbers. This proof sequence is an exponential version
of the sequences of Statman~\cite{Statman79Lower} and Orevkov~\cite{Orevkov79Lower}
and has also been considered by Boolos~\cite{Boolos84Dont}.

In this paper we want to leverage this compression power of lemmas by automatically transforming
cut-free proofs into proofs using compressing lemmas.

\section{Proof-Theoretic Infrastructure}
\label{sec.pth_instruct}

Gentzen's proof of cut-elimination~\cite{Gentzen34Untersuchungen} can be understood
as the application of a set of local proof rewriting rules with a terminating strategy.
A first naive approach to cut-introduction would be to consider the inversion
of these local rewriting steps as a search algorithm. While this procedure would
in theory allow to reverse every cut-elimination sequence, it becomes clear quickly
that it is not feasible in practice: not only would we have to guess an enormous
amount of trivialities (e.g.\ rule permutations) but the inversion of rewriting
rules which erase a part of the proof lead to the necessity of correctly guessing
an entire subproof. Therefore we need more abstract proof representations.

The proof representations we will be using and their relationships are depicted
in Figure~\ref{fig.commdiag}.
\begin{figure}
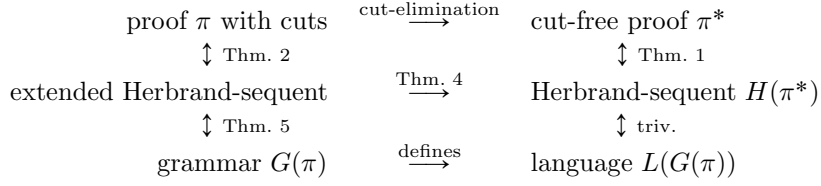

\begin{center}
\begin{tabular}{rcl}
proof $\pi$ with cuts & $\stackrel{\mbox{\scriptsize{cut-elimination}}}{\longrightarrow}$ & cut-free proof
$\pi^*$\\
$\updownarrow$ {\scriptsize Thm.~\ref{thm.proof_extHseq}}\hspace*{5mm}  & & \hspace*{11.3mm}$\updownarrow$ {\scriptsize Thm.~\ref{thm.proof_Hseq}}\\
extended Herbrand-sequent & $\stackrel{\mbox{\scriptsize{Thm.~\ref{thm.grammar_cutelim}}}}{\longrightarrow}$ &
Herbrand-sequent $\Hseq(\pi^*)$\\
$\updownarrow$ {\scriptsize Thm.~\ref{thm.extHseq_grammar}}\hspace*{5mm}  & & \hspace*{10mm} $\updownarrow$ {\scriptsize triv.} \\
grammar $\Gram(\pi)$ & $\stackrel{\mbox{\scriptsize{defines}}}{\longrightarrow}$ & language $\Lang(\Gram(\pi))$ 
\end{tabular}
\end{center}
\caption{Proof-theoretic setting of this paper}
\label{fig.commdiag}
\end{figure}
The purpose of this section is to explain these representations and their relationships.
As a first orientation let us just mention that the rows of
Figure~\ref{fig.commdiag} contain notions on increasingly abstract levels:
the level of proofs in the first row, that of formulas in the second row and
that of terms in the third row. The transformations in each
column are complexity-preserving (in a sense that will be made precise soon). The
transformation of an object in the left column to an object in the right column 
increases its complexity considerably (exponentially in this paper).

For this whole section, fix a quantifier-free formula $F$ with one free variable
$x$ s.t.\ $\forall x\, F$ is unsatisfiable. We will, for the sake of simplicity,
explain our algorithm first in the setting of proofs of the end-sequent $\forall x\, F\seq$.
We will show how to abbreviate a given cut-free proof of this sequent by the
introduction of cuts, which are of the form $\forall x\, A$ for $A$ quantifier-free,
such cuts will be called $\Pi_1$-cuts in the sequel. The algorithm will
then be generalised to less restrictive end-sequents in
Section~\ref{sec:more_general}.

\subsection{Proofs and Herbrand's Theorem}
\label{sec.proofs_Hthm}

A {\em sequent} is an ordered pair of sets of formulas, written as $\Gamma \seq \Delta$. While
the concrete variant of the sequent calculus is of little importance to the
algorithms presented in this paper let us, for the sake of precision, fix it
to be $\mathbf{G3c}+\mathrm{Cut_{cs}}$\footnote{$\mathbf{G3c}+\mathrm{Cut_{cs}}$
has no structural rules and all its rules are invertible.}
from~\cite{Troelstra2000}.
\begin{definition}[Herbrand-sequent]
A tautological sequent of the form $H:F\unsubst{x}{t_1},\ldots,F\unsubst{x}{t_n}\seq$ is
called a {\em Herbrand-sequent} of $\forall x\, F\seq$. We define $|H|=n$ and call
it the \emph{complexity of H}.
\end{definition}
%
We thus measure the number of instances
of $\forall x\, F$ used for showing its unsatisfiability. This complexity-measure
is of fundamental importance as the undecidability of first-order logic hinges on
it: a bound gives a decision procedure as most general unification can be used
for bounding the term size in the number of instances; it then only remains to enumerate
all possible instances having at most this bounding size. On the level of proofs we keep track of the number of used instances
by counting the number of $\foralll$- and $\existsr$-inferences, for the other quantifier
rules ($\forallr$ and $\existsl$) one application per formula suffices.
\begin{definition}
We define the {\em quantifier-complexity of a proof} $\pi$, written as $|\pi|_\mathrm{q}$ as the number of $\foralll$- and $\existsr$-inferences in $\pi$.
\end{definition}
Herbrand-sequents then correspond to cut-free proofs in the following sense.
\begin{theorem}\label{thm.proof_Hseq}
$\forall x\, F\seq $ has a cut-free
proof $\pi$ with $|\pi|_\mathrm{q} = l$
iff it has a Herbrand-sequent $H$ with $|H| = l$.
\end{theorem}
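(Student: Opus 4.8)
The plan is to prove the two directions separately, both by induction, exploiting the structure of $\mathbf{G3c}$ whose quantifier rules on the relevant side introduce exactly one instance at a time.

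\medskip

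\noindent\textbf{From a cut-free proof to a Herbrand-sequent.}
Suppose $\pi$ is a cut-free proof of $\forall x\, F\seq$ with $|\pi|_{\mathrm q}=l$. Since $F$ is quantifier-free, the only quantifier rule that can occur in $\pi$ is $\foralll$, applied to (descendants of) the formula $\forall x\, F$; each such inference replaces the principal formula by an instance $F\unsubst{x}{t}$. The idea is to collect all the instantiation terms $t_1,\dots,t_l$ used at these $l$ inferences and show that $H : F\unsubst{x}{t_1},\dots,F\unsubst{x}{t_l}\seq$ is tautological. One clean way: perform a ``mid-sequent''-style argument, permuting all $\foralll$-inferences downward so that the proof splits into a propositional upper part followed by $l$ applications of $\foralll$; the sequent just above the block of $\foralll$-inferences is then (up to the quantified formula) exactly $H$, and the existence of a cut-free — hence propositionally sound — derivation of it witnesses that $H$ is a tautology. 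Alternatively, and perhaps more in the spirit of $\mathbf{G3c}$ with its invertible rules and no structural rules, one can argue directly by induction on $\pi$: replacing the occurrences of $\forall x\, F$ in the end-sequent by the multiset of instances actually consumed above, one checks that each inference of $\pi$ still goes through, yielding a cut-free (indeed quantifier-free in the relevant formula) proof of $H$; soundness then gives that $H$ is a tautology. Either way one must be a little careful that $\mathbf{G3c}$ works with sets, not multisets, so ``the same instance used twice'' has to be tracked by hand rather than read off the sequent; this bookkeeping is the only subtle point in this direction.

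\medskip

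\noindent\textbf{From a Herbrand-sequent to a cut-free proof.}
Conversely, let $H : F\unsubst{x}{t_1},\dots,F\unsubst{x}{t_l}\seq$ be tautological. Since $H$ is a quantifier-free tautological sequent, by completeness of the quantifier-free (propositional) fragment of $\mathbf{G3c}$ it has a cut-free proof $\pi_0$ containing no quantifier inferences at all, so $|\pi_0|_{\mathrm q}=0$. Now apply $\foralll$ $l$ times, once for each $t_i$, to contract each $F\unsubst{x}{t_i}$ back into $\forall x\, F$; because $\mathbf{G3c}$ keeps the principal formula $\forall x\, F$ in the premise of $\foralll$, after the first application all subsequent instances are introduced against the already-present $\forall x\, F$, and after the last one the antecedent is exactly $\{\forall x\, F\}$. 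This yields a cut-free proof $\pi$ of $\forall x\, F\seq$ with exactly $l$ many $\foralll$-inferences and no $\existsr$-inferences, hence $|\pi|_{\mathrm q}=l$, as required.

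\medskip

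\noindent\textbf{Main obstacle.}
The genuinely routine steps are the propositional completeness/soundness invocations; the one place that needs care is the passage between the set-based sequents of $\mathbf{G3c}$ and the instance-counting measure $|{\cdot}|_{\mathrm q}$ and $|H|$. Concretely, in the forward direction two distinct $\foralll$-inferences may use the same term $t$, so the naive ``read the instances off the top sequent'' recipe undercounts; one must instead count inferences and allow $H$ to be the corresponding multiset (or, equivalently, note that repeated instances can be introduced by repeated $\foralll$ without changing the set). I expect this to be the main thing to get right, and it is exactly why the theorem is phrased in terms of $|\pi|_{\mathrm q}$ and $|H|$ rather than in terms of subformula occurrences.
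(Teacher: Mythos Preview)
Your proposal is correct and follows essentially the same approach as the paper's proof sketch: read off the instances from $\pi$ in the forward direction (taking care about duplicates), and in the backward direction start from a propositional proof of $H$ and append $l$ applications of $\foralll$. The paper handles the duplicate-instance issue you flag by padding $H$ with ``dummy instances'' rather than tracking a multiset, but this is the same bookkeeping point you identify; your observation that in $\mathbf{G3c}$ the principal formula is retained so no explicit contraction is needed is in fact slightly more precise than the paper's own sketch.
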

\begin{proof}[Proof Sketch]
Given $\pi$ we obtain $H$ by reading off the instances of $\forall x\, F$ from
the proof $\pi$ and collecting them in a sequent (if $\pi$ contains some duplicate
instances we add dummy instances to $H$ for obtaining $|H| = |\pi|_\mathrm{q}$).

Given $H$ we first compute any propositional proof of $H$ and obtain a cut-free
proof $\pi$ of $\forall x\, F\seq$ by introducing the universal quantifier for each of those instances
and applying a sufficient number of contractions.
\end{proof}
The above theorem shows that we can think of a Herbrand-sequent as a concise
representation of a cut-free proof. 
A first important step towards our cut-introduction
algorithm will be the generalisation of this relation to proofs with an arbitrary
number of $\Pi_1$-cuts (in a way similar to~\cite{Hetzl2009}).
%
\begin{definition}
\label{def.extHseq}
Let $u_1,\ldots,u_m$ be terms, let $A_1,\ldots,A_n$
be quantifier-free formulas, let $\alpha_1,\ldots,\alpha_n$ be variables, let
$V(t)$ denote the set of variables occurring in the term $t$, and let
$s_{i,j}$ for $1\leq i \leq n, 1\leq j \leq k_j$ be terms s.t.\ 
\begin{enumerate}
\item $\Var(A_{i}) \subseteq \{ \alpha_i,\ldots,\alpha_n \}$ for all $i$, and
\item\label{def.extHseq.varcondsij} $\Var(s_{i,j}) \subseteq \{ \alpha_{i+1},\ldots,\alpha_n \}$ for all $i,j$.
\end{enumerate}
Then the sequent
\[
H\ =\ F\unsubst{x}{u_1},\ldots,F\unsubst{x}{u_m},A_1 \impl \Land_{j=1}^{k_1} A_1\unsubst{\alpha_1}{s_{1,j}},\ldots, 
A_n \impl \Land_{j=1}^{k_n} A_n\unsubst{\alpha_n}{s_{n,j}} \seq
\]
is called an {\em extended Herbrand-sequent} of $\forall x\, F\seq$ if $H$ is
a tautology.
\end{definition}
What is this cryptic definition supposed to mean? An extended Herbrand-sequent of
the above form will represent a proof with $n$ $\Pi_1$-cuts whose cut formulas are
$\forall \alpha_1\, A_1,\ldots,\forall \alpha_n\, A_n$ (or sometimes minor variants thereof),
the $\alpha_i$ are the eigenvariables
of the universal quantifiers in these cut-formulas, the $s_{i,j}$ the terms
of the instances of the cut-formulas on the right-hand side of the cut and the $u_i$ the terms of the instances
of our end-formula $\forall x\, F$.
The complexity of an extended Herbrand-sequent $H$ of the above form 
is defined as $|H| = m + \sum_{j=1}^n k_j$. One can view an extended Herbrand-sequent
together with a propositional proof of it as a particular form of proof in
the $\varepsilon$-calculus~\cite{Hilbert39Grundlagen2} with the cuts corresponding
to the critical formulas.
We obtain the following correspondence to the sequent calculus:
\begin{theorem}\label{thm.proof_extHseq}
$\forall x\, F\seq$ has
a proof $\pi$ with 
$\Pi_1$-cuts and $|\pi|_\mathrm{q} = l$ iff it has
an extended
Herbrand-sequent $H$ with $|H|=l$. 
\end{theorem}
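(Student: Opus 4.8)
\textbf{Proof plan for Theorem~\ref{thm.proof_extHseq}.}
The plan is to prove the two directions separately, mirroring the structure of the proof of Theorem~\ref{thm.proof_Hseq} but keeping careful track of the cuts. For the direction from a proof to an extended Herbrand-sequent, suppose $\pi$ is a proof of $\forall x\, F\seq$ with $n$ $\Pi_1$-cuts and $|\pi|_\mathrm{q}=l$. First I would Skolemize and prenexify as needed so that the cut-formulas are exactly of the form $\forall\alpha_i\, A_i$; then, processing the cuts from the outermost inward, I would introduce for each cut a fresh eigenvariable $\alpha_i$ (reusing the one already present in $\pi$) and read off from the left premise of the $i$-th cut the instance terms $s_{i,1},\ldots,s_{i,k_i}$ used for the $\forall\alpha_i\, A_i$ instances appearing on the right-hand side of that cut, and from the end-sequent side the instance terms $u_1,\ldots,u_m$ for $\forall x\, F$. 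The variable conditions~(1) and~(2) of Definition~\ref{def.extHseq} follow from the eigenvariable conditions of the sequent calculus: $A_i$ can only mention eigenvariables of cuts that are ``above'' it, i.e.\ $\alpha_i,\ldots,\alpha_n$, and each $s_{i,j}$ is a term occurring in the left premise of the $i$-th cut, which lies strictly inside the scope introducing $\alpha_i$, hence may only mention $\alpha_{i+1},\ldots,\alpha_n$. That the resulting sequent $H$ is a tautology is exactly the statement that the mid-sequent / Herbrand-type extraction from $\pi$ yields a propositionally valid sequent; this is the analogue of the cut-free mid-sequent theorem applied level-by-level through the cuts, and padding with dummy instances if necessary gives $|H|=l$.

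For the converse, given an extended Herbrand-sequent $H$ of the stated form, I would build the proof $\pi$ from the bottom up. Start with a propositional proof $\rho$ of the tautology $H$ (which exists by assumption). The cut-formula instances $A_i\unsubst{\alpha_i}{s_{i,j}}$ together with $A_i\impl\Land_j A_i\unsubst{\alpha_i}{s_{i,j}}$ are assembled, for each $i$ from $n$ down to $1$, into a genuine $\cut$ on $\forall\alpha_i\, A_i$: the right premise contracts the instances $\Land_j A_i\unsubst{\alpha_i}{s_{i,j}}$ up from $\rho$ and applies $\foralll$ $k_i$ times (one $\foralll$ per $s_{i,j}$) together with $\impll$, while the left premise, which must prove $\seq\forall\alpha_i\, A_i$ (in context), is obtained by a $\forallr$ on the eigenvariable $\alpha_i$ over a subproof furnished inductively by the remaining instances $A_1\impl\cdots, \ldots, A_{i-1}\impl\cdots$. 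The variable conditions~(1) and~(2) are precisely what guarantees the eigenvariable condition of each $\forallr$ inference is met when the cuts are stacked in this order. Finally the $u_1,\ldots,u_m$ instances of $F$ are turned into $\foralll$ inferences on $\forall x\, F$ followed by contractions. Counting: the $\foralll$-inferences introduced are $m$ (for $\forall x\, F$) plus $\sum_{i=1}^n k_i$ (for the right-hand cut instances), and there are no $\existsr$-inferences, so $|\pi|_\mathrm{q}=m+\sum_{i=1}^n k_i=|H|=l$; contractions and $\forallr$-inferences do not count towards $|\pi|_\mathrm{q}$.

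The main obstacle, I expect, is the bookkeeping of the variable conditions and the exact shape of the cuts in the bottom-up construction, rather than any deep new idea: one must verify that stacking the $n$ cuts in the order $n,n-1,\ldots,1$ and placing each $\forallr$ for $\alpha_i$ over a subderivation that uses only $A_1,\ldots,A_{i-1}$ (and their instances) really does respect all eigenvariable conditions simultaneously, and that the ``minor variants'' of the cut-formulas alluded to after the definition — e.g.\ needing $\forall\alpha_i(A_i\impl\Land_j A_i\unsubst{\alpha_i}{s_{i,j}})$ or a propositional rearrangement — are handled uniformly. A clean way to organize this is an induction on $n$: the inductive hypothesis gives a proof with $n-1$ cuts of the sequent with $A_1\impl\cdots,\ldots,A_{n-1}\impl\cdots$ and one extra hypothesis $\forall\alpha_n\, A_n$ in the antecedent (or, dually, treat $A_n\impl\Land_j A_n\unsubst{\alpha_n}{s_{n,j}}$ as the innermost cut), and the step adds the outermost (respectively innermost) cut. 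The forward direction's only subtlety is that, to read off the instances cleanly, one should first assume $\pi$ is in a form where all $\forallr$/$\existsl$ inferences on the cut-formulas and end-formula occur with distinct eigenvariables and where contractions are grouped, which can be arranged by standard, complexity-preserving manipulations.
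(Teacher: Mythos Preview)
Your left-to-right direction is fine and matches the paper. The right-to-left direction, however, has a genuine gap that you dismiss too quickly under the heading of ``bookkeeping'' and ``minor variants''.

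The problem is this: in your bottom-up construction you must decide, for each instance $F\unsubst{x}{u_j}$, on which side of each cut it is introduced. Since $u_j$ may contain the eigenvariable $\alpha_i$, the $\foralll$-inference introducing $F\unsubst{x}{u_j}$ must occur \emph{above} the $\forallr$ on $\alpha_i$, i.e.\ on the side of the cut where $A_i$ is proved. But then that instance is \emph{not available} on the other side of the cut, and there is no reason why the remaining instances suffice to make that side valid. Concretely, take $F = P(x) \land (P(c)\impl Q(x)) \land (Q(x)\impl P(d)) \land \neg P(d)$, $A_1 = P(\alpha_1)$, $m=1$, $u_1=\alpha_1$, $k_1=1$, $s_{1,1}=c$. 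The extended Herbrand-sequent is a tautology, but if you cut on $\forall x\, P(x)$ then the only $F$-instance $F\unsubst{x}{\alpha_1}$ must go above the $\forallr$, leaving the other branch as $P(c)\seq$, which is not valid. So with the cut formulas $A_i$ taken verbatim from $H$, the proof simply cannot be built in general.

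The paper's fix is not a rearrangement but a genuine modification of the cut formulas: it applies propositional Craig interpolation to split the sequent at each level, obtaining an interpolant $I_i$ with $\Var(I_i)\subseteq\{\alpha_{i+1},\ldots,\alpha_n\}$, and replaces $A_i$ by $A'_i := A_i\land I_i$. The interpolant carries exactly the $\alpha_i$-free information that the ``wrong'' side of the cut needs. This is what the paper means by ``minor variants'', and it is the key idea of the proof, not a detail to be handled uniformly. Your induction on $n$ does not repair this: the inductive hypothesis would require the analogue of $R_i$ (the right branch) to be valid, and that is precisely what fails without interpolation.
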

While this theorem looks plausible it is not as straightforward to prove as one
may expect. Its proof relies on Craig's interpolation theorem~\cite{Craig57Three}
which we briefly repeat here for the reader's convenience in the version of~\cite{Takeuti87Proof}
and restricted to propositional logic. We split a sequent into two parts
by writing it as a {\em partition} $\Gamma_1\ ;\ \Gamma_2 \seq \Delta_1\ ;\ \Delta_2$.
The purpose of doing so is merely to mark $\Gamma_1,\Delta_1$ as belonging to
one and $\Gamma_2,\Delta_2$ as belonging to the other part of the partition. The
logical meaning of $\Gamma_1\ ;\ \Gamma_2 \seq \Delta_1\ ;\ \Delta_2$ is 
just $\Gamma_1, \Gamma_2 \seq \Delta_1, \Delta_2$.
\begin{theorem}\label{thm.interpolation}
If a quantifier-free sequent $\Gamma_1\ ;\ \Gamma_2 \seq \Delta_1\ ;\ \Delta_2$
is a tautology, then there is a quantifier-free formula $I$ s.t.\ 
\begin{enumerate}
\item Both $\Gamma_1 \seq \Delta_1, I$ and $I, \Gamma_2\seq\Delta_2$ are tautologies, and
\item All atoms that appear in $I$ appear in both $\Gamma_1 \seq \Delta_1$ and $\Gamma_2 \seq \Delta_2$.
\end{enumerate}
\end{theorem}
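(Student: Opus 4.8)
The plan is to prove the statement by induction on the structure of a cut-free derivation of the sequent $\Gamma_1,\Gamma_2 \seq \Delta_1,\Delta_2$ in $\mathbf{G3c}$, carrying the partition along as extra bookkeeping. Since the sequent is quantifier-free and a tautology, by completeness of the cut-free calculus it has a cut-free derivation; by cut-elimination the last inference is either an axiom or a propositional rule. For each case I track which side of the partition the principal (and auxiliary) formulas sit on and construct $I$ accordingly. The induction hypothesis gives interpolants for the premise partitions, and in the inductive step I combine them with $\land$, $\lor$, $\neg$, or leave them unchanged, always maintaining the two invariants: (1) $\Gamma_1 \seq \Delta_1, I$ and $I,\Gamma_2 \seq \Delta_2$ are tautologies, and (2) every atom of $I$ occurs both in $\Gamma_1\seq\Delta_1$ and in $\Gamma_2\seq\Delta_2$.

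First I would handle the base case: if the sequent is an axiom $p,\Gamma\seq\Delta,p$, then $p$ occurs on both sides of $\seq$; I split into subcases by where the two displayed occurrences of $p$ fall relative to the partition. If both are in part~1, take $I=\bot$; if both in part~2, take $I=\top$; if the antecedent $p$ is in part~1 and the succedent $p$ in part~2, take $I=p$; in the remaining mixed case take $I=\neg p$. In each subcase one checks directly that (1) and (2) hold (noting that $p$ then appears in both halves). For the inductive step I would go rule by rule. For a one-premise rule such as $\negl$, $\negr$, $\land_{\mathrm{l}}$, $\lor_{\mathrm{r}}$, the premise partition is obtained by replacing the principal formula by its auxiliary formula(s) on the same side of the partition; the interpolant of the premise works verbatim, after a short argument that the atom condition is preserved (the atoms of the auxiliary formulas are among those of the principal one). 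For a two-premise rule such as $\land_{\mathrm{r}}$ or $\lor_{\mathrm{l}}$ — say the principal formula is in part~1 with premises yielding interpolants $I_1, I_2$ — one takes $I = I_1 \lor I_2$ (when the rule is $\lor_{\mathrm{l}}$ with the disjunction in the antecedent-part~1) or $I = I_1 \land I_2$ (for $\land_{\mathrm{r}}$ in the succedent-part~1), and dually $I = I_1 \land I_2$ resp.\ $I = I_1 \lor I_2$ when the principal formula lies in part~2; then one verifies the tautology conditions by a routine truth-functional argument and the atom condition from the two sub-interpolants. Implication rules are treated either directly in the same style or by first noting $A\impl B$ abbreviates $\neg A \lor B$.

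I expect the main subtlety — not a deep obstacle, but the place where care is needed — to be the atom condition~(2) in the combining steps and in the axiom case: one must ensure that when the principal formula is removed one still has enough occurrences of the relevant atoms in the reduced sequent, which is why the axiom case must be split so finely and why, e.g., in the $\lor_{\mathrm{l}}$ case with the disjunction in part~1 one uses $I_1\lor I_2$ rather than something that might mention an atom present in only one branch. A second minor point is bookkeeping around the invertible, structural-rule-free formulation $\mathbf{G3c}$ of~\cite{Troelstra2000}: because contexts are copied rather than contracted, the partition of each premise is canonically induced by the partition of the conclusion, so no separate treatment of weakening or contraction is needed. With these cases dispatched, the induction closes and yields the desired quantifier-free interpolant $I$.
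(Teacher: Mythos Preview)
Your proposal is correct and follows the standard Maehara-style construction by induction on a cut-free derivation, which is exactly the proof given in~\cite{Takeuti87Proof}; the paper itself does not give a proof but simply cites that reference. So your approach matches what the paper defers to.
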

\begin{proof}
See~\cite{Takeuti87Proof}.
\end{proof}


\begin{proof}[Proof of Theorem~\ref{thm.proof_extHseq}]
For the left-to-right direction we proceed analogously to the cut-free case: by
passing through the proof $\pi$ and reading off the instances of quantified formulas (of both
the end-formula and the cuts) we obtain an extended Herbrand-sequent $H$ with $|H|\leq |\pi|_\mathrm{q}$ (which
can be padded with dummy instances if necessary in order to obtain $|H| = |\pi|_\mathrm{q}$).

For the right-to-left direction let
\[
H\ =\ F\unsubst{x}{u_1},\ldots,F\unsubst{x}{u_m},A_1 \impl \Land_{j=1}^{k_1} A_1\unsubst{\alpha_1}{s_{1,j}},\ldots, 
A_n \impl \Land_{j=1}^{k_n} A_n\unsubst{\alpha_n}{s_{n,j}} \seq
\]
be an extended Herbrand-sequent and let us begin by introducing some abbreviations. For
a set of terms $T$ and a formula $F$, write $F\unsubst{x}{T}$ for the set of
formulas $\{ F\unsubst{x}{t} \mid t\in T \}$. Abbreviate the  ``cut-implication''
$A_i \impl \Land_{j=1}^{k_i} A_i\unsubst{\alpha_i}{s_{i,j}}$ as
$\mathrm{CI}_i$ and let $U = \{ u_1,\ldots,u_m \}$. Then $H$ can be written more succinctly
as $F\unsubst{x}{U}, \mathrm{CI}_1,\ldots,\mathrm{CI}_n \seq$.

Let $U_i = \{ u\in U \mid \Var(u) \subseteq \{ \alpha_{i+1},\ldots,\alpha_n \}\}$ for
$i=0,\ldots,n$. First we will show that it suffices to find
quantifier-free formulas $A'_1,\ldots,A'_n$ s.t.\ the sequent
\[
H'\ =\ F\unsubst{x}{U},A'_1\impl \Land_{j=1}^{k_1} A'_1\unsubst{\alpha_1}{s_{1,j}},\ldots, 
A'_n \impl \Land_{j=1}^{k_n} A'_n\unsubst{\alpha_n}{s_{n,j}} \seq
\]
has a proof of the following {\em linear form}:
\[
\infer[\impl_l]{F\unsubst{x}{U},\mathrm{CI}'_1,\ldots,\mathrm{CI}'_n\seq}{
  \infer*{F\unsubst{x}{U},\mathrm{CI}'_1,\ldots,\mathrm{CI}'_{n-1}\seq A'_n}{
      \infer[\impl_l]{F\unsubst{x}{U},\mathrm{CI}'_1\seq A'_2, \ldots, A'_n}{
        \infer*{F\unsubst{x}{U}\seq A'_1, \ldots, A'_n}{}
        &
        \infer*{\Land_{j=1}^{k_1} A'_1\unsubst{\alpha_1}{s_{1,j}}, F\unsubst{x}{U_1}\seq A'_2, \ldots, A'_n}{}
      }
  }
  &
  \hspace*{-50pt}\infer*{\Land_{j=1}^{k_n} A'_n\unsubst{\alpha_n}{s_{n,j}}, F\unsubst{x}{U_n}\seq}{}
}
\]
where $\mathrm{CI}'_i$ abbreviates $A'_i \impl \Land_{j=1}^{k_i} A'_i\unsubst{\alpha_i}{s_{i,j}}$.
This suffices because in the above proof we can introduce cuts and quantifiers by
replacing a segment of the form
\[
\infer[\impl_l]{F\unsubst{x}{U}, \mathrm{CI}'_1,\ldots,\mathrm{CI}'_i \seq A'_{i+1},\ldots,A'_{n}}{
  F\unsubst{x}{U},\mathrm{CI}'_1,\ldots,\mathrm{CI}'_{i-1} \seq A'_i,\ldots,A'_{n}
  &
  \Land_{j=1}^{k_i} A'_i\unsubst{\alpha_i}{s_{i,j}},F\unsubst{x}{U_i} \seq A'_{i+1},\ldots,A'_{n}
}
\]
by
\[
\lkcut{F\unsubst{x}{U_i}, \forall x\, F \seq A'_{i+1},\ldots,A'_{n}}{
  \lkur{F\unsubst{x}{U_i}, \forall x\, F \seq \forall x\, A'_i\unsubst{\alpha_i}{x}, A'_{i+1},\ldots,A'_{n}}{
    \lkrule[\foralll^*]{F\unsubst{x}{U_i}, \forall x\, F \seq A'_i,\ldots,A'_{n}}{
      F\unsubst{x}{U_{i-1}}, \forall x\, F \seq A'_i,\ldots,A'_{n}
    }
  }
  &
  \lkrule[\foralll^*]{\forall x\, A'_i\unsubst{\alpha_i}{x}, F\unsubst{x}{U_i} \seq A'_{i+1},\ldots,A'_{n}}{
    A'_i\unsubst{\alpha_i}{s_{i,j}}_{j=1}^{k_i}, F\unsubst{x}{U_i} \seq A'_{i+1},\ldots,A'_{n}
  }
}
\]
and finishing the proof at its root by
\[
\lkrule[\foralll^*]{\forall x\, F\seq }{
  F\unsubst{x}{U_n}, \forall x\, F \seq
}.
\]
This transformation results in a proof whose number of $\foralll$-inferences is the
complexity of the extended Herbrand-sequent as every term of $H$ is introduced
exactly once.

Let us now turn to the construction of the $A'_i$. Write
\begin{align*}
L_i &\quad \mbox{for}\quad \mathrm{CI}_1,\ldots,\mathrm{CI}_{i-1}, F\unsubst{x}{U} \seq A'_i,\ldots,A'_n,\ \mbox{and}\\
R_i &\quad \mbox{for}\quad \Land_{j=1}^{k_i} A'_i\unsubst{\alpha_i}{s_{i,j}}, F\unsubst{x}{U_i} \seq A'_{i+1},\ldots,A'_{n}.
\end{align*}
Note that $L_i$ and $R_i$ depend only on those $A'_j$ with $j\geq i$ and note furthermore
that $L_{n+1}$ is the extended Herbrand-sequent $H$ which is a tautology by
assumption. Fix $i\in\{1,\ldots,n\}$. Assuming $\entails L_{i+1}$ we will
now construct $A'_i$ and show $\entails L_i$ and $\entails R_i$.

From $\entails L_{i+1}$ we obtain
\begin{align}
\entails & 
  \mathrm{CI}_1,\ldots,\mathrm{CI}_{i-1},F\unsubst{x}{U} \seq A_i, A'_{i+1},\ldots,A'_{n} \ \mbox{and}\label{eq.extHseq.lhs}\\
\entails & 
  \underbrace{\mathrm{CI}_1,\ldots,\mathrm{CI}_{i-1},F\unsubst{x}{(U \setminus U_i)}}_{\Gamma}, 
  \underbrace{\Land_{j=1}^{k_i} A_i\unsubst{\alpha_i}{s_{i,j}}, F\unsubst{x}{U_i}}_{\Pi} \seq 
  \underbrace{A'_{i+1}, \ldots, A'_{n}}_{\Lambda}\label{eq.extHseq.rhs}
\end{align}
from an application of $\impll$ to $\mathrm{CI}_i$. Applying the propositional
interpolation theorem to the partition $\Gamma \ ;\  \Pi \seq \ ;\  \Lambda$
of~(\ref{eq.extHseq.rhs}) yields $I$ s.t.\ $\entails \Gamma\seq I$ and $\entails \Pi, I \seq \Lambda$. Furthermore
$I$ contains only such atoms which appear in $\Pi \seq \Lambda$, hence
$\Var(I) \subseteq \{ \alpha_{i+1},\ldots,\alpha_n \}$. Define $A'_i$ as $A_i \land I$.
Observe that $\entails R_i$ follows from $\entails \Pi, I \seq \Lambda$ and
$\entails L_i$ follows from~(\ref{eq.extHseq.lhs}) and $\entails \Gamma \seq I$.
Hence $L_i,R_i$ for $i=1,\ldots,n$ are tautologies. But $L_1,R_1,\ldots,R_n$
are exactly the leaves of the linear proof from above which
finishes the proof of the theorem.
\end{proof}
This result does not only generalize Proposition~2 of~\cite{Hetzl12Towards} to
the case of an arbitrary number of cuts but also improves it considerably, even for
the case of a single cut: the use of interpolants is new in this paper and allows
to obtain $|\pi|_\mathrm{q} \leq |H|$ for an extended Herbrand sequent $H$. In
general it is not possible to read back an extended Herbrand-sequent to a proof
of linear form without changing the cut formulas as the following example shows.
The reason for insisting on this linear form is that it does not contain any
duplicate instances which permits to show the property $|\pi|_\mathrm{q} = |H|$.
The duplication behavior of connectives in this transformation is reminiscent
of the complexity results in~\cite{Baaz12Complexity}.
\begin{remark}
The complexity of the proof $\pi$ obtained from the extended Herbrand-sequent $H$
can also be bound beyond its pure quantifier complexity $|\pi|_\mathrm{q}$.
Let $\mathrm{d}(\psi)$ denote the depth of a proof $\psi$,
i.e.\ the maximal number of inferences on a branch and let $\| H \|$ denote the
logical complexity of $H$.
Then the right-to-left direction of Theorem~\ref{thm.proof_extHseq} can be
strengthened as follows: there is a constant $c$ s.t.\ for every extended
Herbrand sequent $H$ of $\forall x\, F\seq$ with $n$ cuts and $|H|=l$ there is a proof $\pi$
with $n$ $\Pi_1$-cuts, $|\pi|_\mathrm{q} = l$ and $\mathrm{d}(\pi) \leq c^n \| H \|$. This
bound can be obtained from carrying out the proofs of Theorem~\ref{thm.interpolation}
and Theorem~\ref{thm.proof_extHseq} using $\vdash^d$ (derivability in depth $d$)
instead of $\entails$ (validity). It is created by the $n$-fold iteration of transformations of
a proof of depth $d$ to a proof of depth $c\cdot d$.
\end{remark}
\begin{example}
Let $F = P(x) \land (P(c) \impl Q(x)) \land (Q(x)\impl P(d)) \land \neg P(d)$ and
$A_1 = P(\alpha_1)$. Furthermore let $m=1, u_1 = \alpha_1 $ and $n=1,k_1=1,s_{1,1} = c$. Then
\begin{align*}
E & = F\unsubst{x}{u_1},\ldots,F\unsubst{x}{u_m},A_1 \impl \Land_{j=1}^{k_1} A_1\unsubst{\alpha_1}{s_{1,j}},\ldots, 
A_n \impl \Land_{j=1}^{k_n} A_n\unsubst{\alpha_n}{s_{n,j}} \seq\\
& = P(\alpha_1) \land (P(c) \impl Q(\alpha_1)) \land (Q(\alpha_1)\impl P(d)) \land \neg P(d), P(\alpha_1)\impl P(c)\seq
\end{align*}
is a tautology and hence an extended Herbrand-sequent of $\forall x\, F\seq$.

Let us now try to construct a linear {\LK}-proof that corresponds to $E$.
Such a proof contains a cut on $\forall x\, P(x)$ as its last inference. The
formula $\forall x\, F$ must be instantiated on the left above this cut
to obtain $F\unsubst{x}{\alpha_1}$ as $\alpha_1$ is the eigenvariable
of the cut formula. This leaves the right side of the cut as $P(c)\seq$ which
is not valid, a second instance of $\forall x\, F$ would be needed. The solution
used in the proof of Theorem~\ref{thm.proof_extHseq} is based on computing 
a propositional interpolant of $F\unsubst{x}{\alpha_1} ; P(c) \seq ; $. This can be
done e.g.\ by first computing a proof of the sequent $F\unsubst{x}{\alpha_1}, P(c) \seq $,
e.g.\ the following $\psi = $
\[
\lkrule[\land_{\mathrm{l}}^*]{P(\alpha_1) \land (P(c) \impl Q(\alpha_1)) \land (Q(\alpha_1)\impl P(d)) \land \neg P(d), P(c) \seq }{
  \lkil{P(\alpha_1), P(c) \impl Q(\alpha_1), Q(\alpha_1)\impl P(d), \neg P(d), P(c) \seq }{
    P(c)\seq P(c)
    &
    \lkil{P(\alpha_1), Q(\alpha_1), Q(\alpha_1)\impl P(d), \neg P(d) \seq}{
      Q(\alpha_1)\seq Q(\alpha_1)
      &
      \lknl{P(\alpha_1), P(d), \neg P(d) \seq}{
        P(\alpha_1), P(d) \seq P(d)
      }
    }
  }
}
\]
The propositional interpolant induced by the partition $F\unsubst{x}{\alpha_1} ; P(c) \seq ; $ of
$\psi$ according to the algorithm of~\cite{Takeuti87Proof} is computed as
\[
\lkrule[\land_{\mathrm{l}}^*]{\neg P(c) \lor \bot \lor \bot}{
  \lkil{\neg P(c) \lor \bot \lor \bot}{
    \neg P(c)
    &
    \lkil{\bot\lor\bot}{
      \bot
      &
      \lknl{\bot}{
        \bot
      }
    }
  }
}
\]
which simplifies to $\neg P(c)$. Hence the new cut formula is $\forall x\, (P(x)\land \neg P(c))$
which renders the right
side of the cut provable as $P(c)\land\neg P(c)\seq$.
\end{example}

\subsection{Proofs and Grammars}

Now that we have established the connection between proofs and (extended)
Herbrand-sequents we can move on to the term level of Figure~\ref{fig.commdiag}. A first trivial observation is
that, assuming the knowledge of $F$, a Herbrand-sequent $H$ for $\forall x\, F\seq$
does not carry more information than just the set of terms $T$ s.t.\ $H = F\unsubst{x}{T} \seq$.

A set of terms, in the terminology of formal language theory, is a tree language.
The central theoretical result on which this paper is based
is an analogous relation between extended Herbrand-sequents (or: proofs with
$\Pi_1$-cuts) and {\em a certain class of tree grammars}. This result has first been
proved in~\cite{Hetzl12Applying}, see also~\cite{Hetzl12Herbrand} for a generalization.

Tree languages are a natural generalization of formal (string) languages, see
e.g.~\cite{Gecseg97Tree,Comon07Tree}. Many
important notions, such as regular and context-free languages carry over from the setting
of strings to that of trees. The class of {\em rigid} tree languages has been introduced
in~\cite{Jacquemard09Rigid} with applications in verification in mind, see~\cite{Jacquemard11Rigid}.
Rigid tree languages augment regular tree
languages by the ability to carry out certain equality tests, a property that
is very useful for applications.

In the context of proof theory it is more natural to work with grammars than
with automata because of the generative nature of cut-elimination.
The class of grammars we will use
in this paper is a subclass of rigid grammars: the {\em totally rigid acyclic
tree grammars}. We write $\Terms_\Sigma(V)$ for the set of terms in the
first-order signature $\Sigma$ over the set of variables $V$ and $\Terms_\Sigma$
for $\Terms_\Sigma(\emptyset)$. For a symbol $f \in \Sigma$ we write $(f/k)$ for
denoting the arity $k$ of $f$.
\begin{definition}
A {\em regular tree grammar} is a tuple $G = \langle N,\Sigma,\tau, P \rangle$,
where $N$ is a finite set of non-terminal symbols, $\Sigma$ is a first-order
signature, $\tau \in N$ is the {\em start symbol} and $P$ is a finite
set of production rules of the form $\beta\rightarrow t$ with $\beta \in N$ and 
$t\in\Terms_{\Sigma}(N)$.
\end{definition}
The {\em one-step derivation relation $\rightarrow^1_G$} of a regular tree grammar $G$ consists
of all pairs $u[\beta]\rightarrow^1_G u[t]$ where $\beta\rightarrow t \in P$. A derivation in $G$
is a finite sequence of terms $t_0 = \tau, t_1,\ldots,t_n$ s.t.\  $t_i \rightarrow^1_G t_{i+1}$.
The language of $G$ is defined as $\Lang(G) = \{ t\in\Terms_{\Sigma} \mid t\ \mbox{has a $G$-derivation} \}$.
\begin{definition}
A {\em rigid tree grammar} is a tuple $G = \langle N,N_R,\Sigma,\tau,P \rangle$, where $\langle N,\Sigma,\tau,P \rangle$, is a
regular tree grammar and $N_R\subseteq N$ is the set of \emph{rigid non-terminals}.
We speak of a \emph{totally rigid tree grammar} if $N_R=N$. In this case we will just write
$\langle N_R,\Sigma,\tau,P \rangle$.
\end{definition}
A derivation $t_0 = \tau, t_1, \ldots, t_n=t$ of a term $t\in\Terms_\Sigma$ in
a rigid tree grammar is a derivation in the underlying regular tree grammar that satisfies the additional
{\em rigidity condition}: If there are $i,j<n$, a non-terminal $\beta\in N_R$,
and positions $p$ and $q$ such that $t_i|_p=\beta$ and $t_j|_q=\beta$, then $t|_p=t|_q$.
The language $\Lang(G)$ of the rigid
tree grammar $G$ is the set of all terms $t\in\Terms_\Sigma$ which can
be derived under the rigidity condition. Totally rigid tree grammars are
formalisms for specifying sets of substitutions and thus are particularly
useful for describing instances generated by cut-elimination.
\begin{example}
Let $\Sigma=\{0/0,s/1\}$. A simple pumping argument shows that the language $L=\{ f(t,t) \mid t\in\Terms_\Sigma \}$
is not regular. On the other hand, $L$ is generated by the rigid tree grammar
$\langle \{ \tau,\alpha,\beta \}, \{ \alpha \} , \{ 0/0, s/1, f/2 \}, \tau, P\rangle$ where 
$
P = \{ \tau \rightarrow f(\alpha,\alpha), \alpha \rightarrow 0 \mid s(\beta), \beta \rightarrow 0 \mid s(\beta) \}$.
\end{example}
\begin{definition}\label{def.gram_extHseq}
The {\em grammar of an extended Herbrand-sequent}
\[
H\ \equiv\ F\unsubst{x}{u_1},\ldots,F\unsubst{x}{u_m},A_1 \impl \Land_{j=1}^{k_1} A_1\unsubst{\alpha_1}{s_{1,j}},\ldots, 
A_n \impl \Land_{j=1}^{k_n} A_n\unsubst{\alpha_n}{s_{n,j}} \seq
\]
is defined as the totally rigid $\Gram(H) = \langle N_R, \Sigma, \tau, P\rangle$ where $N_R = \{ \tau, \alpha_1,\ldots,\alpha_n \}$,
$\Sigma$ is the signature of $H$ and $P = \{ \tau \rightarrow u_i \mid 1\leq i \leq m\}
\union \{ \alpha_i \rightarrow s_{i,j} \mid 1 \leq i \leq n, 1\leq j \leq k_i \} $.
\end{definition}
A derivation of the form $\beta \rightarrow^1_G t_1 \rightarrow^1_G \cdots \rightarrow^1_G t_n$
is called cyclic if $\beta \in \Var(t_n)$. A grammar is called {\em acyclic} if it
does not have any cyclic derivations. Note that condition~\ref{def.extHseq.varcondsij} of Definition~\ref{def.extHseq}
ensures that the grammar of an extended Herbrand-sequent is acyclic. Furthermore,
by definition, the grammar of an extended Herbrand-sequent is totally rigid. The
language of such a grammar can be written in the following normal form.
\begin{lemma}\label{lem.totrig_acyclic_language}
If $G$ is totally rigid and acyclic, then up to renaming of the non-terminals
$G=\langle \{\alpha_0,\ldots,\alpha_n\}, \Sigma, \alpha_0, P\rangle$ with
$\Lang(G)=\{ \alpha_0\unsubst{\alpha_0}{t_0}\cdots\unsubst{\alpha_n}{t_n} \mid \alpha_i \rightarrow t_i \in P\}$.
\end{lemma}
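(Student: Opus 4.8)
The plan is to prove both directions of the normal-form description of $\Lang(G)$ by induction on $n$, exploiting the fact that acyclicity forces a linear ordering on the non-terminals in which each production rule only mentions later non-terminals. First I would observe that since $G$ is totally rigid and acyclic, I can topologically sort the non-terminals: there is an enumeration $\alpha_0, \ldots, \alpha_n$ with $\alpha_0 = \tau$ the start symbol such that every production $\alpha_i \rightarrow t$ in $P$ satisfies $\Var(t) \subseteq \{\alpha_{i+1}, \ldots, \alpha_n\}$. (If some $\alpha_i$ had a production mentioning $\alpha_j$ with $j \le i$, iterating would produce a cyclic derivation, contradicting acyclicity; a standard argument turns the absence of cycles into the existence of such a linear order.) This is the renaming referred to in the statement.

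Next I would argue the inclusion $\Lang(G) \subseteq \{ \alpha_0\unsubst{\alpha_0}{t_0}\cdots\unsubst{\alpha_n}{t_n} \mid \alpha_i \rightarrow t_i \in P\}$. Take any $t \in \Lang(G)$ with a derivation $\tau = r_0 \rightarrow^1_G r_1 \rightarrow^1_G \cdots \rightarrow^1_G r_\ell = t$. Since $t \in \Terms_\Sigma$ contains no non-terminals, every non-terminal introduced along the way must eventually be rewritten. The first step must apply some rule $\alpha_0 \rightarrow t_0$. The key point is the rigidity condition: because $N_R = N$, \emph{all} occurrences of $\alpha_0$ must be rewritten to the \emph{same} term, so in effect exactly one rule $\alpha_0 \rightarrow t_0$ is "chosen" for $\alpha_0$ in this derivation, and likewise exactly one rule $\alpha_i \rightarrow t_i$ is chosen for each $\alpha_i$ that appears. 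By the topological ordering, $t_0 \in \Terms_\Sigma(\{\alpha_1,\ldots,\alpha_n\})$, and after rewriting all $\alpha_0$-occurrences the remaining derivation is a totally rigid acyclic derivation over $\{\alpha_1,\ldots,\alpha_n\}$; by induction hypothesis (or by iterating the argument) the net effect is $t = \alpha_0 \unsubst{\alpha_0}{t_0}\unsubst{\alpha_1}{t_1}\cdots\unsubst{\alpha_n}{t_n}$ for suitable chosen rules. (For non-terminals $\alpha_i$ that never appear, the substitution $\unsubst{\alpha_i}{t_i}$ is vacuous and any rule may be picked; acyclicity guarantees $P$ contains at least one rule for each non-terminal that is reachable, and unreachable ones can be pruned or handled trivially.)

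For the reverse inclusion, given any choice of rules $\alpha_i \rightarrow t_i \in P$, I would exhibit a concrete derivation witnessing $\alpha_0\unsubst{\alpha_0}{t_0}\cdots\unsubst{\alpha_n}{t_n} \in \Lang(G)$: rewrite (all occurrences of) $\alpha_0$ using $\alpha_0 \rightarrow t_0$, then all occurrences of $\alpha_1$ using $\alpha_1 \rightarrow t_1$, and so on through $\alpha_n$. By the topological ordering this sequence of substitutions produces exactly the claimed term, it lies in $\Terms_\Sigma$ (no non-terminals remain), and it trivially satisfies the rigidity condition since each rigid non-terminal is rewritten to a single fixed term throughout. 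I expect the main obstacle to be the careful bookkeeping in the forward direction — specifically, arguing rigorously that the rigidity condition collapses an arbitrary (possibly interleaved) derivation into one described by a single choice function $\alpha_i \mapsto t_i$, and handling the edge cases of non-terminals that are unreachable or never instantiated. The cleanest route is probably to set up the induction on $n = |N| - 1$ and, in the inductive step, peel off $\alpha_0$ first, using acyclicity to see that the residual grammar on $\{\alpha_1,\ldots,\alpha_n\}$ is again totally rigid and acyclic.
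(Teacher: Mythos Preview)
Your proposal is correct and follows essentially the same approach as the paper: use acyclicity to obtain a topological ordering of the non-terminals, then use total rigidity to argue that each derivation uses at most one production per non-terminal, so that the derivation collapses to a sequence of substitutions. The paper's own proof is a two-sentence sketch of exactly these two observations (and defers to~\cite{Hetzl12Herbrand} for details), so your write-up is simply a more explicit version of the same argument.
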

\begin{proof}
Acyclicity permits to rename the non-terminals in such a way that $\alpha_i \rightarrow^1_G t_1 \rightarrow^1_G \cdots \rightarrow^1_G t_n$
and $\alpha_j \in \Var(t_n)$ implies $j > i$. The notation based on substitutions
is then possible because, due to total rigidity, each $t\in\Lang(G)$ can be derived using at most one
production for each non-terminal. See~\cite{Hetzl12Herbrand} for a detailed proof.
\end{proof}
In particular, the language of a totally rigid acyclic grammar is finite. This
lemma also suggests a compact notation for totally rigid acyclic grammars: we write
\[
U \circ_{\alpha_1} S_1 \cdots \circ_{\alpha_n} S_n
\]
for the grammar $\langle \{ \tau, \alpha_1,\ldots,\alpha_n \}, \Sigma, \tau, P\rangle$
where $P = \{ \tau \rightarrow u \mid u \in U \} \union \{ \alpha_i \rightarrow s_i \mid 1 \leq i \leq n, s_i \in S_i \}$,
$\tau$ is some fresh start symbol and $\Sigma$ is the signature of the terms appearing in $P$.
Using this notation, we can observe that $\Lang(U) = U$ for a set of terms $U$ and
$\Lang(G \circ_\alpha S) = \{ u\unsubst{\alpha}{s} \mid u \in \Lang(G), s\in S\}$
for a totally rigid acyclic tree grammar $G$ and a set of terms $S$.
If the non-terminals are clear from the context, this notation
is further abbreviated as
\[
U \circ S_1 \cdots \circ S_n.
\]
One can then obtain a cut-elimination theorem based on grammars:
\begin{theorem}\label{thm.grammar_cutelim}
If $H$ is
an extended Herbrand-sequent of $\forall x\, F\seq$, then 
$\{ F\unsubst{x}{t} \mid t\in\Lang(\Gram(H)) \}\seq$
is a Herbrand-sequent of $\forall x\, F\seq$.
\end{theorem}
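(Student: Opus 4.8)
The plan is to show that the sequent $F\unsubst{x}{T}\seq$ with $T = \Lang(\Gram(H))$ is (i) of the required syntactic shape and (ii) a tautology. The first point is immediate from the definition of a Herbrand-sequent: by Lemma~\ref{lem.totrig_acyclic_language} the grammar $\Gram(H)$ of an extended Herbrand-sequent, being totally rigid and acyclic, has finite language $T = \{ u\unsubst{\alpha_1}{s_1}\cdots\unsubst{\alpha_n}{s_n} \mid u \text{ a $\tau$-production}, s_i \text{ an $\alpha_i$-production}\}$, so $F\unsubst{x}{T}$ is indeed a finite sequent of instances of $F$. Hence the whole content of the theorem is the validity claim, and that is the step I expect to be the crux.

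For validity, first I would reduce to the propositional level: since $F$ is quantifier-free, it suffices to argue that every propositional model $M$ of $F\unsubst{x}{t}$ for all $t\in T$ would also satisfy the extended Herbrand-sequent $H$, contradicting the assumption that $H$ is a tautology (i.e.\ that $H$ has no model). So assume toward a contradiction that $M \models F\unsubst{x}{t}$ for all $t \in \Lang(\Gram(H))$ but $M \not\models H$. I would then proceed to show that $M$ in fact satisfies all the instances $F\unsubst{x}{u_i}$ and all the cut-implications $A_i \impl \Land_{j=1}^{k_i} A_i\unsubst{\alpha_i}{s_{i,j}}$ occurring in $H$, which is the desired contradiction.

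The heart of the matter is to propagate satisfaction through the cut-implications, and here I would argue by downward induction on $i$ from $n$ to $1$ over an assignment to the eigenvariables $\alpha_i$. Concretely, process the cut-implications from $\mathrm{CI}_n$ down to $\mathrm{CI}_1$: at stage $i$, the variables $\alpha_{i+1},\ldots,\alpha_n$ have already been assigned ground terms coming from chosen productions $\alpha_k \to t_k$ ($k>i$), and I want to pick a production $\alpha_i \to t_i$ so that $M$ satisfies $\mathrm{CI}_i$ under the resulting assignment. If $M$ falsifies $\mathrm{CI}_i$, it must satisfy $A_i$ (with the $\alpha_i$ still free, or rather: for the "bad" value of $\alpha_i$) and falsify some conjunct $A_i\unsubst{\alpha_i}{s_{i,j}}$; the point is that substituting $s_{i,j}$ for $\alpha_i$ — which by condition~\ref{def.extHseq.varcondsij} of Definition~\ref{def.extHseq} only mentions $\alpha_{i+1},\ldots,\alpha_n$, already assigned — lets us move the witness downward, eventually reaching the $\tau$-productions $u$ and producing a term $t\in\Lang(\Gram(H))$ with $M\not\models F\unsubst{x}{t}$, contradiction. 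The delicate part, and the main obstacle, is to package this bookkeeping cleanly: one must track a single consistent assignment to all non-terminals (this is exactly where total rigidity is used, so that each term of $\Lang(\Gram(H))$ uses at most one production per non-terminal, cf.\ Lemma~\ref{lem.totrig_acyclic_language}) and verify that the variable conditions of Definition~\ref{def.extHseq} guarantee the substitutions compose in the right order without circularity (this is where acyclicity enters). I would phrase the induction hypothesis as: "$M$ satisfies $\mathrm{CI}_i,\ldots,\mathrm{CI}_n$ and all instances $F\unsubst{x}{u}$ under the current partial assignment implies a model of a subsequent of $H$" and run it down to $i=1$, at which point we have contradicted the tautology assumption on $H$. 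This is essentially the grammar-level reformulation of the substitution-tracking already implicit in the proof of Theorem~\ref{thm.proof_extHseq}; a fully detailed version appears in~\cite{Hetzl12Applying,Hetzl12Herbrand}, to which I would refer for the routine parts of the computation.
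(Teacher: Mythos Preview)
Your approach is genuinely different from the paper's. The paper does not give a self-contained proof at all: it simply points to~\cite{Hetzl12Applying,HetzlXXProofs,Hetzl12Herbrand} and says the result follows by ``following the development of the grammar during a cut-elimination process''---i.e.\ a proof-theoretic argument that tracks how the grammar evolves under Gentzen's reductions. You instead sketch a direct model-theoretic argument: assuming $M\models F\unsubst{x}{t}$ for all $t\in\Lang(\Gram(H))$, build an assignment to the $\alpha_i$ (taken from the grammar productions) under which $M$ satisfies every formula on the left of $H$, contradicting tautologyhood. This is more elementary and fully self-contained, and avoids the cut-elimination machinery altogether.

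That said, the central step in your sketch is not cleanly stated and the ``move the witness downward, eventually producing $t\in\Lang(\Gram(H))$ with $M\not\models F\unsubst{x}{t}$'' phrasing is misleading. The point is not that a falsification propagates to the $\tau$-level; rather, one can \emph{always} choose $\rho(\alpha_i)$ so that $\mathrm{CI}_i$ is satisfied. Concretely: with $\rho(\alpha_{i+1}),\ldots,\rho(\alpha_n)$ already fixed, look at the ground instances $A_i\unsubst{\alpha_i}{s_{i,j}}$ (these depend only on the already-assigned variables by condition~\ref{def.extHseq.varcondsij}). If some conjunct is false in $M$, set $\rho(\alpha_i)$ to that $s_{i,j}$; then the antecedent $A_i$ becomes that very conjunct, hence false, and $\mathrm{CI}_i$ holds vacuously. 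If all conjuncts are true, any choice works. After processing $i=n,\ldots,1$ all $\mathrm{CI}_i$ are satisfied, and since each $\rho(\alpha_i)$ comes from a production, every $u_l$ evaluates under $\rho$ to a term in $\Lang(\Gram(H))$ by Lemma~\ref{lem.totrig_acyclic_language}, so $F\unsubst{x}{u_l}$ is satisfied too---contradiction. Once you state this case split explicitly, your argument is complete; as written, the reader cannot extract it from your description.
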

\begin{proof}
This can be shown by following the development of the grammar during
a cut-elimination process, see~\cite{Hetzl12Applying,HetzlXXProofs}
and also~\cite{Hetzl12Herbrand} for a more general result.
\end{proof}
Throughout this whole paper all the grammars we are dealing with will be
totally rigid and acyclic. Therefore we will henceforth use {\em grammar}
as synonym for {\em totally rigid acyclic tree grammar}.

\subsection{Cut-Introduction}
\label{sec:cutintroduction}
We have already observed above that it is not feasible to invert Gentzen's cut-elimination
steps literally. The key to our method is that moving from the level of proofs to the level of grammars 
provides us with a transformation that is much easier to invert. The computation of the
language of a grammar can simply be inverted as: {\em given a finite tree language $L$,
find a grammar $G$ s.t.\ $\Lang(G) = L$}.
 We will describe
an algorithm for solving this problem in detail in Section~\ref{sec:computation_of_grammar}.

The only piece then still missing in Figure~\ref{fig.commdiag} is to obtain an extended Herbrand-sequent from $G$.
Note that, for a given $G$, the term-part of the extended Herbrand-sequent
is already determined using Lemma~\ref{lem.totrig_acyclic_language}. What we do not know yet are the cut-formulas.
Hence we define:
\begin{definition}\label{def.sch_ext_Hseq}
Let $u_1,\ldots,u_m$ be terms, let $X_1,\ldots,X_n$ be monadic second-order variables, let $\alpha_1,\ldots,\alpha_n$ be variables, and let
$s_{i,j}$ for $1\leq i \leq n, 1\leq j \leq k_j$ be terms
s.t.\ $\Var(s_{i,j}) \subseteq \{ \alpha_{i+1},\ldots,\alpha_n \}$ for all $i,j$.
Then the sequent
\[
H\ =\ F\unsubst{x}{u_1},\ldots,F\unsubst{x}{u_m},X_1(\alpha_1) \impl \Land_{j=1}^{k_1} X_1(s_{1,j}),\ldots, 
X_n(\alpha_n) \impl \Land_{j=1}^{k_n} X_n(s_{n,j}) \seq
\]
is called a {\em schematic extended Herbrand-sequent} of $\forall x\, F\seq$ if
$\Land_{t\in\Lang(\Gram(H))} F\unsubst{x}{t}\seq$ is a tautology (where $\Gram(H)$
is defined analogously to Definition~\ref{def.gram_extHseq}).

A {\em solution} of a schematic extended Herbrand-sequent $H$ is a substitution
$\sigma = \unsubst{X_i}{\lambda \alpha_i. A_i}_{i=1}^{n}$ s.t.\ $\Var(A_i) \subseteq \{ \alpha_{i},\ldots,\alpha_n\}$
and $H\sigma$ is a tautology.
\end{definition}
The reason for calling such a substitution $\sigma$ a solution is the close
relationship of this problem to unification problems modulo the theory
of Boolean algebras, in particular to Boolean unification with
constants~\cite{Martin89Boolean,Baader98Complexity}.
By comparison with Definition~\ref{def.extHseq} note that if $\sigma$ is
a solution for $H$, then $H\sigma$ is an extended Herbrand-sequent. There are
a number of interesting and practically relevant results about the solutions of such sequents, 
see Section~\ref{sec.improving}. The central property which is of interest right now is that
such a sequent always has a solution.
\begin{definition}\label{def.can_subst}
Let $H$ be a schematic extended Herbrand-sequent. Define
\[
C_1\ = \Land_{i=1}^m F\unsubst{x}{u_i}\ \mbox{and}\ 
C_{i+1}\ = \Land_{j=1}^{k_i} C_i\unsubst{\alpha_i}{s_{i,j}}\ \mbox{for}\ i=1,\ldots,n.
\]
Then
\[
\sigma := \unsubst{X_i}{\lambda \alpha_i. C_i}_{i=1}^n
\]
is called {\em canonical substitution} of $H$.
\end{definition}
We will now show that the canonical substitution is, in fact, a solution.
\begin{lemma}\label{lem.can_valid}
Let $H$ and $C_i$ be as in Definition~\ref{def.can_subst}. Then
$C_{n+1}\seq$ is a tautology.
\end{lemma}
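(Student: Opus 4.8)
The plan is to unfold the definitions of the $C_i$ and relate $C_{n+1}$ to the language of the grammar $\Gram(H)$ of the schematic extended Herbrand-sequent, then invoke the hypothesis that $\Land_{t\in\Lang(\Gram(H))} F\unsubst{x}{t}\seq$ is a tautology. Concretely, I expect that $C_{n+1}$ is (up to reassociation and reordering of conjuncts) exactly the conjunction $\Land_{t\in\Lang(\Gram(H))} F\unsubst{x}{t}$, so that $C_{n+1}\seq$ being a tautology is literally the assumption built into the definition of a schematic extended Herbrand-sequent. Thus the real content of the lemma is a combinatorial identity between the iterated substitution-and-conjunction process defining the $C_i$ and the set of terms derivable in the grammar $U \circ_{\alpha_1} S_1 \cdots \circ_{\alpha_n} S_n$ (with $U = \{u_1,\ldots,u_m\}$ and $S_i = \{s_{i,1},\ldots,s_{i,k_i}\}$).

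The key step is therefore to prove, by downward induction on $i$ from $n+1$ to $1$, that
\[
C_i \quad=\quad \Land_{t \in \Lang(G_i)} F\unsubst{x}{t}
\]
where $G_i$ is the grammar $U \circ_{\alpha_1} S_1 \cdots \circ_{\alpha_{i-1}} S_{i-1}$ restricted to the non-terminals $\alpha_1,\ldots,\alpha_{i-1}$ (so $G_1$ generates just $U$ and $G_{n+1} = \Gram(H)$), reading this equality modulo associativity, commutativity and idempotence of $\land$. The base case $i=1$ is $C_1 = \Land_{i=1}^m F\unsubst{x}{u_i} = \Land_{t\in U} F\unsubst{x}{t}$, which is immediate. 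For the induction step, from $C_{i+1} = \Land_{j=1}^{k_i} C_i\unsubst{\alpha_i}{s_{i,j}}$ and the induction hypothesis $C_i = \Land_{t\in\Lang(G_i)} F\unsubst{x}{t}$, substitution distributes over conjunction, giving $C_{i+1} = \Land_{j=1}^{k_i}\Land_{t\in\Lang(G_i)} F\unsubst{x}{t}\unsubst{\alpha_i}{s_{i,j}} = \Land_{t\in\Lang(G_i), 1\le j\le k_i} F\unsubst{x}{t\unsubst{\alpha_i}{s_{i,j}}}$; and by Lemma~\ref{lem.totrig_acyclic_language} (more precisely its corollary $\Lang(G\circ_\alpha S) = \{u\unsubst{\alpha}{s}\mid u\in\Lang(G), s\in S\}$) this index set ranges exactly over $\Lang(G_{i+1})$. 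Applying this at $i=n$ yields $C_{n+1} = \Land_{t\in\Lang(\Gram(H))} F\unsubst{x}{t}$, which is a tautology by the defining condition of schematic extended Herbrand-sequents, so $C_{n+1}\seq$ is a tautological sequent.

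The main obstacle is purely bookkeeping: one must be careful that the acyclicity condition ($\Var(s_{i,j}) \subseteq \{\alpha_{i+1},\ldots,\alpha_n\}$) is what makes the iterated-substitution description of $\Lang(\Gram(H))$ correct — in particular that $s_{i,j}$ contains no $\alpha_i$ and no earlier non-terminals, so the order of substitutions matches the order in Lemma~\ref{lem.totrig_acyclic_language}, and that $F$ itself contains no $\alpha_i$ so the substitutions act only inside the terms. There is also the mild subtlety that "tautology" is closed under AC and idempotence of $\land$ and under removing duplicate conjuncts, so the multiset-versus-set mismatch between $\Land_{j,t}$ and the set $\Lang(\Gram(H))$ is harmless; I would note this explicitly rather than belabor it. No new ideas beyond the grammar/language correspondence already established are needed.
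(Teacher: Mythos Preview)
Your proposal is correct and follows essentially the same approach as the paper: both unfold $C_{n+1}$ into the iterated conjunction $\Land_{i=1}^{m}\Land_{j_1=1}^{k_1}\cdots\Land_{j_n=1}^{k_n} F\unsubst{x}{u_i}\unsubst{\alpha_1}{s_{1,j_1}}\cdots\unsubst{\alpha_n}{s_{n,j_n}}$, identify this via Lemma~\ref{lem.totrig_acyclic_language} with $\Land_{t\in\Lang(\Gram(H))} F\unsubst{x}{t}$, and conclude from the defining condition of a schematic extended Herbrand-sequent. The paper simply writes out the full unfolding in one line, whereas you package it as an induction on $i$; note only that your induction is upward (base case $i=1$, step $i\to i+1$), not ``downward'' as you wrote.
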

\begin{proof}
By definition, $C_{n+1}\seq$ is $\Land_{i=1}^{m} \Land_{j_1=1}^{k_1} \cdots \Land_{j_n=1}^{k_n}
F\unsubst{x}{u_i}\unsubst{\alpha_1}{s_{1,j_1}}\cdots\unsubst{s_{n,\alpha_n}}{j_n} \seq$ which
by Lemma~\ref{lem.totrig_acyclic_language} is $\Land_{t\in\Lang(\Gram(H))} F\unsubst{x}{t} \seq$
which is a tautology as $H$ is a schematic extended Herbrand-sequent.
\end{proof}
\begin{lemma}\label{lem.can_sol}
Let $H$ be a schematic extended Herbrand-sequent and $\sigma$ be its canonical substitution.
Then $\sigma$ is a solution of $H$.
\end{lemma}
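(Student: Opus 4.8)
The plan is to verify that $H\sigma$ is a tautology when $\sigma = \unsubst{X_i}{\lambda\alpha_i. C_i}_{i=1}^n$ is the canonical substitution. First I would note that the variable condition $\Var(C_i)\subseteq\{\alpha_i,\ldots,\alpha_n\}$ needed for $\sigma$ to be a legitimate solution follows by a straightforward induction on $i$ from condition~\ref{def.extHseq.varcondsij} of Definition~\ref{def.extHseq} (the $s_{i,j}$ only involve $\alpha_{i+1},\ldots,\alpha_n$, so substituting them into $C_i$ can only decrease the index set), together with the fact that $C_1$ has free variables among the $u_i$, hence among $\{\alpha_1,\ldots,\alpha_n\}$.

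The core of the argument is the following semantic claim, which I would prove by downward induction on $i$ from $n+1$ to $1$: the sequent
\[
F\unsubst{x}{u_1},\ldots,F\unsubst{x}{u_m},\ \mathrm{CI}_1,\ldots,\mathrm{CI}_{i-1}\ \seq\ C_i
\]
is a tautology, where $\mathrm{CI}_j$ denotes the instantiated cut-implication $C_j\impl\Land_{\ell=1}^{k_j}C_j\unsubst{\alpha_j}{s_{j,\ell}}$ obtained by applying $\sigma$ to the $j$-th cut-implication of $H$ (here reading $C_{i}$ as the empty/trivial formula-side convention appropriately; the base case $i=n+1$ is exactly Lemma~\ref{lem.can_valid}, i.e. $C_{n+1}\seq$ is a tautology). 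For the induction step, assume the claim for $i+1$. By the definition $C_{i+1}=\Land_{j=1}^{k_i}C_i\unsubst{\alpha_i}{s_{i,j}}$, the cut-implication $\mathrm{CI}_i$ is precisely $C_i\impl C_{i+1}$. Now given any assignment satisfying the antecedents $F\unsubst{x}{U},\mathrm{CI}_1,\ldots,\mathrm{CI}_{i-1}$, either it falsifies $C_i$ — in which case $C_i$ is a true disjunct and we are done — or it satisfies $C_i$; but then together with $\mathrm{CI}_i = C_i\impl C_{i+1}$ it satisfies $C_{i+1}$, contradicting the induction hypothesis for $i+1$ (which says the sequent with succedent $C_{i+1}$ and the extra antecedent $\mathrm{CI}_i$ — note $\mathrm{CI}_1,\ldots,\mathrm{CI}_{(i+1)-1}=\mathrm{CI}_1,\ldots,\mathrm{CI}_i$ — has no countermodel). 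Actually the cleanest phrasing: from $\entails F\unsubst{x}{U},\mathrm{CI}_1,\ldots,\mathrm{CI}_i\seq C_{i+1}$ and the propositional tautology $C_i\impl C_{i+1},\ C_{i+1}\seq$-part of the reasoning, one derives $\entails F\unsubst{x}{U},\mathrm{CI}_1,\ldots,\mathrm{CI}_{i-1}\seq C_i$ by a routine case split on the truth value of $C_i$ and of $\mathrm{CI}_i$. Taking $i=1$ gives $\entails F\unsubst{x}{U}\seq C_1$, which is trivial since $C_1=\Land_{i=1}^m F\unsubst{x}{u_i}$; what one really wants is the $i=1$ instance rearranged so that every $\mathrm{CI}_j$ sits on the left and nothing meaningful is on the right — equivalently, reading the induction the other way, the sequent $F\unsubst{x}{U},\mathrm{CI}_1,\ldots,\mathrm{CI}_n\seq$ is a tautology. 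This last sequent is exactly $H\sigma$, so $\sigma$ is a solution.

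The only mild subtlety — and the step I would be most careful about — is bookkeeping the empty succedent correctly: $H\sigma$ has an empty right-hand side, so I want to run the induction to produce $\entails F\unsubst{x}{U},\mathrm{CI}_1,\ldots,\mathrm{CI}_n\seq$ directly, using Lemma~\ref{lem.can_valid} ($C_{n+1}\seq$ valid) as the anchor and peeling off one cut-implication $\mathrm{CI}_i=C_i\impl C_{i+1}$ at a time. Concretely, from $\entails \Gamma,\Land_{j=1}^{k_i}C_i\unsubst{\alpha_i}{s_{i,j}}\seq$ (with $\Gamma = F\unsubst{x}{U},\mathrm{CI}_1,\ldots,\mathrm{CI}_{i-1}$), i.e. $\entails\Gamma, C_{i+1}\seq$, and the propositional implication $\entails \Gamma, C_i\impl C_{i+1}, C_i\seq$ (immediate), a cut on $C_i$ — or simply direct semantic reasoning — yields $\entails\Gamma,\mathrm{CI}_i\seq C_i$ and hence, absorbing $C_i\impl C_{i+1}$ again, no: the clean invariant to carry is "$\entails F\unsubst{x}{U},\mathrm{CI}_1,\ldots,\mathrm{CI}_i\seq C_{i+1}$", start it at $i=n$ via Lemma~\ref{lem.can_valid} rewritten as $\entails \cdots\seq C_{n+1}$ with $C_{n+1}$ moved to the succedent, and terminate at $i=0$ with $\entails F\unsubst{x}{U}\seq C_1$, trivial; then re-run upward to push each $C_{i+1}$ back through $\mathrm{CI}_i$. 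The whole thing is elementary propositional logic once Lemma~\ref{lem.totrig_acyclic_language} and Lemma~\ref{lem.can_valid} are in hand; no interpolation is needed here, unlike in Theorem~\ref{thm.proof_extHseq}.
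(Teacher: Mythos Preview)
Your proposal is correct and uses the same ingredients as the paper: the variable-condition check, the identification $\mathrm{CI}_i = C_i\impl C_{i+1}$, and Lemma~\ref{lem.can_valid} as the anchor. The paper's proof is just a more compact version of what you eventually arrive at: it observes in one line that $H\sigma$ is logically equivalent to $C_1,\ C_1\impl C_2,\ \ldots,\ C_n\impl C_{n+1}\seq$ (since $F\unsubst{x}{u_1},\ldots,F\unsubst{x}{u_m}$ is $C_1$ and each cut-implication is $C_i\impl C_{i+1}$ by definition), and then validity is immediate by modus ponens plus the unsatisfiability of $C_{n+1}$ from Lemma~\ref{lem.can_valid}. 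Your inductive bookkeeping and the back-and-forth about the right invariant are unnecessary once you spot this equivalence; the ``peeling off one $\mathrm{CI}_i$ at a time'' is just the unwound modus-ponens chain.
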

\begin{proof}
First note that the variable condition is fulfilled as $\Var(C_i) \subseteq \{ \alpha_i,\ldots, \alpha_n \}$.
Then observe that
\[
H\sigma\ =\ F\unsubst{x}{u_1},\ldots,F\unsubst{x}{u_m},C_1 \impl \Land_{j=1}^{k_1} C_1\unsubst{\alpha_1}{s_{1,j}},\ldots, 
C_n \impl \Land_{j=1}^{k_n} C_n\unsubst{\alpha_n}{s_{n,j}} \seq
\]
is logically equivalent to
\[
C_1, C_1 \impl C_2, \ldots, C_n \impl C_{n+1} \rightarrow.
\]
The unsatisfiability of $C_{n+1}$ follows from Lemma~\ref{lem.can_valid},
hence $H\sigma$ is a tautology.
\end{proof}
In light of the above result we will henceforth call $\sigma$ the {\em canonical solution}.
Note that the canonical solution permits a sequent calculus proof of a
linear form in the sense of the proof of Theorem~\ref{thm.proof_extHseq}, and hence
--- for this solution --- interpolation is not necessary in the construction of the proof
with cuts.
\begin{theorem}\label{thm.extHseq_grammar}
$\forall x\, F\seq$ has an extended Herbrand-sequent $H$ with
$|H| = l$ iff there is a totally rigid acyclic tree grammar $G$ with $|G|=l$ s.t.\ 
$\Land_{t\in\Lang(G)} F\unsubst{x}{t}\seq$ is a tautology.
\end{theorem}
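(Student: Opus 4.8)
The plan is to establish both directions of the equivalence using the infrastructure already built.

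For the left-to-right direction, suppose $\forall x\, F \seq$ has an extended Herbrand-sequent $H$ with $|H| = l$. Write $H$ in the standard form of Definition~\ref{def.extHseq}, with terms $u_1,\ldots,u_m$, cut-formulas (via the quantifier-free $A_1,\ldots,A_n$), variables $\alpha_1,\ldots,\alpha_n$ and instance-terms $s_{i,j}$. The natural candidate grammar is $G = \Gram(H)$ as in Definition~\ref{def.gram_extHseq}: it is totally rigid by construction, and acyclic because condition~\ref{def.extHseq.varcondsij} of Definition~\ref{def.extHseq} guarantees acyclicity (as already noted in the text). Its complexity is $|G| = m + \sum_{i=1}^n k_i = |H| = l$, where I take $|G|$ to be the number of production rules (counting $\tau$-productions and $\alpha_i$-productions), matching the complexity measure for extended Herbrand-sequents. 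Finally, Theorem~\ref{thm.grammar_cutelim} tells us that $\{ F\unsubst{x}{t} \mid t \in \Lang(\Gram(H)) \} \seq$ is a Herbrand-sequent of $\forall x\, F \seq$, i.e.\ $\Land_{t\in\Lang(G)} F\unsubst{x}{t}\seq$ is a tautology, which is exactly what is required.

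For the right-to-left direction, suppose $G$ is a totally rigid acyclic tree grammar with $|G| = l$ such that $\Land_{t\in\Lang(G)} F\unsubst{x}{t}\seq$ is a tautology. By Lemma~\ref{lem.totrig_acyclic_language}, after renaming we may assume $G = \langle \{\alpha_0,\ldots,\alpha_n\}, \Sigma, \alpha_0, P\rangle$ with the non-terminals ordered so that $\alpha_j \in \Var(t)$ for a right-hand side $t$ of an $\alpha_i$-production forces $j > i$. Relabel $\alpha_0$ as the start symbol $\tau$, let $\{u_1,\ldots,u_m\}$ be the right-hand sides of the $\tau$-productions, and let $\{s_{i,1},\ldots,s_{i,k_i}\}$ be the right-hand sides of the $\alpha_i$-productions; then $\Var(s_{i,j}) \subseteq \{\alpha_{i+1},\ldots,\alpha_n\}$, so the variable condition of Definition~\ref{def.sch_ext_Hseq} is met and
\[
\Hseq\ =\ F\unsubst{x}{u_1},\ldots,F\unsubst{x}{u_m},X_1(\alpha_1) \impl \Land_{j=1}^{k_1} X_1(s_{1,j}),\ldots, X_n(\alpha_n) \impl \Land_{j=1}^{k_n} X_n(s_{n,j}) \seq
\]
is a well-formed sequent. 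Moreover $\Lang(\Gram(\Hseq)) = \Lang(G)$, so by hypothesis $\Land_{t\in\Lang(\Gram(\Hseq))} F\unsubst{x}{t}\seq$ is a tautology, i.e.\ $\Hseq$ is a schematic extended Herbrand-sequent. By Lemma~\ref{lem.can_sol} its canonical substitution $\sigma$ is a solution, and as remarked just after Definition~\ref{def.sch_ext_Hseq}, $\Hseq\sigma$ is then an extended Herbrand-sequent of $\forall x\, F\seq$. Its complexity is $|\Hseq\sigma| = m + \sum_{i=1}^n k_i$, which equals the number of productions $|G| = l$ (the $\alpha_i$-instances $s_{i,j}$ are by construction pairwise distinct, so no padding is needed — or if one prefers, dummy instances can be added). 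This completes the direction.

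The main thing to be careful about — rather than a deep obstacle — is the bookkeeping of the complexity measures: one must check that $|G|$ (number of production rules) matches $|H| = m + \sum k_j$ on both sides, and in particular in the right-to-left direction that no production right-hand side is duplicated within a single non-terminal's rule set (otherwise the counts drift). Since Lemma~\ref{lem.totrig_acyclic_language} lets each $t\in\Lang(G)$ be derived with at most one production per non-terminal, the productions can be taken to have pairwise distinct right-hand sides per non-terminal without changing $\Lang(G)$, so this is easily arranged. Everything else is a direct appeal to Theorem~\ref{thm.grammar_cutelim}, Lemma~\ref{lem.totrig_acyclic_language} and Lemma~\ref{lem.can_sol}, together with the observation following Definition~\ref{def.sch_ext_Hseq} that solutions of schematic extended Herbrand-sequents yield genuine extended Herbrand-sequents.
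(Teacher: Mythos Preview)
Your proof is correct and follows essentially the same approach as the paper: the left-to-right direction via $G=\Gram(H)$ together with Theorem~\ref{thm.grammar_cutelim} and $|\Gram(H)|=|H|$, and the right-to-left direction by forming the schematic extended Herbrand-sequent of $G$ and applying the canonical solution from Lemma~\ref{lem.can_sol}. Your version simply spells out details (the normal form of Lemma~\ref{lem.totrig_acyclic_language} and the complexity bookkeeping) that the paper leaves implicit.
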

\begin{proof}
The left-to-right direction of this statement follows from Theorem~\ref{thm.grammar_cutelim}
together with the observation that $|\Gram(H)| = |H|$. For the right-to-left direction
assume that $G$ is given, let $H$ be the schematic extended Herbrand-sequent of
$G$. Then the result follows from Lemma~\ref{lem.can_sol}.
\end{proof}
Now we have proved all results mentioned in Figure~\ref{fig.commdiag} and can finally
describe our approach to cut-introduction. It consists in
following this diagram in a clockwise fashion from the cut-free proof
to the proof with cut. More specifically, given as input a cut-free proof $\pi$
our algorithm will proceed as follows:
\begin{enumerate}
\item Extract the set of terms $T$ of $\Hseq(\pi)$ (as in Theorem~\ref{thm.proof_Hseq}).
\item Find a suitable grammar $G$ s.t.\ $\Lang(G) = T$.
\item Compute an extended Herbrand-sequent $H$ from $G$ (as in Theorem~\ref{thm.extHseq_grammar}).
\item Construct a proof $\psi$ with cut from $H$ (as in Theorem~\ref{thm.proof_extHseq}).
\end{enumerate}

\begin{example}
Consider the sequent $\forall x\, F \seq$ where
\[
F =  Pa \land (Px \impl Pfx) \land \neg Pf^9a.
\]
Let $\pi$ be a straightforward cut-free proof of $\forall x\, F\seq$, then $|\pi|_\mathrm{q} = 9$.
Following the above outline of an algorithm we carry out the following steps.
Extract the set of terms
\[
 T = \{ a, fa, f^2a, f^3a, f^4a, f^5a, f^6a, f^7a, f^8a \}
\]
from $\Hseq(\pi)$ following Theorem~\ref{thm.proof_Hseq}. Compute a grammar $G$ with $\Lang(G)=T$, for
example
\[
G = \{ \alpha, f\alpha, f^2\alpha \} \circ_{\alpha} \{ a, f^3a, f^6a \}.
\]
As in the proof of Theorem~\ref{thm.extHseq_grammar}, this grammar induces the schematic extended Herbrand-sequent
\[
H\ =\ F\unsubst{x}{\alpha}, F\unsubst{x}{f\alpha}, F\unsubst{x}{f^2\alpha},
X(\alpha) \impl (X(a) \land X(f^3a) \land X(f^6a)) \seq
\]
whose canonical solution is
\[
\sigma = \unsubst{X}{\lambda \alpha.\, ( F\unsubst{x}{\alpha} \land F\unsubst{x}{f\alpha} \land F\unsubst{x}{f^2\alpha} )}
\]
Hence $H\sigma$ is an extended Herbrand-sequent with $|H\sigma| = |H| = 6$ which
in turn induces a proof $\psi$ as in Theorem~\ref{thm.proof_extHseq} which
has $|\psi|_\mathrm{q} = 6$ and contains a single $\Pi_1$-cut whose cut-formula is
\[
 \forall x\, (F \land F\unsubst{x}{fx} \land F\unsubst{x}{f^2x} ).
\]
Observe that we have decreased the quantifier complexity from $|\pi|_\mathrm{q} = 9$
to $|\psi|_\mathrm{q} = 6$.
\end{example}
While this is a satisfactory situation from the abstract point of view of the quantifier
complexity, this procedure is clearly not yet fit for practical applications with
the aim of proof compression. The rest of this paper is devoted to making it so:
in Section~\ref{sec:more_general} we generalize the results of this section to
a sufficiently large class of end-sequents. In Section~\ref{sec:computation_of_grammar}
we present an efficient algorithm for the computation of a grammar and
in Section~\ref{sec.improving} we describe how to obtain solutions for
a schematic extended Herbrand-sequent which are shorter than the canonical solution.

\section{More General End-Sequents}
\label{sec:more_general}

The class of end-sequents considered in the previous section, while leading
to a comparatively simple statement of the central results, is clearly too
restricted for concrete applications. We will therefore extend our
proof-theoretic infrastructure to proofs of end-sequents of the form
\[
\forall x_1 \cdots \forall x_{l_1}\, F_1, \ldots, \forall x_1\cdots\forall x_{l_p} F_p \seq
\exists x_1\cdots\exists x_{l_{p+1}}\, F_{p+1},\ldots, \exists x_1\cdots\exists x_{l_q} F_q
\]
with $l_i\geq 0$ and $F_i$ quantifier-free. We say that a sequent in this format
is a $\Sigma_1$-sequent. Note that every first-order sequent
can be transformed to this form by skolemization and prenexing. Permitting $l_i$ to
be zero allows for quantifier-free formulas such as in the example of
Section~\ref{sec.motex}. While the formalism now gets notationally more complicated,
the results and proofs remain essentially the same. We write $\bar{x}$ for a
vector $(x_1,\ldots,x_n)$ of variables, $\bar{t}$ for a vector $(t_1,\ldots,t_n)$
of terms and $\unsubst{\bar{x}}{\bar{t}}$ for the substitution
$\sop \sel{x_1}{t_1},\ldots,\sel{x_n}{t_n} \scl$. For this whole section,
we fix a sequent $\Gamma\seq\Delta$ of the above form.
\begin{definition}\label{def.herbrand.sequent}
A tautological sequent of the form
\[
\{ F_i\unsubst{\bar{x}}{\widebar{t_{i,j}}} \mid  1 \leq i \leq p, 1\leq j \leq n_i\} \seq
\{ F_i\unsubst{\bar{x}}{\widebar{t_{i,j}}} \mid  p < i \leq q, 1\leq j \leq n_i \} 
\]
is called {\em Herbrand-sequent} of $\Gamma\seq\Delta$.
\end{definition}
The size of a Herbrand-sequent is defined as $|H| = \sum_{i=1}^{q} n_i$. Note
that we only count formulas obtained by instantiation. Now
as we are dealing with blocks of quantifiers it is appropriate to also change
the size measure on proofs to consider blocks instead of single quantifiers. To
that aim we change the quantifier rules in our sequent calculus to allow the
introduction of a block of quantifiers (which is a natural alternative for a number
of problems related to proof size, see e.g.~\cite{Baaz92Algorithmic}):
\[
\infer[\forall_\mathrm{l}^*]{\forall x_1\cdots \forall x_n\, A, \Gamma \seq \Delta}{
  \forall x_1\cdots \forall x_n\, A, A\unsubst{\bar{x}}{\bar{t}}, \Gamma \seq \Delta
}
\qquad
\infer[\exists_\mathrm{r}^*]{\Gamma\seq\Delta, \exists x_1\cdots \exists x_n\, A}{
  \Gamma\seq\Delta, A\unsubst{\bar{x}}{\bar{t}}, \exists x_1\cdots \exists x_n\, A
}
\]
We write $\| \pi \|_\mathrm{q}$ for the number of $\forall^*_\mathrm{l}$- and
$\exists^*_\mathrm{r}$-inferences in the proof $\pi$.
\begin{theorem}
\label{thm.proof_Hseq2}
$\Gamma\seq\Delta$ has a cut-free proof $\pi$ with $\|\pi\|_\mathrm{q}=l$ iff
it has a Herbrand-sequent $H$ with $|H|=l$.
\end{theorem}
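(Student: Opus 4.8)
The plan is to generalize the proof of Theorem~\ref{thm.proof_Hseq} to blocks of quantifiers, which is essentially a bookkeeping exercise. The statement claims the equivalence: $\Gamma\seq\Delta$ has a cut-free proof $\pi$ with $\|\pi\|_\mathrm{q}=l$ iff it has a Herbrand-sequent $H$ with $|H|=l$. I would prove the two directions separately.

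For the left-to-right direction, suppose $\pi$ is a cut-free proof of $\Gamma\seq\Delta$ with $\|\pi\|_\mathrm{q}=l$. I would pass through $\pi$ and collect, for each universally quantified antecedent formula $\forall\bar x\,F_i$ ($1\le i\le p$), the tuples $\widebar{t_{i,j}}$ that appear as instantiation terms in the $\forall^*_\mathrm{l}$-inferences acting on (a descendant of) that formula, and symmetrically for each existentially quantified succedent formula $\exists\bar x\,F_i$ ($p<i\le q$) the tuples from the $\exists^*_\mathrm{r}$-inferences. Since $\mathbf{G3c}+\mathrm{Cut_{cs}}$ is cut-free here, every atom in the proof is a subformula of the end-sequent, so after stripping the outermost quantifier blocks the remaining propositional skeleton is a tautology; hence the sequent collecting precisely these instances $\{F_i\unsubst{\bar x}{\widebar{t_{i,j}}}\}$ is a tautological sequent, i.e. a Herbrand-sequent $H$. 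If some instance is used more than once (several $\forall^*_\mathrm{l}$-inferences on the same formula with the same tuple) or the count $\sum n_i$ falls short of $l$, I pad $H$ with dummy repeated instances so that $|H| = \|\pi\|_\mathrm{q} = l$; repeating an instance preserves tautologousness.

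For the right-to-left direction, suppose $H$ is a Herbrand-sequent with $|H|=l$, say with instance tuples $\widebar{t_{i,j}}$ for $1\le i\le q$, $1\le j\le n_i$. Since $H$ is a tautological sequent of quantifier-free formulas, I first take any propositional cut-free proof of $H$ in $\mathbf{G3c}$ (such a proof exists and uses no quantifier rules). I then build $\pi$ by adding, below this propositional derivation, the quantifier inferences: for each antecedent formula $F_i$ I apply $\forall^*_\mathrm{l}$ once for each $j=1,\dots,n_i$ to reintroduce $\forall\bar x\,F_i$ over the instance $F_i\unsubst{\bar x}{\widebar{t_{i,j}}}$ (the $\mathbf{G3c}$ rule keeps the principal formula in the premise, so successive applications accumulate onto one copy of $\forall\bar x\,F_i$), and symmetrically $\exists^*_\mathrm{r}$ for the succedent formulas; since the calculus has no explicit contraction rule, the shared-principal-formula form of the rules does the contracting automatically, leaving exactly $\Gamma\seq\Delta$ at the root. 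The resulting proof uses exactly $\sum_{i=1}^q n_i = l$ block-quantifier inferences, so $\|\pi\|_\mathrm{q}=l$.

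I do not expect a genuine obstacle here; the only point requiring slight care is matching the size measures exactly ($|H|=\|\pi\|_\mathrm{q}$ rather than merely an inequality), which is handled by the padding-with-dummy-instances trick in the first direction and by the observation in the second direction that the chosen propositional proof contributes no quantifier inferences and each instance contributes exactly one block inference. The interaction with the $\mathbf{G3c}$ convention (no structural rules, principal formula retained in premises) is exactly what makes the back-translation clean, so I would simply note that and refer to the analogous single-quantifier argument in Theorem~\ref{thm.proof_Hseq} and to~\cite{Hetzl2009} for the routine details.
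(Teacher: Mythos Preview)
Your proposal is correct and follows exactly the approach the paper takes: the paper's own proof is simply ``As for Theorem~\ref{thm.proof_Hseq}'', and your write-up is precisely the expected unfolding of that reference to the block-quantifier setting, including the padding trick for the exact size match. If anything, your remark that the shared-principal-formula convention of $\mathbf{G3c}$ absorbs the contractions is slightly more careful than the original proof sketch.
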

\begin{proof}
As for Theorem~\ref{thm.proof_Hseq}.
\end{proof}

\begin{definition}
Let $\widebar{u_{i,1}},\ldots,\widebar{u_{i,m_i}}$ be vectors of terms with
$l_i$ elements each. Let $A_1,\ldots,A_n$
be quantifier-free formulas, let $\alpha_1,\ldots,\alpha_n$ be variables, and let
$s_{i,j}$ for $1\leq i \leq n, 1\leq j \leq k_j$ be terms s.t.\ 
\begin{enumerate}
\item $\Var(A_{i}) \subseteq \{ \alpha_i,\ldots,\alpha_n \}$ for all $i$, and
\item $\Var(s_{i,j}) \subseteq \{ \alpha_{i+1},\ldots,\alpha_n \}$ for all $i,j$.
\end{enumerate}
Then the sequent
\[
H\ =\
\begin{array}{l}
\{ F_i\unsubst{\bar{x}}{\widebar{u_{i,j}}} \mid  1 \leq i \leq p, 1\leq j \leq m_i\},A_1 \impl \Land_{j=1}^{k_1} A_1\unsubst{\alpha_1}{s_{1,j}},\ldots, 
A_n \impl \Land_{j=1}^{k_n} A_n\unsubst{\alpha_n}{s_{n,j}}\\
\seq \{ F_i\unsubst{\bar{x}}{\widebar{u_{i,j}}} \mid  p < i \leq q, 1\leq j \leq m_i \} 
\end{array}
\]
is called an {\em extended Herbrand-sequent} of $\Gamma\seq\Delta$ if $H$ is
a tautology.
\end{definition}
The notion of schematic extended Herbrand-sequent is defined analogously
to Definition~\ref{def.sch_ext_Hseq} by replacing
the formulas $A_i$ in the above definition by monadic predicate variables $X_i$.
The size of a (schematic) extended Herbrand sequent $H$ of the above form
is $|H|=\sum_{i=1}^q m_i + \sum_{j=1}^n k_j$.
\begin{theorem}
$\Gamma\seq\Delta$ has a proof with $\Pi_1$-cuts and $\|\pi\|_\mathrm{q} = l$ iff
it has an extended Herbrand-sequent $H$ with $|H|=l$.
\end{theorem}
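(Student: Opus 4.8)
The plan is to mirror the proof of Theorem~\ref{thm.proof_extHseq}, the single-quantifier version, adapting each step to blocks of quantifiers and to the two-sided end-sequent $\Gamma\seq\Delta$. The two directions are again asymmetric in difficulty.

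For the left-to-right direction I would argue exactly as in the cut-free case (Theorem~\ref{thm.proof_Hseq2}): given a proof $\pi$ with $\Pi_1$-cuts and $\|\pi\|_\mathrm{q}=l$, pass through $\pi$ and read off the instances of the quantified formulas --- both the end-formulas $F_i$ on their respective sides and the cut-formulas $\forall\alpha_i\,A_i$ --- collecting them into a sequent. Each $\forall^*_\mathrm{l}$- or $\exists^*_\mathrm{r}$-inference contributes exactly one instance-vector $\widebar{u_{i,j}}$ or one term $s_{i,j}$, and for the cuts the $\forallr$/$\existsl$ side fixes the eigenvariables $\alpha_i$; the variable conditions on $\Var(A_i)$ and $\Var(s_{i,j})$ follow from the eigenvariable conditions of the sequent calculus together with the fact that in a proof of linear form the cuts are nested so that $\alpha_i$ can only occur inside formulas introduced ``above'' $\alpha_{i-1},\ldots$. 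This yields an extended Herbrand-sequent $H$ with $|H|\leq\|\pi\|_\mathrm{q}$, which we pad with dummy instances to get equality.

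For the right-to-left direction I would reuse the machinery of the proof of Theorem~\ref{thm.proof_extHseq} essentially verbatim. First reduce to producing a proof of \emph{linear form}: a left-branching spine of $\impll$-inferences on the cut-implications $\mathrm{CI}'_i = A'_i\impl\Land_{j}A'_i\unsubst{\alpha_i}{s_{i,j}}$, whose leaves $L_1,R_1,\ldots,R_n$ are propositional tautologies over the instances of the $F_i$ (now distributed across both sides of the sequent arrow), and then replace each spine segment by a $\cut$ on $\forall\alpha_i\,A'_i$ flanked by $\forall^*_\mathrm{l}$/$\forall^*_\mathrm{r}$ (and finish at the root with the $\forall^*_\mathrm{l}$/$\exists^*_\mathrm{r}$ inferences introducing the end-quantifier blocks). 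Each term-vector of $H$ is introduced exactly once, so the resulting $\pi$ has $\|\pi\|_\mathrm{q}=|H|=l$. The construction of the $A'_i$ is the same downward induction: set $U_i$ to be the end-instances whose variables lie in $\{\alpha_{i+1},\ldots,\alpha_n\}$, assume the tautology $L_{i+1}$, split the cut-implication $\mathrm{CI}_i$ via $\impll$ into~(\ref{eq.extHseq.lhs}) and~(\ref{eq.extHseq.rhs}), apply propositional interpolation (Theorem~\ref{thm.interpolation}) to an appropriate partition of~(\ref{eq.extHseq.rhs}) to get $I$ with $\Var(I)\subseteq\{\alpha_{i+1},\ldots,\alpha_n\}$, and define $A'_i := A_i\land I$; then $\entails L_i$ and $\entails R_i$ follow as before, and $L_{n+1}=H$ is a tautology by hypothesis.

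The main obstacle --- and the only place where genuine care beyond ``as for Theorem~\ref{thm.proof_extHseq}'' is needed --- is bookkeeping the \emph{two-sided} end-sequent throughout the linear proof: the end-formula instances $F_i\unsubst{\bar x}{\widebar{u_{i,j}}}$ now sit on the left for $i\leq p$ and on the right for $i>p$, so the ``$F\unsubst{x}{U}$'' blocks in the spine must be split into a left part and a right part, the cut-implications must be threaded as left-context through both sides, and the partition fed to interpolation must put the left end-instances (and the already-built $\mathrm{CI}_1,\ldots,\mathrm{CI}_{i-1}$) into $\Gamma$ and the relevant $\Land_j A_i\unsubst{\alpha_i}{s_{i,j}}$-block plus the right end-instances into $\Pi\seq\Lambda$, exactly so that the atom-restriction on $I$ still yields $\Var(I)\subseteq\{\alpha_{i+1},\ldots,\alpha_n\}$. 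Since the blocks of quantifiers behave, for the purpose of counting $\|\pi\|_\mathrm{q}$ and for the eigenvariable conditions, just like single quantifiers (one $\forall^*_\mathrm{l}$ per instance-vector), no new idea is required; I would simply state the necessary notational changes and refer to the proof of Theorem~\ref{thm.proof_extHseq} for the details, which is consistent with the paper's stated intent that ``the results and proofs remain essentially the same.''
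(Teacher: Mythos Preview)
Your proposal is correct and follows essentially the same approach as the paper, which simply says the proof is ``analogous to the proof of Theorem~\ref{thm.proof_extHseq}, replacing $F\unsubst{x}{U_i}$ by the collection of all instances $F_i\unsubst{\bar{x}}{\widebar{u_{i,j}}}$ s.t.\ all terms in $u_{i,j}$ contain only variables from $\{\alpha_{i+1},\ldots,\alpha_n\}$.'' You correctly identify this variable-based filtering of instances as the key adaptation; one small caveat is that your description of the interpolation partition (``left end-instances into $\Gamma$, right end-instances into $\Pi\seq\Lambda$'') reads as if the split were by side of the sequent, whereas --- as you yourself note when defining $U_i$ --- it must be by variable content, with instances on \emph{both} sides of $\seq$ split according to whether their free variables lie in $\{\alpha_{i+1},\ldots,\alpha_n\}$.
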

\begin{proof}
Analogous to the proof of Theorem~\ref{thm.proof_extHseq},
replacing $F\unsubst{x}{U_i}$ by the collection of all
instances $F_i\unsubst{\bar{x}}{\widebar{u_{i,j}}}$ s.t.\ all terms in
$u_{i,j}$ contain only variables from $\{ \alpha_{i+1},\ldots,\alpha_n \}$.
\end{proof}
The above theorem encapsulates an algorithm for the construction of a proof
with $\Pi_1$-cuts from an extended Herbrand-sequent. We will henceforth
use the abbreviation {\PCA} for this proof-construction algorithm.

In order to represent term vectors, it is helpful to enrich our signature by
new function symbols $f_1,\ldots,f_q$ where $f_i$ has arity $l_i$. The function
symbol $f_i$ will serve the purpose of grouping a term-tuple which corresponds
to an instantiation of the formula $\forall x_1\cdots \forall x_{l_i} F_i$
if $i\leq p$ (or $\exists x_1\cdots \exists x_{l_i} F_i$ if $i > p$).
\begin{definition}\label{def.more_general_gram_extHseq}
The grammar of an extended Herbrand-sequent
\[
H\ =\
\begin{array}{l}
\{ F_i\unsubst{\bar{x}}{\widebar{u_{i,j}}} \mid  1 \leq i \leq p, 1\leq j \leq m_i\},A_1 \impl \Land_{j=1}^{k_1} A_1\unsubst{\alpha_1}{s_{1,j}},\ldots, 
A_n \impl \Land_{j=1}^{k_n} A_n\unsubst{\alpha_n}{s_{n,j}}\\
\seq \{ F_i\unsubst{\bar{x}}{\widebar{u_{i,j}}} \mid  p < i \leq q, 1\leq j \leq m_i \} 
\end{array}
\]
is defined as $G(H) = \langle N_R, \Sigma, \tau, P \rangle$ where $N_R = \{ \tau, \alpha_1,\ldots,\alpha_n \}$,
$\Sigma$ is the signature of $H$ plus $\{ f_1,\ldots, f_q\}$ and
$P = \{ \tau \rightarrow f_i(\widebar{u_{i,j}}) \mid 1\leq i \leq q, 1\leq j \leq m_i\}
\union \{ \alpha_i \rightarrow s_{i,j} \mid 1\leq i \leq n, 1\leq j \leq k_i \}$.
\end{definition}
Note that this definition also applies to Herbrand-sequents (as in Definition~\ref{def.herbrand.sequent}): then $n=0$ and we obtain a trivial grammar $\langle N_R, \Sigma, \tau, P \rangle$ 
with $N_R=\{\tau\}$. Using this grammar, we define the {\em Herbrand terms} of a 
Herbrand-sequent as the set $\{t \mid (\tau \rightarrow t)\in P\}$. The {\em Herbrand
terms of a cut-free proof $\pi$} are then the Herbrand terms of the Herbrand-sequent
extracted from $\pi$ via Theorem~\ref{thm.proof_Hseq2}.
\begin{example}
Consider the sequent
\[
P(0,0), \forall x\forall y\, (P(x,y) \impl P(s(x),y)),
\forall x\forall y\, (P(x,y) \impl P(x,s(y)) \seq P(s^4(0), s^4(0)).
\]
Abbreviating
$P(x,x)\impl P(s^2(x),s^2(x))$ as $F(x)$ we see that $\forall x\, F(x)$ is a
useful cut formula that allows to decrease the number of $\forall_\mathrm{l}^*$-inferences.
A corresponding extended Herbrand-sequent is
\begin{align*}
E = & P(0,0), P(\alpha,\alpha) \impl P(s(\alpha),\alpha), P(s(\alpha),\alpha) \impl P(s(\alpha),s(\alpha)),
P(s(\alpha),s(\alpha)) \impl P(s^2(\alpha),s(\alpha)),\\
 & P(s^2(\alpha),\alpha)\impl P(s^2(\alpha), s^2(\alpha)),
F(\alpha)\impl (F(0) \land F(s^2(0))) \seq P(s^4(0),s^4(0))
\end{align*}
The corresponding grammar is $\Gram(E) = \langle \{ \tau,\alpha \}, \{ 0/0, s/1
\}, \tau, P\rangle$, with $|E|=|G(E)|=6$ and
\[
P = \{ \tau \rightarrow f_1(\alpha,\alpha) | f_2(s(\alpha),\alpha) | f_1(s(\alpha),s(\alpha)) | f_2(s^2(\alpha), \alpha), \alpha \rightarrow 0 | s^2(0) \}.
\]
\end{example}
In extension of our compact notation for grammars we write
\[
(U_1,\ldots,U_q) \circ_{\alpha_1} S_1 \cdots \circ_{\alpha_n} S_n
\]
for the grammar $\Gram(H)$ of Definition~\ref{def.more_general_gram_extHseq}
where $U_i = \{ \widebar{u_{i,j}} \mid 1 \leq j \leq m_i \}$
and $S_i = \{ s_{i,j} \mid 1 \leq j \leq k_i \}$.
As before, we leave out the $\alpha_i$ if they are
obvious from the context.
In this notation the function
symbols $f_i$ are implicitly specified by the position of $U_i$ in the vector
$(U_1,\ldots,U_q)$.
Consequently, each $t\in\Lang( (U_1,\ldots,U_q) \circ_{\alpha_1} S_1 \cdots \circ_{\alpha_n} S_n )$
has one of the $f_i$ as top-level symbol and these are the only occurrences
of $f_i$. For notational convenience and if $l_i = 1$ for all $i$ we sometimes
write $\Lang( (U_1,\ldots,U_q) \circ_{\alpha_1} S_1 \cdots \circ_{\alpha_n} S_n )$
as a vector of sets of terms in the form $(T_1,\ldots, T_q)$ where
$T_i = \{ t\in\Lang( (U_1,\ldots,U_q) \circ_{\alpha_1} S_1 \cdots \circ_{\alpha_n} S_n ) \mid t = f_i(\bar{s})\ \mbox{for some}\ \bar{s} \}$.
\begin{theorem}
If $H$ is an extended Herbrand-sequent of $\Gamma\seq\Delta$, then
\[
\{ F_i\unsubst{\bar{x}}{\bar{t}} \mid 1\leq i \leq p, f_i(\bar{t}) \in \Lang(\Gram(H)) \}
\seq \{ F_i\unsubst{\bar{x}}{\bar{t}} \mid p < i \leq q, f_i(\bar{t}) \in \Lang(\Gram(H)) \}
\]
is a Herbrand-sequent of $\Gamma\seq\Delta$.
\end{theorem}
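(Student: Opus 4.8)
The plan is to argue directly at the level of truth assignments, in the spirit of Theorem~\ref{thm.grammar_cutelim} but without explicitly running a cut-elimination. Write $H_0$ for the ``$F$-part'' of $H$, namely the sequent $\{\, F_i\unsubst{\bar x}{\widebar{u_{i,j}}} \mid 1\le i\le p,\ 1\le j\le m_i \,\} \seq \{\, F_i\unsubst{\bar x}{\widebar{u_{i,j}}} \mid p<i\le q,\ 1\le j\le m_i \,\}$, and write $\mathrm{CI}_r$ for the cut-implication $A_r \impl \Land_{j=1}^{k_r} A_r\unsubst{\alpha_r}{s_{r,j}}$, so that $H$ is $H_0$ with $\mathrm{CI}_1,\ldots,\mathrm{CI}_n$ added on the left and $H$ is a tautology by hypothesis. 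Since $\Gram(H)$ is totally rigid and acyclic and its non-terminals $\tau,\alpha_1,\ldots,\alpha_n$ are already arranged in the order required by Lemma~\ref{lem.totrig_acyclic_language} (by the variable conditions on the $s_{i,j}$),
\[
\Lang(\Gram(H))\ =\ \{\, f_i(\widebar{u_{i,j}})\unsubst{\alpha_1}{s_{1,j_1}}\cdots\unsubst{\alpha_n}{s_{n,j_n}} \mid 1\le i\le q,\ 1\le j\le m_i,\ 1\le j_r\le k_r\ (1\le r\le n) \,\}.
\]
For a tuple $\vec\jmath=(j_1,\ldots,j_n)$ with $1\le j_r\le k_r$ let $\theta_{\vec\jmath}$ be the associated substitution $\alpha_r\mapsto s_{r,j_r}$ (applied in the order $\alpha_1,\ldots,\alpha_n$); by the variable conditions, and since in the setting at hand the only variables occurring in $H$ are the eigenvariables $\alpha_1,\ldots,\alpha_n$, this $\theta_{\vec\jmath}$ is ground. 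With this notation the sequent in the statement is precisely $H^\flat := \bigcup_{\vec\jmath} H_0\theta_{\vec\jmath}$ (union of antecedents, union of succedents), which manifestly has the shape demanded of a Herbrand-sequent of $\Gamma\seq\Delta$; so it only remains to show that $H^\flat$ is a tautology.

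Suppose, for contradiction, that a truth assignment $v$ refutes $H^\flat$; then $v$ refutes $H_0\theta_{\vec\jmath}$ for every $\vec\jmath$. I will pick one tuple $\vec\jmath$ for which $v$ in addition satisfies all of $\mathrm{CI}_1\theta_{\vec\jmath},\ldots,\mathrm{CI}_n\theta_{\vec\jmath}$: then $v$ refutes $H\theta_{\vec\jmath}$, since the antecedent of $H\theta_{\vec\jmath}$ consists of the antecedent of $H_0\theta_{\vec\jmath}$ (all true under $v$) together with the $\mathrm{CI}_r\theta_{\vec\jmath}$ (also true), while its succedent is that of $H_0\theta_{\vec\jmath}$ (all false); this contradicts the fact that $H\theta_{\vec\jmath}$ is a substitution instance of the tautology $H$ and hence itself a tautology. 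The components $j_n,j_{n-1},\ldots,j_1$ are chosen in this order. Writing $\theta^{>r}_{\vec\jmath}$ for the part of $\theta_{\vec\jmath}$ acting on $\alpha_{r+1},\ldots,\alpha_n$, the variable conditions give
\[
\mathrm{CI}_r\theta_{\vec\jmath}\ =\ (A_r\unsubst{\alpha_r}{s_{r,j_r}})\,\theta^{>r}_{\vec\jmath}\ \impl\ \Land_{j=1}^{k_r}(A_r\unsubst{\alpha_r}{s_{r,j}})\,\theta^{>r}_{\vec\jmath},
\]
so $\mathrm{CI}_r\theta_{\vec\jmath}$ depends only on $j_r,j_{r+1},\ldots,j_n$, and its consequent does not depend on $j_r$ at all. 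Hence, having already fixed $j_{r+1},\ldots,j_n$: if $v$ satisfies the consequent $\Land_{j=1}^{k_r}(A_r\unsubst{\alpha_r}{s_{r,j}})\theta^{>r}_{\vec\jmath}$, then any choice of $j_r$ makes $\mathrm{CI}_r\theta_{\vec\jmath}$ true; otherwise $v$ falsifies $(A_r\unsubst{\alpha_r}{s_{r,j'}})\theta^{>r}_{\vec\jmath}$ for some $j'$, and choosing $j_r:=j'$ makes the antecedent of $\mathrm{CI}_r\theta_{\vec\jmath}$ false, hence $\mathrm{CI}_r\theta_{\vec\jmath}$ true. Either way $v$ satisfies $\mathrm{CI}_r\theta_{\vec\jmath}$, and this choice of $j_r$ does not affect the previously handled $\mathrm{CI}_s$ ($s>r$), since those do not involve $\alpha_r$. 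Running this for $r=n,\ldots,1$ produces the required $\vec\jmath$ and the contradiction, so $H^\flat$ is a tautology and hence a Herbrand-sequent of $\Gamma\seq\Delta$.

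I expect the points needing care to be, first, the verification that $\Lang(\Gram(H))$ is exactly the displayed set — this is where the acyclicity and total rigidity of $\Gram(H)$ and Lemma~\ref{lem.totrig_acyclic_language} enter — and, second, the book-keeping of which indices $j_r$ each $\mathrm{CI}_r\theta_{\vec\jmath}$ depends on, which is precisely what makes the greedy top-down construction of $\vec\jmath$ sound (this is the semantic shadow of ``following the development of the grammar during cut-elimination'' in Theorem~\ref{thm.grammar_cutelim}). An alternative, arguably quicker, route is to reduce the general $\Sigma_1$-format to the single-formula setting of Theorem~\ref{thm.grammar_cutelim} by means of the encoding symbols $f_1,\ldots,f_q$ and then invoke that theorem directly; the self-contained argument above is included because it exhibits the role of the grammar more transparently.
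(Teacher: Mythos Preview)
Your argument is correct and self-contained, but it is \emph{not} the route the paper takes. The paper's proof is a one-liner, ``As for Theorem~\ref{thm.grammar_cutelim}'', and that theorem in turn is proved by tracking the grammar through a cut-elimination process, with the details deferred to external references. By contrast, you give a direct model-theoretic argument: assuming a valuation $v$ refutes the candidate Herbrand sequent, you greedily choose indices $j_n,\ldots,j_1$ so that $v$ also satisfies every $\mathrm{CI}_r\theta_{\vec\jmath}$, and then $v$ refutes the ground instance $H\theta_{\vec\jmath}$ of the tautology $H$. The dependency analysis you carry out (that $\mathrm{CI}_r\theta_{\vec\jmath}$ depends only on $j_r,\ldots,j_n$, and its consequent only on $j_{r+1},\ldots,j_n$) is exactly what makes this greedy choice sound, and it follows cleanly from the variable conditions in the definition of an extended Herbrand sequent.

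What each approach buys: the paper's cut-elimination route situates the result in the broader picture of Figure~\ref{fig.commdiag} and yields stronger invariance statements (as exploited in Section~\ref{sec.method_CI}), but at the cost of invoking a nontrivial proof-theoretic development from elsewhere. Your semantic argument is elementary and needs nothing beyond propositional validity and Lemma~\ref{lem.totrig_acyclic_language}; it is in spirit very close to the reasoning behind Lemmas~\ref{lem.can_valid} and~\ref{lem.can_sol} (indeed, your greedy choice of $\vec\jmath$ is the semantic dual of the chain $C_1\impl C_2\impl\cdots\impl C_{n+1}$ used there). The alternative you mention at the end --- reducing to the single-formula case via the $f_i$ and invoking Theorem~\ref{thm.grammar_cutelim} --- is essentially what the paper does. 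One small point: you rely on the $u_{i,j}$ containing no variables other than $\alpha_1,\ldots,\alpha_n$; this is implicit in the paper's setup (it is needed already for $\Lang(\Gram(H))\subseteq\Terms_\Sigma$) but is not stated explicitly in the definition, so it is worth flagging.
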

\begin{proof}
As for Theorem~\ref{thm.grammar_cutelim}.
\end{proof}
\begin{definition}
Let $H$ be a schematic extended Herbrand sequent. Define
\[
C_1 = \Land_{i=1}^p\Land_{j=1}^{m_i} F_i\unsubst{\bar{x}}{\widebar{u_{i,j}}}
\land \Land_{i=p+1}^q \Land_{j=1}^{m_i} \neg F_i\unsubst{\bar{x}}{\widebar{u_{i,j}}}
\ \mbox{and}\ 
C_{i+1}\ = \Land_{j=1}^{k_i} C_i\unsubst{\alpha_i}{s_{i,j}}\ \mbox{for}\ i=1,\ldots,n.
\]
Then
\[
\sigma := \unsubst{X_i}{\lambda \alpha_i. C_i}_{i=1}^n
\]
is called {\em canonical substitution} of $H$.
\end{definition}
\begin{lemma}\label{lem.can_sol_gen}
Let $H$ be a schematic extended Herbrand-sequent and $\sigma$ be its canonical
substitution. Then $\sigma$ is a solution of $H$.
\end{lemma}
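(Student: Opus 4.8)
The plan is to mirror the proof of Lemma~\ref{lem.can_sol} in the more general setting. First I would check that the variable condition holds: by induction on $i$, starting from $\Var(C_1)\subseteq\{\alpha_1,\ldots,\alpha_n\}$ (since the only variables in the instances $F_i\unsubst{\bar{x}}{\widebar{u_{i,j}}}$ are among the $\alpha_j$, by the definition of schematic extended Herbrand-sequent, where the $\widebar{u_{i,j}}$ are built over $\{\alpha_1,\ldots,\alpha_n\}$), and using that $\Var(s_{i,j})\subseteq\{\alpha_{i+1},\ldots,\alpha_n\}$ one gets $\Var(C_{i+1})\subseteq\{\alpha_{i+1},\ldots,\alpha_n\}\subseteq\{\alpha_i,\ldots,\alpha_n\}$ after substituting. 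Hence $\Var(C_i)\subseteq\{\alpha_i,\ldots,\alpha_n\}$, which is exactly the condition required of a solution.

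Next I would compute $H\sigma$ explicitly. Substituting $\lambda\alpha_i.C_i$ for $X_i$ turns the cut-implication $X_i(\alpha_i)\impl\Land_{j=1}^{k_i}X_i(s_{i,j})$ into $C_i\impl\Land_{j=1}^{k_i}C_i\unsubst{\alpha_i}{s_{i,j}}$, which by the definition of $C_{i+1}$ is just $C_i\impl C_{i+1}$. The instance formulas on the antecedent and succedent sides are unchanged by $\sigma$ (they contain no $X_i$), and their conjunction on the left together with the negation of the succedent disjunction is, by definition, logically equivalent to $C_1$ (this is where the $\neg F_i$ terms for $i>p$ in the definition of $C_1$ come from: moving the succedent instances to the left negates them). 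Therefore $H\sigma$ is logically equivalent to the purely propositional sequent
\[
C_1,\ C_1\impl C_2,\ \ldots,\ C_n\impl C_{n+1}\ \seq,
\]
exactly as in the proof of Lemma~\ref{lem.can_sol}. By a chain of modus ponens this sequent is a tautology iff $C_{n+1}\seq$ is, i.e.\ iff $C_{n+1}$ is unsatisfiable.

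It then remains to prove the analogue of Lemma~\ref{lem.can_valid}: that $C_{n+1}\seq$ is a tautology. Unfolding the recursion, $C_{n+1}$ is the big conjunction over all choices $j_1\in\{1,\ldots,k_1\},\ldots,j_n\in\{1,\ldots,k_n\}$ of $C_1\unsubst{\alpha_1}{s_{1,j_1}}\cdots\unsubst{\alpha_n}{s_{n,j_n}}$. By Lemma~\ref{lem.totrig_acyclic_language} applied to $\Gram(H)$ (which is totally rigid and acyclic, the latter by the variable condition on the $s_{i,j}$), the terms obtained by these iterated substitutions applied to the $\widebar{u_{i,j}}$ — grouped under the $f_i$ — are exactly $\Lang(\Gram(H))$. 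Hence $C_{n+1}\seq$ is logically equivalent to
\[
\{F_i\unsubst{\bar{x}}{\bar{t}}\mid 1\leq i\leq p,\ f_i(\bar{t})\in\Lang(\Gram(H))\}\seq\{F_i\unsubst{\bar{x}}{\bar{t}}\mid p<i\leq q,\ f_i(\bar{t})\in\Lang(\Gram(H))\},
\]
which is a tautology precisely because $H$ is a schematic extended Herbrand-sequent. Combining, $H\sigma$ is a tautology, so $\sigma$ is a solution of $H$. The main obstacle is purely bookkeeping: keeping track of which instances sit on the antecedent versus succedent side and verifying that the definition of $C_1$ (with the $\neg F_i$ for $i>p$) together with the grammar normal form of Lemma~\ref{lem.totrig_acyclic_language}, adapted to the vector-grammar notation with the grouping symbols $f_i$, really does reproduce the full Herbrand-sequent of the previous theorem; the logical core is identical to the single-formula case.
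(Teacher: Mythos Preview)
Your proposal is correct and follows exactly the approach the paper intends: the paper's proof is simply ``As for Lemma~\ref{lem.can_sol}'', and you have spelled out precisely that argument in the more general setting, including the analogue of Lemma~\ref{lem.can_valid} and the bookkeeping with the $\neg F_i$ for $i>p$ and the grouping symbols $f_i$.
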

\begin{proof}
As for Lemma~\ref{lem.can_sol}.
\end{proof}
As in Section~\ref{sec.pth_instruct}, the canonical substitution is hence called {\em canonical solution}.
\begin{theorem}
$\Gamma\seq\Delta$ has an extended Herbrand-sequent $H$ with $|H|=l$ iff there
is a totally rigid acyclic tree grammar $G$ with $|G|=l$ s.t.
\[
\{ F_i\unsubst{\bar{x}}{\bar{t}} \mid 1\leq i \leq p, f_i(\bar{t}) \in \Lang(\Gram(H)) \}
\seq \{ F_i\unsubst{\bar{x}}{\bar{t}} \mid p < i \leq q, f_i(\bar{t}) \in \Lang(\Gram(H)) \}
\]
is a tautology.
\end{theorem}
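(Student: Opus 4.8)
The plan is to mirror the proof of Theorem~\ref{thm.extHseq_grammar} in the more general setting, since both directions of that equivalence have already been established in the special case and the only new ingredient here is the bookkeeping for blocks of quantifiers and for the antecedent/succedent split. For the left-to-right direction, I would start from an extended Herbrand-sequent $H$ of $\Gamma\seq\Delta$ with $|H| = l$, form its grammar $\Gram(H)$ as in Definition~\ref{def.more_general_gram_extHseq}, and observe directly from that definition that $|\Gram(H)| = |H|$: the productions $\tau \rightarrow f_i(\widebar{u_{i,j}})$ are in bijection with the instance-formulas counted in $\sum_{i=1}^q m_i$, and the productions $\alpha_i \rightarrow s_{i,j}$ are in bijection with the terms counted in $\sum_{j=1}^n k_j$. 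Then the preceding theorem (the block-quantifier analogue of Theorem~\ref{thm.grammar_cutelim}, proved ``as for Theorem~\ref{thm.grammar_cutelim}'') tells us that $\{F_i\unsubst{\bar x}{\bar t} \mid i\le p,\ f_i(\bar t)\in\Lang(\Gram(H))\} \seq \{F_i\unsubst{\bar x}{\bar t}\mid i>p,\ f_i(\bar t)\in\Lang(\Gram(H))\}$ is a Herbrand-sequent of $\Gamma\seq\Delta$, hence in particular a tautology, which is exactly what is claimed.

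For the right-to-left direction, suppose we are given a totally rigid acyclic tree grammar $G$ with $|G| = l$ such that the displayed sequent built from $\Lang(\Gram(H))$ — here meaning $\Lang(G)$ — is a tautology. I would let $H$ be the schematic extended Herbrand-sequent induced by $G$ (defined exactly as in the special case, replacing the $A_i$ by monadic predicate variables $X_i$), where the top-level $f_i$-symbols in the $\tau$-productions tell us which $U_i$-component each term vector belongs to, recovering the data $\widebar{u_{i,j}}$, $\alpha_i$, $s_{i,j}$. The hypothesis that the Herbrand-type sequent over $\Lang(G)$ is a tautology is precisely the condition that makes $H$ a schematic extended Herbrand-sequent in the general sense. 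Then Lemma~\ref{lem.can_sol_gen} supplies the canonical solution $\sigma$, and by the remark following the definition of ``solution'' (analogous to the $\forall x\,F\seq$ case), $H\sigma$ is an extended Herbrand-sequent of $\Gamma\seq\Delta$ with $|H\sigma| = |H| = |G| = l$, since applying $\sigma$ does not change the term-part and hence does not change the size.

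The only genuinely non-routine point — and the one I would flag as the main obstacle — is verifying that $|\Gram(H)| = |H|$ and, dually, that the schematic extended Herbrand-sequent built from $G$ has size $|G|$, once the succedent instances are thrown into the mix. In the pure $\forall x\,F\seq$ case every instance came from the antecedent, but now some $\tau$-productions $\tau\rightarrow f_i(\widebar{u_{i,j}})$ correspond to instances of the $\exists$-formulas $F_{p+1},\dots,F_q$ on the right; one must check that the size measure $|H| = \sum_{i=1}^q m_i + \sum_{j=1}^n k_j$ counts these uniformly, and that the translation to and from the grammar keeps them in bijective correspondence with the $\tau$-productions regardless of the side. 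This is a matter of careful definition-chasing through Definitions~\ref{def.more_general_gram_extHseq} and the block-quantifier version of the size measure, and it poses no conceptual difficulty — the substance of the argument is entirely inherited from Theorem~\ref{thm.extHseq_grammar}, Lemma~\ref{lem.can_sol_gen}, and the two preceding theorems of this section — so I would present the proof simply as ``Analogous to the proof of Theorem~\ref{thm.extHseq_grammar}, using Lemma~\ref{lem.can_sol_gen} in place of Lemma~\ref{lem.can_sol}.''
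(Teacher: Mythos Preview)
Your proposal is correct and matches the paper's own proof, which is the single sentence ``Analogous to the proof of Theorem~\ref{thm.extHseq_grammar}, using the canonical solution to obtain an extended Herbrand-sequent from a grammar.'' Your more detailed unpacking of both directions, and your final suggested wording invoking Lemma~\ref{lem.can_sol_gen} in place of Lemma~\ref{lem.can_sol}, are exactly in line with what the paper does.
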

\begin{proof}
Analogous to the proof of Theorem~\ref{thm.extHseq_grammar}, using the canonical
solution to obtain an extended Herbrand-sequent from a grammar.
\end{proof}

\section{Efficient Computation of a Grammar}
\label{sec:computation_of_grammar}

Let $\pi$ be a cut-free proof and $T$ the set of (tuples of) terms used in rules
$\forall_l^*$ and $\exists_r^*$ in $\pi$.
From Theorem \ref{thm.proof_Hseq2} we conclude that $|T| = \|\pi\|_q$ and that $T$ is easily obtained
from the Herbrand sequent.

In this section we address the problem of obtaining a grammar $G$ such that
$L(G)=T$. As it was shown before in Theorems \ref{thm.proof_extHseq}
and \ref{thm.extHseq_grammar}, the size of $G$ will determine the quantifier complexity of the
proof with cuts. Since we are interested in reducing this complexity, the main
goal is to find a \emph{minimal grammar} $G$, such that $|G| < |T|$. Whenever this is
possible, we are able to construct a proof with cuts $\psi$ such that $\|\psi\|_q
< \|\pi\|_q$. Note that this might not be possible for every set of terms $T$. 

Given our representation for $G = U \circ S_1 \circ ... \circ S_n$, the size
$|G|$ is the same as $|U| + |S_1| + ... + |S_n|$. Since there is a bound on the
size of $G$, namely, $|T|$, the most naive algorithm for finding grammars would
be \emph{guessing} which terms occur in $S_i$ or $U$ and \emph{checking} whether
this grammar generates $T$. We refer to this algorithm as {\GG} for theoretical
purposes. However, in this section we describe a more efficient approach for
computing grammars that can be used in practice.
%

Assume $\pi$ is a proof of an end-sequent of the form:

\[
\forall x_1 \cdots \forall x_{l_1}\, F_1, \ldots, \forall x_1\cdots\forall x_{l_p} F_p \seq
\exists x_1\cdots\exists x_{l_{p+1}}\, F_{p+1},\ldots, \exists x_1\cdots\exists x_{l_q} F_q
\]

Its Herbrand sequent, as described in Section \ref{sec:more_general}, is:

\[
\{ F_i\unsubst{\bar{x}}{\widebar{t_{i,j}}} \mid  1 \leq i \leq p, 1\leq j \leq n_i\} \seq
\{ F_i\unsubst{\bar{x}}{\widebar{t_{i,j}}} \mid  p < i \leq q, 1\leq j \leq n_i \} 
\]

The terms used to instantiate the formulas $F_i$ are easily obtained from this
sequent: they are the vectors $t_{i,j}$. Let

\[
T = \{ f_i(t_{i,j}) \mid 1 \leq i \leq q \}.
\]

with $f_i$ being fresh function symbols as in Definition
\ref{def.more_general_gram_extHseq}. These function symbols will facilitate the
computation of a grammar, avoiding the need to deal with tuples of terms instead
of terms.
Then the algorithm described in this section will compute a grammar, as in
Definition \ref{def.more_general_gram_extHseq}, for the set of terms $T$.

We will start by describing how to compute a grammar of the form $U
\circ S$, which contains the terms used in a proof with one cut. Later, in
Section \ref{sec:generalization}, we show how to iterate this procedure in order
to get the grammar $U \circ S_1 \circ ... \circ S_n$ which will contain the
terms used in a proof with $n$ cuts.

The algorithm will rely on an operation called $\Delta$-vector, explained in
Section \ref{sec:deltavector}, and a data structure called $\Delta$-table,
explained in Section \ref{sec:deltatable}. Intuitively, the operation computes
``partial'' grammars that are stored in this table, which is later processed to obtain
grammars that generate the whole set $T$.

From now on we consider $T$ as a \emph{sequence} of terms instead of a set. This
will guide the search and help prune the search space.

\subsection{$\Delta$-vector}
\label{sec:deltavector}

The $\Delta$-vector of a sequence of terms $T$ describes the differences
between the terms in $T$. It is defined as:

%

\[
\Delta(t_1, ..., t_n) = \left\{
  \begin{array}{ll}
    (f(u_1, ..., u_m), (s_1, ..., s_n)) & \text{if all $t_i = f(t^i_1, ..., t^i_m)$ and}\\
    & \Delta(t^1_j,..., t^n_j) = (u_j, (s_1, ..., s_n))\;\;\forall\ j \in \{1,...,m\} \\
    (\alpha, (t_1, ..., t_n)) & \text{otherwise}\\
  \end{array} \right.
\]

where $\alpha$ is an eigenvariable.

For example, if $T = (fa, fb)$, its $\Delta$-vector is $(f\alpha, (a, b))$.
But in order to make this definition clearer, we will analyse a more involved example. Let
$T = \{ f(gc, c), f(g^2c, gc), f(g^3c, g^2c) \}$. Then:
\begin{equation}
\label{eq:delta1}
\Delta(f(gc, c), f(g^2c, gc), f(g^3c, g^2c)) = (f(u_1, u_2), (s_1, s_2, s_3))
\end{equation}
\begin{center}if\end{center}
\begin{equation}
\label{eq:delta2}
\Delta(gc, g^2c, g^3c) = (u_1, (s_1, s_2, s_3))
\end{equation}
\begin{equation}
\label{eq:delta3}
\Delta(c, gc, g^2c) = (u_2, (s_1, s_2, s_3))
\end{equation}
Note that the second element of the pair, the vector $(s_1, s_2, s_3)$ must be the same 
for the $\Delta$-vector of the arguments.

In order to solve Equation \ref{eq:delta2}, we apply the same definition:
\begin{equation}
\label{eq:delta4}
\Delta(gc, g^2c, g^3c) = (gu'_1, (s'_1, s'_2, s'_3))
\end{equation}
\begin{center}if\end{center}
\begin{equation}
\label{eq:delta5}
\Delta(c, gc, g^2c) = (u'_1, (s'_1, s'_2, s'_3))
\end{equation}
Since the terms in $(c, gc, g^2c)$ do not have a common head symbol, it's $\Delta$-vector
is: $(\alpha, (c, gc, g^2c))$. This solves Equations \ref{eq:delta3} and 
\ref{eq:delta5}, and $u'_1 = u_2 = \alpha$ and $(s'_1, s'_2, s'_3) = (s_1, s_2, s_3) = (c, gc, g^2c)$.
So now Equation \ref{eq:delta2} (and \ref{eq:delta4}) is:
\[
\Delta(gc, g^2c, g^3c) = (g\alpha, (c, gc, g^2c))
\]
And Equation \ref{eq:delta1} is solved:
\[
\Delta(f(gc, c), f(g^2c, gc), f(g^3c, g^2c)) = (f(g\alpha, \alpha), (c, gc, g^2c))
\]
%

It is worth to note that the $\Delta$-vector is
already a grammar $U \comp{\alpha} S$, but with the particularity of having only one term in
the set $U$ (represented by $f(u_1, ..., u_m)$). Thus, $P = \{ \tau \rightarrow
f(u_1, ..., u_m) \} \union \{ \alpha \rightarrow s_i | s_i \in S \}$.

\begin{definition}
\label{def:simplegrammars}
A grammar $U \comp{\alpha} S$ is called \emph{simple} if the set $U$ contains
only one term and it is called \emph{trivial} if it is simple and $U =
\{\alpha\}$, i.e., $\tau \rightarrow \alpha$ is the only derivation from the start
symbol.
\end{definition}

Observe that the $\Delta$-vector computes only \emph{simple} grammars.
In the following sections we show how to
combine these grammars to obtain more complex ones (Section
\ref{sec:deltatable}) and how to find valid grammars that will generate all the
terms from $T$ (Section \ref{sec:finddecomposition}).

\subsection{$\Delta$-table}
\label{sec:deltatable}

The $\Delta$-table is a data-structure that stores the \emph{non-trivial}
simple grammars $U \comp{\alpha} S$ computed by applying the $\Delta$-vector
exhaustively to sub-sequences of $T$.

The motivation behind the exhaustive procedure of computing the $\Delta$-vector
of all possible sub-sequences is to obtain the best possible compression
for the final grammar, i.e., $U \comp{\alpha} S$ such that $|U| + |S|$ is the least
possible (and, of course, less than $|T|$). Let us illustrate the
situation.

Given a sequence of terms $T = ( t_1, ..., t_n )$, the $\Delta$-vector of this
sequence is:

$$\Delta(t_1, ..., t_n) = (u_{\alpha}, (s_1, ..., s_n))$$

in which $u_{\alpha}$ is the biggest common term of all $t_i$ parametrized
with some variable $\alpha$ such that, replacing this variable with each $s_i$
would yield the original set $T$. As we said before, this is a grammar $G$ with
$P = \{ \tau \rightarrow u_{\alpha} \} \union \{ \alpha \rightarrow s_i \mid 1 \leq i \leq n
\}$, such that $L(G) = T$, but it's not a good one. 
Observe that $U$ would have only one term
($u_{\alpha}$) and $S$ has the same number of terms as the input, i.e., $n$. So
$|G| = n+1$, which is bigger than $|T|$. Since we are
interested in finding grammars that compress the size of the term sequence,
this is not a good choice.

By computing the $\Delta$-vector of sub-sequences of $T$ and combining them, we can
obtain such a compression.  To make this clearer, consider the term sequence:
$$
T = ( f(c, gc), f(c, g^2c), f(c, g^3c), f(gc, c), f(g^2c, gc), f(g^3c, g^2c) )
$$
The $\Delta$-vector of this sequence is the following trivial grammar:
$$
\alpha \comp{\alpha} ( f(c, gc), f(c, g^2c), f(c, g^3c), f(gc, c), f(g^2c, gc),
f(g^3c, g^2c) )
$$
But if we take well-chosen subsets of this set, we obtain the pairs:
\begin{align*}
\Delta(f(c, gc), f(c, g^2c), f(c, g^3c)) &= ( f(c, g\alpha), (c, gc, g^2c ) )\\
\Delta(f(gc, c), f(g^2c, gc), f(g^3c, g^2c)) &= ( f(g\alpha, \alpha), (c, gc, g^2c) )
\end{align*}
Now let $U = ( f(c, g\alpha), f(g\alpha, \alpha) )$ and $S = (c, gc,
g^2c )$. Note that $L(U \comp{\alpha} S) = T$, and $|U| + |S| = 2 + 3 = 5$. 
In particular, the combination of the first term of $U$ with the terms from $S$
generates the first 3 elements of $T$, and the combination of the second term of
$U$ with the terms from $S$ generates the last 3 elements from $T$.

The computation of these grammars is done incrementally, starting from
sub-sequences of size 1 until $n$, where $n$ is the size of the term sequence
$T$. The results are stored in a map\footnote{A map is a data structure that
stores values indexed by keys.}, called $\Delta$-table. The values stored in
this map are lists of pairs $(u, T)$, where $u$ is a term and $T$ is a set of
terms. They are indexed by another set of terms $S$.

For example, let $T' \subset T$ and $\Delta(T') = (u, (s_1, ..., s_k))$. This
information is stored in $T$'s $\Delta$-table with $S = (s_1, ..., s_k)$ as
the key and (a list of) $(u, T')$ as the value. Since there might be other
sub-sequences $T''$ of $T$ such that $\Delta(T'') = (u', S)$, it is necessary to store a
list of pairs.

Algorithm~\ref{alg:deltatable} creates, fills and returns the $\Delta$-table
for a sequence of terms $T$. It is important to note that this is different from
the naive algorithm, which would just compute and store the $\Delta$-vectors of
all sub-sequences of $T$. Algorithm~\ref{alg:deltatable} allows a significant
pruning of the search space which is based on the following theorem:

\begin{theorem}
Let $T$ be a set of terms. If $\Delta(T) = (\alpha, T)$ (trivial grammar),
then $\Delta(T') = (\alpha, T')$ for every $T' \supset T$.
\end{theorem}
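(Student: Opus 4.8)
The claim is a monotonicity statement about the $\Delta$-vector: once the $\Delta$-vector of a set collapses to the trivial grammar $(\alpha, T)$, enlarging the set cannot help. The natural approach is a direct contrapositive argument combined with induction on term structure. Suppose $T \subseteq T'$ and $\Delta(T') \neq (\alpha, T')$; I will show $\Delta(T) \neq (\alpha, T)$. By the definition of $\Delta$, $\Delta(T') \neq (\alpha, T')$ happens exactly when all terms in $T'$ share a common head symbol $f$, say $t = f(t^t_1,\ldots,t^t_m)$ for each $t \in T'$, and moreover the recursive calls $\Delta\big((t^t_j)_{t \in T'}\big)$ all return with the \emph{same} second component. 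The first thing I would do is isolate the precise characterization of when $\Delta$ returns a non-trivial (i.e.\ first-component $\neq \alpha$) grammar, since the statement as phrased treats ``trivial'' as ``$\Delta(T) = (\alpha, T)$''; this is the base condition driving the induction.

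The key step is the following observation: if all terms of the \emph{larger} set $T'$ begin with the same head symbol $f$, then \emph{a fortiori} all terms of the subset $T$ begin with $f$, since $T \subseteq T'$. So the head-symbol agreement that makes $\Delta(T')$ non-trivial is inherited by $T$. It then remains to argue that the argument-wise recursive calls for $T$ also succeed in the sense required by the definition — i.e.\ that they return a common second component. Here I would use the induction hypothesis on term height (or on the structure of the terms): for each argument position $j$, the sequence $(t^t_j)_{t \in T}$ is a subsequence of $(t^t_j)_{t \in T'}$, and the latter's $\Delta$-vector is non-trivial by assumption (it contributes a genuine $u_j \neq \alpha$, or at least a coherent second component), so by the inductive hypothesis applied at position $j$ the smaller sequence also yields a non-trivial result with a compatible second component. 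Assembling these across all $j$ shows $\Delta(T)$ picks up the head symbol $f$ and is therefore not of the form $(\alpha, T)$.

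The main obstacle I anticipate is handling the \emph{second-component coherence condition} cleanly. The definition of $\Delta$ requires not just that each recursive call on the arguments succeeds, but that they all produce literally the same vector $(s_1,\ldots,s_n)$ in the second slot; when we pass to a subsequence $T$ the induced vectors get restricted to the corresponding sub-indices, and one must check that restriction preserves the ``all equal'' property — which it does, since restricting a tuple of equal tuples to a common set of coordinates yields equal tuples. A secondary subtlety is the bookkeeping around $T$ being treated as a sequence versus a set (the paragraph just before the theorem switches to sequences); I would state and prove the result at the level of sets, noting that the sequence version follows since the sequence order does not affect whether the head symbols agree. I expect the whole argument to be a short structural induction once the characterization of non-triviality is pinned down; the bulk of the writing is making the ``common second component restricts correctly'' step precise rather than any genuine difficulty.
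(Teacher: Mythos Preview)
The paper's own proof is a single line (``Follows from the definition of $\Delta$-vector''), so you are giving far more detail than the authors do. Your high-level strategy --- contrapositive plus structural induction --- is the natural one, and the head-symbol step is fine: if all terms of $T'$ share a head $f$, so do those of $T$. The gap is in how you handle the \emph{second-component coherence} at the recursive step, and it is a real gap, not just a bookkeeping issue.

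Two things go wrong. First, your inductive hypothesis (``non-trivial implies non-trivial for subsequences'') does not apply at the individual argument positions. When $\Delta(T')$ is non-trivial, the recursive call at position $j$ returns some $(u'_j, S')$, but $u'_j$ may perfectly well equal $\alpha$ --- the definition only requires that the $S'$ agree across $j$, not that each $u'_j \neq \alpha$. So ``the latter's $\Delta$-vector is non-trivial by assumption'' is not justified, and the theorem-as-IH says nothing here. Second, and more seriously, your restriction argument is simply false: the second component that $\Delta$ produces on the subsequence at position $j$ is \emph{not} the coordinate-wise restriction of what $\Delta$ produces on the full sequence. Concretely, take $T' = \{f(g(a)), f(g(b)), f(c)\}$ and $T = \{f(g(a)), f(g(b))\}$. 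At position $1$, $\Delta(g(a),g(b),c) = (\alpha,\,(g(a),g(b),c))$, while $\Delta(g(a),g(b)) = (g(\alpha),\,(a,b))$ --- the second component $(a,b)$ is not the restriction $(g(a),g(b))$. So ``restricting a tuple of equal tuples'' is the wrong picture; passing to a subsequence can make $\Delta$ dig deeper, changing both components.

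The fix is to strengthen the inductive statement. One clean route: prove that the first component of $\Delta(R)$ is exactly the single-variable most specific generalization of $R$ (i.e.\ the most specific term $u$ such that every $r_i \in R$ equals $u[\alpha \mapsto s_i]$ for some $s_i$). Once you have this characterization, the theorem is immediate and requires no induction at the top level: if the only single-variable generalizer of $T$ is $\alpha$, then since any generalizer of $T' \supseteq T$ is also a generalizer of $T$, the only generalizer of $T'$ is $\alpha$ as well. Alternatively, you can strengthen the IH to something like: if $R \subseteq R'$ with $\Delta(R') = (v', S')$ and $\Delta(R) = (v, S)$, then $v$ is an instance of $v'$ via some $[\alpha \mapsto w]$, and $S'_i = w[\alpha \mapsto S_i]$ for $i \in R$. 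Either way, the theorem itself is too weak to serve as its own inductive hypothesis.
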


\begin{proof}
Follows from the definition of $\Delta$-vector.
\end{proof}

Instead of computing the $\Delta$-vector for all sub-sequences of $T$, the algorithm searches the
$\Delta$-table for these sub-sequences\footnote{Note that the $\Delta$-table is
initialized with an empty grammar.} and tries to increase their size by one
element on each iteration. Since trivial grammars are not stored in the
$\Delta$-table, the algorithm never tries to increase the size of some set $T'$
when $\Delta(T') = (\alpha, T')$, thus avoiding large areas of the search space.

\begin{algorithm}
\caption{Fill $\Delta$-table for a sequence of terms $T$}
\label{alg:deltatable}
\begin{algorithmic}
\Function{$\Delta$-table}{$T$: sequence of terms}
  \State $table \gets$ \textbf{new} HashMap
  \State $table[[]] \gets [(\texttt{null}, [])]$
  \For{$i = 1 \to T.length$}
    \For{$(u, T') \in table | T'.length = i-1$}
      \For{$t \in T$ and $t \notin T'$}
        \Comment{$T[i] = t$ and $\forall t' \in K, T[j] = t', i > j$}
        \State $(u', S) \gets \Delta$-vector$(T' + t)$
        \If{$u' \neq \alpha$}
          \State $table[S] \gets table[S] + (u', (T' + t))$
        \EndIf
      \EndFor
    \EndFor
  \EndFor
  \State \Return $table$
\EndFunction
\end{algorithmic}
\end{algorithm}

\subsection{Finding valid grammars}
\label{sec:finddecomposition}

After having filled the $\Delta$-table, it is only a matter of combining the
simple grammars in order to find a suitable one, i.e., a grammar $G$ such that
$L(G) = T$.

Let $S \rightarrow [(u_1, T_1), ..., (u_r, T_r)]$ be one entry of $T$'s
$\Delta$-table. We know that $T_i \subset T$ and
that $(u_i, S)$ is a grammar $G_i$ such that $L(G_i) = T_i$ for each $i \in \{1 ... r\}$. 
Take $\{T_{i_1}, ..., T_{i_s}\} \subset \{T_1, ..., T_r\}$ such that $T_{i_1} \union
... \union T_{i_s} = T$. Then, since combining each $u_{i_j}$ with $S$
yields $T_{i_j}$, and the union of these terms is $T$, the grammar
$(u_{i_1}, ..., u_{i_s}) \comp{\alpha} S$ will generate all terms from $T$. Note that there might
be several combinations of $T_i$ such that its union covers the set $T$, so it
is often the case that different grammars are found, but only the minimal ones
are considered as possible solutions.

It might happen that there are no $\{T_{i_1}, ..., T_{i_s}\} \subset \{T_1, ...,
T_r\}$ such that $T_{i_1} \union ... \union T_{i_s} = T$, but still a
compression of the terms is possible. This is the case, for example, of the set
$T = \{a, fa, f^2a, f^3a \}$. The $\Delta$-table built for this set of terms is
the following:

\begin{align*}
\{ a, fa \} &\Rightarrow [(f\alpha, \{ fa, f^2a\}), (f^2a, \{ f^2a, f^3a \})]\\
\{ a, f^2a \} &\Rightarrow [(f\alpha, \{ fa, f^3a \})]\\
\{ a, fa, f^2a \} &\Rightarrow [(f\alpha, \{ fa, f^2a, f^3a\})]\\
\end{align*}

Nevertheless, there are grammars other than the trivial one that generate this
term set, e.g., $\{\alpha, f\alpha \} \circ \{ a, f^2a \}$. In order to find these grammars,
some trivial grammars must be added to the $\Delta$-table. Remember that
these were removed to reduce the search space while building the table, but it is still desirable that
every possible grammar is found. This problem is solved during the search for a
valid grammar, after the $\Delta$-table is completed.
Observe that it only makes sense to add a trivial grammar $\{ \alpha \} \circ
T_i$ if $T_i \subset T$. Therefore, for every entry of the $\Delta$-table such that
the key is a set $T_i \subset T$, the trivial grammar is added. Thus the new
$\Delta$-table of the example would be:

\begin{align*}
\{ a, fa \} &\Rightarrow [(f\alpha, \{ fa, f^2a\}), (f^2a, \{ f^2a, f^3a \}), (\alpha, \{ a, fa\})]\\
\{ a, f^2a \} &\Rightarrow [(f\alpha, \{ fa, f^3a \}), (\alpha, \{ a, f^2a \})]\\
\{ a, fa, f^2a \} &\Rightarrow [(f\alpha, \{ fa, f^2a, f^3a\}), (\alpha, \{ a, fa, f^2a \})]\\
\end{align*}

And from this, the grammar $\{\alpha, f\alpha \} \circ \{ a, f^2a \}$ can be obtained.

\subsection{Generalization to multiple cuts}
\label{sec:generalization}

In the previous sections it was explained how to compute a grammar $U
\comp{\alpha} S$ for a set of terms $T$.
If the term set was extracted from a proof of a skolemized end-sequent, then
this allows the introduction of one $\Pi_1$-cut ($\forall x. C$) in a proof of the same
end-sequent. In this new proof, the end-sequent formulas will be instantiated
with the terms from the set $U$ of the grammar (these terms are prefixed with
the function symbol $f_i$, which indicates of which formula these terms should
be instances). The terms from the set $S$ will be used to instantiate the cut
formula when it occurs on the left, and $\alpha$ will be used as the
eigenvariable of the cut formula on the right.

In order to obtain a grammar for constructing a proof with multiple cuts, all
that is needed is to iterate the procedure described in the previous sections.
Remember that, a totally rigid acyclic grammar for a proof with
$n$ cuts can be represented by\footnote{Note that this representation has the
indices reversed from the usual representation used so far. This is only
because, in practice, the number $n$ in which the iteration stops is not known
in advance.}:

$$U \comp{\alpha_n} S_n \dots \comp{\alpha_1} S_1$$

Given a sequence of terms $T$, on the first step the algorithm will compute a
grammar $U_1 \comp{\alpha_1} S_1$ such that it generates (and compresses) $T$.
Then, the algorithm is run again, now with input $U_1$ (and with $\alpha_1$
considered as a constant), in an attempt to
compress even more the term set. Suppose that a grammar $U_2 \comp{\alpha_2}
S_2$ was found for $U_1$, such that it still compresses this term set. Now we
have the grammar $U_2 \comp{\alpha_2} S_2 \comp{\alpha_1} S_1$ that generates the
original term set $T$. This procedure can continue until we reach a term set
$U_n$ that cannot be compressed anymore via a grammar. At this moment we stop
computing grammars, and we can compose them all to generate $T$. This procedure
is illustrated in Figure \ref{fig:iterateDec}.

\begin{figure}[h]
\begin{align*}
T &= L(U_1 \circ_{\alpha_1} S_1)\\
U_1 &= L(U_2 \circ_{\alpha_2} S_2)\\
&\vdots\\
U_{n-1} &= L(U_n \circ_{\alpha_n} S_n)
\end{align*}
\caption{For more than one cut, iterate the grammars.}
\label{fig:iterateDec}
\end{figure}


The algorithm to compute grammars described in the previous sections will be
henceforth referred to as \DA.

\subsection{Example}

In this section we show the computation of the grammars for an actual cut-free
proof. For readability reasons we do not show the computation of every
$\Delta$-vector nor the full $\Delta$-table, but only those relevant to find a
minimal grammar.


The following sequent:

$$P(0, 0), \forall x  \forall y (P(x, y) \supset P(x, sy)), \forall x  \forall y (P(x, y) \supset P(sx, y)) \seq P(s^40, s^40) $$

has a cut-free proof whose Herbrand-sequent is:

\[
\begin{array}{lcl}
\begin{array}{l}
P(0,0),\\
(P(0, 0) \supset P(s0, 0)),\\
(P(s0, 0) \supset P(s0, s0)),\\
(P(s0, s0) \supset P(s^20, s0)),\\
(P(s^20, s0) \supset P(s^20, s^20)),\\
(P(s^20, s^20) \supset P(s^30, s^20)),\\
(P(s^30, s^20) \supset P(s^30, s^30)),\\
(P(s^30, s^30) \supset P(s^40, s^30)),\\
(P(s^40, s^30) \supset P(s^40, s^40))
\end{array}
&
\seq
&
P(s^40, s^40)
\end{array}
\]

Let $F_1 = \forall x \forall y (P(x, y) \supset P(x, sy))$ and $F_2 = \forall x  \forall
y  (P(x, y) \supset P(sx, y)) $. Then, the tuple term sets $T_1$ and $T_2$ of
respectively $F_1$ and $F_2$ are extracted:

\[
\begin{array}{rcl}
T_1 &=& ((s0, 0), (s^20, s0), (s^30, s^20), (s^40, s^30)) \\
T_2 &=& ((0, 0), (s0, s0), (s^20, s^20), (s^30, s^30))
\end{array}
\]

As in Definition~\ref{def.more_general_gram_extHseq} we will use two fresh
function symbols $f_1$ and $f_2$, both of arity 2, to build one set of terms
from the tuples of terms used to instantiate the formulas $F_1$ and $F_2$
respectively:

\[
\begin{array}{ll}
T = \{ &f_1(s0, 0), f_1(s^20, s0), f_1(s^30, s^20), f_1(s^40, s^30),\\
&f_2(0, 0), f_2(s0, s0), f_2(s^20, s^20), f_2(s^30, s^30)  \} 
\end{array}
\]

This term set has size 8, and our goal is to find a grammar $U \comp{\alpha} S$
such that $|U| + |S| < 8$. The first step of the algorithm is to fill the
$\Delta$-table by computing all non-trivial $\Delta$-vectors of subsets
of $T$. In particular, these two
$\Delta$-vectors are computed from the first four and last four elements of $T$:

\[
\begin{array}{rcl}
\Delta(f_1(s0, 0), f_1(s^20, s0), f_1(s^30, s^20), f_1(s^40,
s^30)) &=& (f_1(s\alpha, \alpha), ( s^30, s^20, s0, 0 )) \\
\Delta(f_2(0, 0), f_2(s0, s0), f_2(s^20, s^20), f_2(s^30, s^30)) &=&
(f_2(\alpha, \alpha), ( s^30, s^20, s0, 0 ) ) 
\end{array}
\]

These will be stored in the $\Delta$-table, which thus will have the following
entry:

\[
\begin{array}{ll}
\{ s^30, s^20, s0, 0 \} \Rightarrow &[ (f_1(s\alpha, \alpha), \{ f_1(s0, 0), f_1(s^20, s0), f_1(s^30, s^20), f_1(s^40,
s^30)\}),\\
&(f_2(\alpha, \alpha), \{ f_2(0, 0), f_2(s0, s0), f_2(s^20, s^20), f_2(s^30, s^30)\} ) ]\\
\end{array}
\]

Given these entries, the algorithm finds the following grammar that generates
$T$:

\[
\begin{array}{rcl}
\{ f_1(s\alpha, \alpha), f_2(\alpha, \alpha) \} &\circ& \{ s^30, s^20, s0, 0 \}\\
\end{array}
\]

In fact, for this example, the algorithm finds 31 grammars, of which 3 have the
minimal size 6. From Theorems \ref{thm.proof_extHseq} and
\ref{thm.extHseq_grammar}, we know that grammars of size $l$
generate proofs $\pi$ with $\Pi_1$-cuts such that $|\pi|_q = l$. Therefore,
initially, all minimal grammars are equally good. But in Section
\ref{sec.implementation_experiments} we mention another heuristic to decide
which grammar is used.

\section{Improving the canonical solution}
\label{sec.improving}
After completing the first phase of cut-introduction, namely the computation
of a grammar,
the next step is to find a solution to the schematic extended Herbrand
sequent induced by the grammar. Such a solution is guaranteed to exist by
Lemma~\ref{lem.can_sol}, and its construction is described in
Definition~\ref{def.can_subst}. But is this solution optimal? If we approach
this question from the point of view of the $|\cdot|_\mathrm{q}$ measure, 
Theorem~\ref{thm.proof_extHseq} shows that all solutions can be considered
equivalent. From the point of view of symbolic complexity or logical
complexity, things may be different: there are cases where the canonical solution
is large, but small solutions exist. The following example exhibits
such a case. In this example, a smaller solution not only exists, but is also
more natural than (and hence in many applications preferable to) the canonical solution.
\begin{example}\label{ex.improving}
Consider the sequents
\[
S_n\qequiv Pa, \forall x\, (Px\impl Pfx) \seq Pf^{n^2}a.
\]
Note that this is the example from Section~\ref{sec.motex} where $2^n$ is replaced
by $n^2$. $S_n$ has a (minimal) Herbrand-sequent
\[
H_n\qequiv Pa, Pa\impl Pfa,\ldots,Pf^{n^2-1}a \impl Pf^{n^2}a \seq Pf^{n^2}a.
\]
The terms of this Herbrand-sequent are generated by the grammar
\[
\{\alpha, f\alpha,\ldots,f^{n-1}\alpha\} \circ \{a,f^na,\ldots,f^{(n-1)n}a\}
\]
which gives rise to the schematic extended Herbrand-sequent
\[
  X(\alpha) \impl \bAND_{i=0}^{n-1} X(f^{in}a), Pa, P\alpha \impl Pf\alpha,\ldots,Pf^{n-1}\alpha \impl Pf^n\alpha \seq Pf^{n^2}a
\]
and the canonical solution $\sigma=\unsubst{X}{\lambda \alpha . C}$
with
\[
  C\qequiv Pa\land\bAND_{i=0}^{n-1}(Pf^i\alpha\impl Pf^{i+1}\alpha)\land\neg Pf^{n^2}a.
\]
But there also exists a solution $\theta$ of constant logical complexity and linear
(instead of quadratic) symbol complexity
by taking $\theta=\unsubst{X}{\lambda \alpha . A}$ with
\[
A\qequiv P\alpha \impl Pf^n\alpha.
\]
\end{example}
Since the solution for the schematic extended Herbrand sequent is interpreted as the
lemmata that give rise to the proof with cuts, and these lemmata will in applications
be read and interpreted by humans, it is important to consider the problem of improving
the logical and symbolic complexity of the canonical solution. Furthermore,
a decrease in the logical complexity of a lemma often yields a decrease
in the length of the proof that is constructed from it. 

In the following sections, we will describe a method which computes small solutions for
schematic Herbrand sequents induced by grammars. The method will be abstract; it will
depend on an algorithm $\Cons$ enumerating consequences of a formula. We describe two concrete 
consequence generators: one
will be complete (but expensive), the other will be incomplete but less expensive.

We start by investigating
the case of a single $\Pi_1$-cut in the subsequent Section~\ref{sec.improving.single} 
(some of these results have essentially been presented already in~\cite{Hetzl12Towards}). 
We then describe the two consequence generators. Finally, we present an approach to the simplification of the canonical
solution for an arbitrary number of $\Pi_1$-cuts in Section~\ref{sec.improving.multiple}.

For simplicity of presentation we will consider a fixed sequent
\[
  S\qequiv \forall x\,F(x)\seq
\]
although the results can be extended to more general end-sequents
as in Section~\ref{sec:more_general}. The problem of improving the canonical solution is a
propositional one, hence in the sequel, $\alpha$ is to be interpreted as a constant symbol,
$\models$ denotes the propositional consequence relation, and all formulas are quantifier-free
unless otherwise noted.
\subsection{Improving the solution of a single $\Pi_1$-cut}\label{sec.improving.single}
We start the study of the problem of the simplification of the canonical solution
by looking at the case of {\em 1-grammars} $U\circ V$, which give rise to proofs with a single $\Pi_1$-cut.
In the setting of 1-grammars, a solution is of the form $\unsubst{X}{\lambda \alpha. A}$.
Throughout this section, we consider a fixed 1-grammar $U\circ V$, along with a schematic
Herbrand sequent 
\[
  H\qequiv F\unsubst{x}{u_1},\ldots,F\unsubst{x}{u_m},X(\alpha) \impl \Land_{j=1}^{k} X(s_{j})\seq
\]
and its canonical solution
\[
\sigma=\unsubst{X}{\lambda \alpha .C}=\unsubst{X}{\lambda \alpha .\bAND_{i=1}^mF\unsubst{x}{u_i}}.
\]
We will use the abbreviation 
\[
  \Gamma=F\unsubst{x}{u_1},\ldots,F\unsubst{x}{u_m}.
\]
If $\unsubst{X}{\lambda \alpha. A}$
is a solution for $H$, we will say simply that $A$ is a solution.

The first basic observation is that solvability is a semantic property. The following is
an immediate consequence of Definition~\ref{def.sch_ext_Hseq}. 
\begin{lemma}\label{lem:sol_equiv}
Let $A$ be a solution, $B$ a formula and $\models A \Leftrightarrow B$.
Then $B$ is a solution.
\end{lemma}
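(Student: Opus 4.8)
The plan is to unfold the definition of "solution" from Definition~\ref{def.sch_ext_Hseq} and observe that it depends only on the propositional meaning of the formula substituted for $X$. Concretely, $A$ being a solution of
\[
H\qequiv F\unsubst{x}{u_1},\ldots,F\unsubst{x}{u_m},X(\alpha)\impl\Land_{j=1}^k X(s_j)\seq
\]
means that $\Var(A)\subseteq\{\alpha\}$ and that $H\unsubst{X}{\lambda\alpha.A}$, i.e.\ $\Gamma, A\impl\Land_{j=1}^k A\unsubst{\alpha}{s_j}\seq$, is a tautology; we must derive the same for $B$.

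The first step is the routine observation that substitution of terms for $\alpha$ commutes with propositional validity: from $\entails A\Leftrightarrow B$ one obtains $\entails A\unsubst{\alpha}{s_j}\Leftrightarrow B\unsubst{\alpha}{s_j}$ for every $j$. Indeed, given a propositional valuation $v$ of the atoms occurring in $A\unsubst{\alpha}{s_j}$ and $B\unsubst{\alpha}{s_j}$, the valuation $v'$ of the atoms of $A$ and $B$ defined by $v'(P):=v(P\unsubst{\alpha}{s_j})$ satisfies $v(C\unsubst{\alpha}{s_j})=v'(C)$ for $C\in\{A,B\}$, and $v'(A)=v'(B)$ by hypothesis.

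The second step is then immediate: these equivalences show that the antecedent $\Land\Gamma\land(A\impl\Land_{j=1}^k A\unsubst{\alpha}{s_j})$ of $H\unsubst{X}{\lambda\alpha.A}$ is propositionally equivalent to the antecedent $\Land\Gamma\land(B\impl\Land_{j=1}^k B\unsubst{\alpha}{s_j})$ of $H\unsubst{X}{\lambda\alpha.B}$, so one is unsatisfiable iff the other is. Since $H\unsubst{X}{\lambda\alpha.A}$ is a tautology its antecedent is unsatisfiable, hence so is that of $H\unsubst{X}{\lambda\alpha.B}$, i.e.\ $H\unsubst{X}{\lambda\alpha.B}$ is a tautology; together with $\Var(B)\subseteq\{\alpha\}$ this says exactly that $B$ is a solution.

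There is no real obstacle: the whole content is that solvability of $H$ by $\lambda\alpha.A$ is a semantic property of $A$ and of its instances $A\unsubst{\alpha}{s_j}$. The only two points worth making explicit are the commutation of substitution with equivalence in the first step and the (cosmetic) remark that the variable/signature side condition is inherited because $B$ is assumed to be a formula of the same kind; otherwise the argument is a one-line appeal to Definition~\ref{def.sch_ext_Hseq}.
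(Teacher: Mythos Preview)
Your proposal is correct and is essentially the detailed unfolding the paper alludes to: the paper's own ``proof'' is merely the remark that the lemma is an immediate consequence of Definition~\ref{def.sch_ext_Hseq}, and your argument makes precisely that immediacy explicit. Your handling of the variable condition as a cosmetic side remark matches the paper's implicit convention in this section (where $\alpha$ is treated as a constant and all formulas are quantifier-free).
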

Hence we may restrict our attention to solutions which are
in {\em conjunctive normal form} (CNF). Formulas in CNF can be represented 
as sets of {\em clauses}, which in turn are sets of {\em literals}, i.e.~possibly
negated atoms. It is this representation that we will use throughout this
section, along with the following properties: for sets of clauses $A, B$,
$A\subseteq B$ implies $B\models A$, and for clauses $C,D$, 
$C\subseteq D$ implies $C\models D$.

Note that the converse of the Lemma above does not hold:
given a solution $A$ there may be
solutions $B$ such that $\nmodels A\Leftrightarrow B$. We now
turn to the problem of finding such solutions.
%
%
In Example~\ref{ex.improving}, we observe that 
that $C\models A$ (but $A \not\models C$).
We can generalize this observation to show that the canonical solution is most general.
\begin{lemma}\label{lem:can_gen}
Let $C$ be the canonical solution and $A$ an arbitrary solution. Then $C\models A$.
\end{lemma}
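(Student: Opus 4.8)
The plan is to show directly that $C \models A$ for an arbitrary solution $A$, using the fact that $A$ being a solution means precisely that a certain sequent built from $A$ is a tautology, and then extracting from that tautology the semantic entailment we want. Concretely, recall that with the canonical solution $\sigma = \unsubst{X}{\lambda\alpha.C}$ where $C = \bigwedge_{i=1}^m F\unsubst{x}{u_i} = \bigwedge\Gamma$, and $A$ a solution means $\Gamma, A \impl \bigwedge_{j=1}^k A\unsubst{\alpha}{s_j} \seq$ is a tautology, i.e.\ $\Gamma \models A \land \bigwedge_{j=1}^k \neg A\unsubst{\alpha}{s_j}$ is impossible — more precisely, $\models \Gamma \seq$ is already almost what we want. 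Wait: the point is more subtle, since $\alpha$ is a constant here and $A$ may contain $\alpha$.

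First I would unpack what "$A$ is a solution" gives us semantically: the sequent $\Gamma, A \impl \bigwedge_{j=1}^k A\unsubst{\alpha}{s_j} \seq$ is a tautology, which since $\alpha$ and the $s_j$ are variable-free (constants) means that under every propositional assignment to the atoms, either some formula in $\Gamma$ is false, or $A$ is false, or some $A\unsubst{\alpha}{s_j}$ is false. Now I want to derive $C \models A$, i.e.\ $\bigwedge\Gamma \models A$. So suppose toward a contradiction that there is an assignment $v$ with $v \models \bigwedge\Gamma$ but $v \not\models A$. The key observation is that $C = \bigwedge\Gamma$ is itself the canonical solution, so $\sigma$ being a solution (Lemma~\ref{lem.can_sol}) tells us $\Gamma, C\impl\bigwedge_{j=1}^k C\unsubst{\alpha}{s_j}\seq$ is a tautology, equivalently $\bigwedge\Gamma \models \bigwedge_{j=1}^k C\unsubst{\alpha}{s_j}$, equivalently (by Lemma~\ref{lem.can_valid}) the set $\bigwedge_{t\in\Lang(\Gram(H))}F\unsubst{x}{t}$ is unsatisfiable. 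This unsatisfiability of the full instance set is the real engine.

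Then the argument is: the full instance language $\Lang(\Gram(H)) = \{u_i\unsubst{\alpha}{s_j}\}$ and its associated formula set is unsatisfiable. Take any assignment $v$ with $v\models\Gamma$, i.e.\ $v\models F\unsubst{x}{u_i}$ for all $i$; I want to show $v\models A$. Consider the "instance" of the solution sequent obtained by substituting — actually the cleanest route is: since $A$ is a solution, $H\unsubst{X}{\lambda\alpha.A}$ is a tautology, so in particular it is an \emph{extended Herbrand-sequent} (as noted after Definition~\ref{def.sch_ext_Hseq}), and by Theorem~\ref{thm.grammar_cutelim} its grammar — which is $\Gram(H)$, independent of the solution — generates a Herbrand-sequent, i.e.\ $\bigwedge_{t\in\Lang(\Gram(H))}F\unsubst{x}{t}\seq$ is a tautology. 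Now: given $v\models\Gamma = \bigwedge_i F\unsubst{x}{u_i}$, I extend/examine $v$ on the instances $F\unsubst{x}{u_i\unsubst{\alpha}{s_j}}$; these share atoms with $F\unsubst{x}{u_i}$ only partially because of the $\alpha\mapsto s_j$ substitution, so I need to be careful. The cleanest formulation: because $\Gamma, A\impl\bigwedge_j A\unsubst{\alpha}{s_j}\seq$ is a tautology and $\Gamma,\ C\impl\bigwedge_j C\unsubst{\alpha}{s_j}\seq$ is a tautology with $C = \bigwedge\Gamma$, if $v\models\bigwedge\Gamma$ and $v\not\models A$, then $v\models A\impl\bigwedge_j A\unsubst{\alpha}{s_j}$ is vacuously true, which gives no contradiction with the first tautology — so this naive approach fails, confirming that the statement genuinely uses that $\Gamma$ itself (as $C$) is what appears, not just any solution.

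\textbf{The main obstacle}, and the step I would spend the most care on, is exactly this: one must use that evaluating $A\unsubst{\alpha}{s_j}$ or $C\unsubst{\alpha}{s_j}$ under $v$ is the same as evaluating $A$ or $C$ under the "shifted" assignment $v\circ(\alpha\mapsto s_j)$, and combine the tautologies across \emph{all} $j$ simultaneously with the unsatisfiability of the full instance set. I expect the argument to run: assume $v\models\bigwedge\Gamma$ and $v\not\models A$; since $C=\bigwedge\Gamma$, we have $v\models C$; by Lemma~\ref{lem.can_valid} the set $\{F\unsubst{x}{t}\mid t\in\Lang(\Gram(H))\}$ is unsatisfiable, so \emph{no} assignment satisfies all of $F\unsubst{x}{u_i\unsubst{\alpha}{s_j}}$ — but from $v\models C = \bigwedge_i F\unsubst{x}{u_i}$ we cannot directly produce such an assignment because of the variable $\alpha$. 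The resolution is that the schematic sequent construction already builds $s_{i,j}$ variable-conditions in, so that $C\unsubst{\alpha}{s_j}$, $A\unsubst{\alpha}{s_j}$ range over genuinely disjoint-or-nested atom sets, and the inductive structure of Definition~\ref{def.can_subst} applied with $n=1$ gives $\bigwedge\Gamma\models\bigwedge_j C\unsubst{\alpha}{s_j}$, hence $\bigwedge\Gamma\models\bigwedge_j\Gamma\unsubst{\alpha}{s_j}$; iterating/unfolding, $\bigwedge\Gamma$ propositionally implies every "reachable" instance, and since the solution property of $A$ says $\bigwedge\Gamma\land A\models\bigwedge_j A\unsubst{\alpha}{s_j}$... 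I would ultimately prove it by the following clean induction: show $C \models A\unsubst{\alpha}{s}$ would propagate, but the base case is $C\models A$ directly from: $\Gamma, A\impl\bigwedge_j A\unsubst{\alpha}{s_j}\seq$ tautology together with $\Gamma\models\bigwedge_j \neg(\text{something})$ — so honestly the expected proof is short: $H\unsubst{X}{\lambda\alpha.A}$ is a tautology and contains $\Gamma$ on the left and the cut-implication $A\impl\bigwedge_j A\unsubst{\alpha}{s_j}$ on the left; removing the cut-implication (weakening the antecedent) would make it non-tautological in general, so one argues by the interpolation/structure that $\Gamma$ alone forces $A$, i.e.\ $\Gamma \models A$. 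I would present the rigorous version by reducing to: $\Gamma \cup \{\neg A\}$ is satisfiable would, via the instance structure and Lemma~\ref{lem.can_valid}, contradict the tautology-hood of $H\unsubst{X}{\lambda\alpha.A}$; writing that out carefully is the crux.
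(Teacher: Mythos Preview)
Your proposal contains a concrete error that derails the whole argument. You correctly write down the key fact --- that $A$ being a solution means the sequent
\[
\Gamma,\; A \impl \textstyle\bigwedge_{j=1}^{k} A\unsubst{\alpha}{s_j} \;\seq
\]
is a tautology --- but you then unpack this incorrectly as ``either some formula in $\Gamma$ is false, or $A$ is false, or some $A\unsubst{\alpha}{s_j}$ is false.'' That would be the semantics of $\Gamma \land A \land \bigwedge_j A\unsubst{\alpha}{s_j}$ being unsatisfiable. What you actually have is that $\Gamma \cup \{A \impl B\}$ is unsatisfiable (the succedent is empty), i.e.\ for every assignment $v$: either $v\not\models\bigwedge\Gamma$, or $v\not\models(A\impl B)$, the latter meaning $v\models A$ \emph{and} $v\not\models B$.

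Once this is read correctly, the lemma is immediate: if $v\models C=\bigwedge\Gamma$, then $v$ must falsify $A\impl\bigwedge_j A\unsubst{\alpha}{s_j}$, hence in particular $v\models A$. So $C\models A$. This is exactly the paper's two-line proof. Your ``naive approach'' that you dismissed (``$v\models A\impl\bigwedge_j A\unsubst{\alpha}{s_j}$ is vacuously true, which gives no contradiction'') is in fact the whole proof: $v\models C$ and $v\models A\impl B$ together \emph{do} contradict the validity of $C, A\impl B\seq$, since a valid sequent with empty succedent has unsatisfiable antecedent. None of the machinery you reach for --- Lemma~\ref{lem.can_valid}, Theorem~\ref{thm.grammar_cutelim}, iterated substitutions, interpolation --- is needed.
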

\begin{proof}
  Since $\vartheta=\unsubst{X}{\lambda \alpha.A}$ is a solution for $H$,
$H\vartheta=F\unsubst{x}{u_1},\ldots,F\unsubst{x}{u_m},A \impl \Land_{j=1}^{k} A\unsubst{\alpha}{s_{j}}\seq$ is valid.
By definition, $C=\bAND_{i=1}^{m} F\unsubst{x}{u_i}$, and therefore
$C, A\impl \Land_{j=1}^k A\unsubst{\alpha}{s_j} \seq$ is valid, hence $C\seq A$ is valid.
\end{proof}
This result states that any search for simple solutions can be restricted to consequences
of the canonical solution. Theoretically, we could simply enumerate ``all'' such consequences
(there are, up to logical equivalence, only finitely many), but of course this is computationally
infeasible. Towards a more efficient (but still complete!) iterative solution,
we give a criterion that allows us to disregard some of those consequences.
\begin{lemma}\label{lem:res_cip}
If $A\models B$ then
\begin{enumerate}
\item[(1)] 
If $A\unsubst{\alpha}{s_1},\ldots,A\unsubst{\alpha}{s_k},\Gamma\seq$ is not valid, then
$B$ is not a solution.
\item[(2)] If $A$ is a solution then $\Gamma\seq B$ is valid.
\item[(3)] If $A$ is a solution, then $B\unsubst{\alpha}{s_1},\ldots,B\unsubst{\alpha}{s_k},\Gamma\seq$
is valid iff $\unsubst{X}{\lambda \alpha . B}$ is a solution of $H$.
\end{enumerate}
\end{lemma}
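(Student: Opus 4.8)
We are working with a fixed 1-grammar $U\circ V$, the schematic Herbrand sequent
\[
  H\qequiv F\unsubst{x}{u_1},\ldots,F\unsubst{x}{u_m},X(\alpha)\impl\Land_{j=1}^k X(s_j)\seq,
\]
the abbreviation $\Gamma=F\unsubst{x}{u_1},\ldots,F\unsubst{x}{u_m}$, and the canonical solution $C=\bAND_{i=1}^m F\unsubst{x}{u_i}$. Throughout, $\alpha$ is a constant, $\models$ is propositional consequence, and ``$A$ is a solution'' means $H\unsubst{X}{\lambda\alpha.A}$ is valid, i.e.\ $\Gamma,\;A\impl\Land_{j=1}^k A\unsubst{\alpha}{s_j}\seq$ is valid. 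Equivalently, unfolding the implication, $A$ is a solution iff \emph{either} $\Gamma\seq A\unsubst{\alpha}{s_j}$ is valid for some $j$ (so the antecedent's conclusion fails under $\Gamma$) --- no, more carefully: iff for every model of $\Gamma$ that satisfies $A$, it also satisfies some... let me restate. $A$ is a solution iff every propositional model $M$ of $\Gamma$ satisfies $\neg A \lor \neg\bigwedge_j A\unsubst{\alpha}{s_j}$, i.e.\ $M\models \Gamma$ implies ($M\not\models A$ or $M\not\models A\unsubst{\alpha}{s_j}$ for some $j$).

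**Plan for the three parts.** All three are pure propositional-semantics bookkeeping once we fix that unfolding; the work is organizing the quantifier-free reasoning, not any deep idea. For \textbf{(1)}, suppose $A\models B$ and $A\unsubst{\alpha}{s_1},\ldots,A\unsubst{\alpha}{s_k},\Gamma\seq$ is \emph{not} valid; pick a model $M$ with $M\models\Gamma$ and $M\models A\unsubst{\alpha}{s_j}$ for all $j$. Since $s_j$ are closed terms (they live in $V$, and $\alpha$ is the only variable, which does not occur in them as $\alpha\notin\Var(s_j)$ by the grammar's variable condition... actually $s_j$ may be just constants), and since $A\models B$ means the implication holds under every substitution for $\alpha$ --- here I would be careful: $A\models B$ for formulas containing the constant $\alpha$; substituting $s_j$ for $\alpha$ preserves the entailment because $\models$ is closed under substitution of a constant by a term. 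Hence $M\models B\unsubst{\alpha}{s_j}$ for every $j$. Also I claim $M\models A$ itself: by monotonicity/persistence one cannot immediately get this, so instead I note that the grammar condition guarantees $\alpha$ does occur, and in fact $\Gamma$ does not mention $\alpha$, so $M$ can be taken (or modified on the value it assigns to things depending on $\alpha$) — the cleanest route is: the statement of solution-hood only quantifies over assignments consistent with $\Gamma$, and since $\alpha\notin\Var(\Gamma)$ we may choose $M$'s value on the atoms containing $\alpha$ freely; choosing them to make $A$ true gives a model of $\Gamma\cup\{A\}\cup\{A\unsubst{\alpha}{s_j}:j\}$... This is the step that needs the most care — whether we can independently satisfy $A$ and all the $A\unsubst{\alpha}{s_j}$. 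I expect the intended argument is simpler: from $A\models B$ get $A\unsubst{\alpha}{s_j}\models B\unsubst{\alpha}{s_j}$; so a countermodel to $A\unsubst{\alpha}{s_1},\ldots,A\unsubst{\alpha}{s_k},\Gamma\seq$ satisfies all $B\unsubst{\alpha}{s_j}$ and $\Gamma$, and we separately need it (or a variant) to satisfy $B$ — obtained by further noting $A\models B$ and picking the countermodel to already satisfy $A$, which we may do because ``not valid'' gives us freedom on the $\alpha$-atoms.

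**Parts (2) and (3).** For \textbf{(2)}: assume $A$ is a solution and, for contradiction, that $\Gamma\seq B$ is not valid; take $M\models\Gamma$, $M\not\models B$. Since $A\models B$, $M\not\models A$. But then $M$ trivially falsifies the antecedent $A$ of the cut-implication, so $M$ does \emph{not} contradict solution-hood — this direction needs the canonical-solution input: by Lemma~\ref{lem:can_gen}, $C\models A$, and $C=\bigwedge\Gamma$, so any $M\models\Gamma$ satisfies $C$ hence $A$, contradicting $M\not\models A$. Thus $\Gamma\seq A$ is valid, and $A\models B$ gives $\Gamma\seq B$ valid. For \textbf{(3)}: the direction ``$\Leftarrow$'' is immediate from the definition of solution (unfolding $H\unsubst{X}{\lambda\alpha.B}$). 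For ``$\Rightarrow$'', assume $A$ is a solution and $B\unsubst{\alpha}{s_1},\ldots,B\unsubst{\alpha}{s_k},\Gamma\seq$ is valid; to show $\unsubst{X}{\lambda\alpha.B}$ solves $H$, i.e.\ $\Gamma,\;B\impl\Land_j B\unsubst{\alpha}{s_j}\seq$ is valid, take any $M\models\Gamma$ with $M\models B$; we must find $j$ with $M\not\models B\unsubst{\alpha}{s_j}$ — but that is exactly what validity of $B\unsubst{\alpha}{s_1},\ldots,B\unsubst{\alpha}{s_k},\Gamma\seq$ gives: $M$ cannot satisfy $\Gamma$ and all $B\unsubst{\alpha}{s_j}$, so some $B\unsubst{\alpha}{s_j}$ fails at $M$, as required. (Part (2) of the proof is not actually used here; the hypothesis ``$A$ is a solution'' in (3) may only be needed to make the biconditional non-vacuous, or for (2)'s role; I would double-check whether it is needed at all for ``$\Rightarrow$''.)

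**Main obstacle.** The only genuinely delicate point is item (1): extracting from ``$A\unsubst{\alpha}{s_1},\ldots,A\unsubst{\alpha}{s_k},\Gamma\seq$ is not valid'' a countermodel that \emph{also} satisfies $A$ (not just the instances), so that it witnesses ``$B$ is not a solution''. The resolution is to exploit that $\alpha\notin\Var(\Gamma)\cup\bigcup_j\Var(s_j)$, so the atoms of the form $P(\ldots\alpha\ldots)$ that occur in $A$ but not in any $A\unsubst{\alpha}{s_j}$ nor in $\Gamma$ can be set freely in the countermodel to make $A$ true; one must check these atom-sets are genuinely disjoint from the constrained ones, which follows from the grammar's variable/acyclicity conditions (Definition~\ref{def.extHseq}, condition~\ref{def.extHseq.varcondsij}). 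I would spell this disjointness out carefully and then the rest is routine.
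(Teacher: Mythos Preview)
Your unfolding of ``$A$ is a solution'' is wrong, and this single error generates all the difficulties you encounter. The sequent
\[
  \Gamma,\; A\impl\Land_{j=1}^k A\unsubst{\alpha}{s_j}\ \seq
\]
has empty succedent, so its validity says that \emph{no} model of $\Gamma$ satisfies the implication $A\impl\Land_j A\unsubst{\alpha}{s_j}$. Hence every $M\models\Gamma$ must \emph{falsify} that implication, i.e.\ $M\models A$ \emph{and} $M\not\models\Land_j A\unsubst{\alpha}{s_j}$. You wrote instead that $M\models\neg A\lor\neg\Land_j A\unsubst{\alpha}{s_j}$, which is the negation of a conjunction, not of an implication. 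With the correct reading, ``$A$ is a solution'' is equivalent to the conjunction of two conditions: (i) $\Gamma\seq A$ is valid, and (ii) $A\unsubst{\alpha}{s_1},\ldots,A\unsubst{\alpha}{s_k},\Gamma\seq$ is valid.

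Once you have this decomposition, the lemma is almost trivial and matches the paper's proof. For (1), argue the contrapositive: if $B$ is a solution then (ii) holds for $B$; since $A\models B$ gives $A\unsubst{\alpha}{s_j}\models B\unsubst{\alpha}{s_j}$, (ii) holds for $A$. There is no need whatsoever to produce a model satisfying $A$, and the whole ``main obstacle'' about adjusting $\alpha$-atoms and invoking disjointness from the grammar conditions evaporates. For (2), $A$ being a solution gives $\Gamma\seq A$ directly from (i) --- you do not need Lemma~\ref{lem:can_gen} here --- and then $A\models B$ yields $\Gamma\seq B$. For (3), the $\Leftarrow$ direction is just (ii) for $B$; for $\Rightarrow$, the hypothesis gives (ii) for $B$, and (i) for $B$ is exactly part (2). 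This answers your parenthetical doubt: the assumption ``$A$ is a solution'' \emph{is} needed in (3), precisely to supply $\Gamma\seq B$ via (2).
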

\begin{proof}
For (1), we will show the contrapositive.
By assumption, $B\unsubst{\alpha}{s_1},\ldots,B\unsubst{\alpha}{s_k},\Gamma\seq$
is valid. Since $A \models B$, we find that $A\unsubst{\alpha}{s_1},\ldots,A\unsubst{\alpha}{s_k},\Gamma\seq$
is valid.
For (2) it suffices to observe that since $A$ is a solution $\Gamma\seq A$ is valid,
and to conclude by $A\models B$.
(3) is then immediate by definition.
\end{proof}
\begin{lemma}[Sandwich Lemma]\label{lem:sandwich}
Let $A, B$ be solutions and $A\models D\models B$. Then $D$ is a solution.
\end{lemma}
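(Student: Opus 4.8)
The plan is to derive the Sandwich Lemma as a short consequence of Lemma~\ref{lem:res_cip}, by feeding its parts~(1) and~(3) suitable formulas. The key point is that, by part~(3), membership of $D$ in the solution set is equivalent to the purely semantic condition that $D\unsubst{\alpha}{s_1},\ldots,D\unsubst{\alpha}{s_k},\Gamma\seq$ be valid, \emph{provided} we already know that some solution lies above $D$ in the $\models$-ordering; and here $A$ is exactly such a solution, since $A\models D$.

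First I would read Lemma~\ref{lem:res_cip}(1) contrapositively: if $A\models B$ and $B$ is a solution, then $A\unsubst{\alpha}{s_1},\ldots,A\unsubst{\alpha}{s_k},\Gamma\seq$ is valid. Instantiating the lemma's $A$ with $D$ and its $B$ with $B$, the hypothesis becomes $D\models B$, which holds; since $B$ is a solution we obtain that $D\unsubst{\alpha}{s_1},\ldots,D\unsubst{\alpha}{s_k},\Gamma\seq$ is valid. Then I would invoke Lemma~\ref{lem:res_cip}(3) with the lemma's $A$ instantiated to our solution $A$ and its $B$ to $D$: the hypothesis $A\models D$ holds and $A$ is a solution, so part~(3) tells us that $D\unsubst{\alpha}{s_1},\ldots,D\unsubst{\alpha}{s_k},\Gamma\seq$ is valid iff $\unsubst{X}{\lambda\alpha.D}$ is a solution of $H$. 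Combining the two steps yields that $\unsubst{X}{\lambda\alpha.D}$, i.e.\ $D$, is a solution.

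I do not expect a genuine obstacle here; the only thing requiring care is the bookkeeping of which formula plays which role in each application of Lemma~\ref{lem:res_cip}, together with the (already used) fact that $\models$ is preserved by the substitutions $\unsubst{\alpha}{s_j}$, which is legitimate because $\alpha$ is treated as a constant in this section. If one preferred a self-contained argument, one could instead reason semantically: let $M$ be any model of $\Gamma$ and of $D\impl\bAND_{j=1}^{k} D\unsubst{\alpha}{s_j}$. Since the conjunction of $\Gamma$ is the canonical solution, Lemma~\ref{lem:can_gen} gives $\Gamma\models A$, hence $\Gamma\models D$, hence $M\models D$; therefore $M\models D\unsubst{\alpha}{s_j}$ for every $j$, whence $M\models B\unsubst{\alpha}{s_j}$ for every $j$ because $D\models B$, and so $M\models B\impl\bAND_{j=1}^{k} B\unsubst{\alpha}{s_j}$. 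Thus $M$ would witness that $B$ is not a solution, a contradiction; so $\Gamma, D\impl\bAND_{j=1}^{k} D\unsubst{\alpha}{s_j}\seq$ is valid, i.e.\ $D$ is a solution.
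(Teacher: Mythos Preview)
Your proposal is correct and follows essentially the same route as the paper: both arguments obtain the validity of $D\unsubst{\alpha}{s_1},\ldots,D\unsubst{\alpha}{s_k},\Gamma\seq$ from the contrapositive of Lemma~\ref{lem:res_cip}(1) applied to $D\models B$, and then use the hypothesis $A\models D$ with $A$ a solution to conclude. The only cosmetic difference is that the paper invokes part~(2) to get $\Gamma\seq D$ and combines the two validities directly, whereas you invoke part~(3), which packages (2) together with the definition of solution into an iff; since the paper derives~(3) immediately from~(2), the two arguments are the same in substance.
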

\begin{proof}
By Lemma~\ref{lem:res_cip}~(2), $\Gamma\seq D$ is valid.
By Lemma~\ref{lem:res_cip}~(1), $D\unsubst{\alpha}{s_1},\ldots,D\unsubst{\alpha}{s_k},\Gamma\seq$
is valid.
\end{proof}

Another observation to be made in Example~\ref{ex.improving} is that
$A$ only contains clauses that
contain $\alpha$. This observation can be generalized as well:
we may freely delete $\alpha$-free clauses from solutions.
%
\begin{lemma}\label{lem:clean_ground}
Let $A$ be a solution in CNF and 
$A'$ be obtained from $A$ by removing all clauses that
do not contain $\alpha$. Then $A'$ is a solution.
\end{lemma}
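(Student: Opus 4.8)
The goal is Lemma~\ref{lem:clean_ground}: if $A$ is a solution of $H$ in CNF and $A'$ is obtained from $A$ by deleting every clause not containing $\alpha$, then $A'$ is again a solution. Write $A = A' \cup A''$ where $A''$ is the set of $\alpha$-free clauses of $A$. The plan is to verify the two conditions that make $\unsubst{X}{\lambda\alpha.A'}$ a solution of $H$, namely (using the abbreviations of this section) that $\Gamma \seq A'$ is valid and that $A'\unsubst{\alpha}{s_1},\ldots,A'\unsubst{\alpha}{s_k},\Gamma\seq$ is valid. By Lemma~\ref{lem:res_cip}~(3), since $A$ is a solution and $A \models A'$ (as $A' \subseteq A$), it suffices to prove that $A'\unsubst{\alpha}{s_1},\ldots,A'\unsubst{\alpha}{s_k},\Gamma\seq$ is valid; the condition $\Gamma \seq A'$ then comes for free from that lemma, or directly since $\Gamma \seq A$ is valid and $A \models A'$.

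\textbf{The key step.} So the real work is: from the validity of $A\unsubst{\alpha}{s_1},\ldots,A\unsubst{\alpha}{s_k},\Gamma \seq$ deduce the validity of $A'\unsubst{\alpha}{s_1},\ldots,A'\unsubst{\alpha}{s_k},\Gamma \seq$. Here is where the $\alpha$-freeness is used. For each clause $D \in A''$, the instance $D\unsubst{\alpha}{s_j}$ equals $D$ itself (no $\alpha$ to substitute), so $A''\unsubst{\alpha}{s_j} = A''$ for every $j$. Hence the antecedent of the first sequent is, as a set of clauses, $A'\unsubst{\alpha}{s_1} \cup \cdots \cup A'\unsubst{\alpha}{s_k} \cup A'' \cup \Gamma$. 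The claim is that the part $A''$ is redundant, i.e. that it already follows from $\Gamma$. Indeed, since $A$ is a solution, $\Gamma \seq A$ is valid (Lemma~\ref{lem:res_cip}~(2) with $B = A$, or directly from Lemma~\ref{lem:can_gen}/the definition), and in particular $\Gamma \seq A''$ is valid, i.e. $\Gamma \models D$ for every $\alpha$-free clause $D$ of $A$. Therefore any assignment falsifying $A'\unsubst{\alpha}{s_1},\ldots,A'\unsubst{\alpha}{s_k},\Gamma \seq$ — i.e. satisfying all of $\Gamma$ and all the instantiated clauses of $A'$ — automatically satisfies $A''$ as well, hence satisfies the whole antecedent of the original (valid) sequent, contradiction. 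This gives validity of $A'\unsubst{\alpha}{s_1},\ldots,A'\unsubst{\alpha}{s_k},\Gamma\seq$.

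\textbf{Conclusion and expected obstacle.} Combining: $\Gamma \seq A'$ is valid and $A'\unsubst{\alpha}{s_1},\ldots,A'\unsubst{\alpha}{s_k},\Gamma\seq$ is valid, so by Definition~\ref{def.sch_ext_Hseq} (equivalently, the characterization used in Lemma~\ref{lem:res_cip}~(3)) $\unsubst{X}{\lambda\alpha.A'}$ is a solution of $H$, which is what we had to show. Note one point to handle carefully: we should record that $A'$ still satisfies the variable condition $\Var(A') \subseteq \{\alpha\}$ required of a solution — this is immediate since $A' \subseteq A$ and $A$ satisfies it. I do not expect a genuine obstacle here; the only thing to be careful about is the bookkeeping that $D\unsubst{\alpha}{s_j} = D$ for $\alpha$-free $D$ and that "$A$ is a solution" is being used twice — once to get $\Gamma \models A''$ and once (via $A \models A'$ and Lemma~\ref{lem:res_cip}~(1) or (3)) to pass from the full instantiated antecedent down to $A'$'s. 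One could even phrase the whole argument as a single application of the Sandwich Lemma were there a canonical solution $C$ with $C \models A' \models$ (something), but the direct semantic argument above is cleaner and is the route I would write up.
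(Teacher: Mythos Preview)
Your proof is correct. The core idea is the same as the paper's: the $\alpha$-free clauses $A''$ satisfy $A''\unsubst{\alpha}{s_j}=A''$, and since $A$ is a solution we already have $\Gamma\models A''$, so $A''$ can be dropped from the instantiated antecedent without affecting validity.

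The packaging differs slightly. You appeal directly to Lemma~\ref{lem:res_cip}(3): from $A\models A'$ it suffices to check validity of $A'\unsubst{\alpha}{s_1},\ldots,A'\unsubst{\alpha}{s_k},\Gamma\seq$, and you obtain this by removing the redundant $A''$ from the valid sequent $A\unsubst{\alpha}{s_1},\ldots,A\unsubst{\alpha}{s_k},\Gamma\seq$ using $\Gamma\models A''$. The paper instead factors through the intermediate sequent $h_1:A'\unsubst{\alpha}{s_1},\ldots,A'\unsubst{\alpha}{s_k},A\seq$ and then cuts against $s:\Gamma\seq A$; the redundant $A''$ is absorbed into $A$ rather than into $\Gamma$. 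Your route is a bit more direct and, in particular, does not need the paper's appeal to Lemma~\ref{lem.can_valid}, which as stated concerns the canonical solution rather than an arbitrary solution $A$. Both arguments ultimately rest on the same two facts: substitution-invariance of $A''$ and entailment of $A''$ from the remaining context.
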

\begin{proof}
In this proof, we will denote $C\unsubst{\alpha}{t}$ by $C(t)$ for clauses $C$
and terms $t$. Note that since $A$ is a solution, the sequent
$s: \Gamma\seq A$ is 
valid.
Furthermore, note that the validity of
\[
  h: \Gamma,A' \impl \Land_{j=1}^{k} A'(s_{j})\seq
\]
follows from the validity of $h_1:A'(s_1),\ldots,A'(s_k),A\seq$
and $h_2:A \seq A'$ since $\Gamma\seq A'$ can be derived
from $s$ and $h_2$, and $\Gamma,A'(s_1),\ldots,A'(s_k)\seq$
can be derived from $s$ and $h_1$.

For $h_2$, validity is clear since
$A'\subseteq A$. 
Now let $A=\bAND_{i=1}^{l-1}C_i\land\bAND_{i=l}^{m}C_{i}$ 
such that $C_i$ contains $\alpha$ if and only if $l\leq i\leq m$.
Then $A'=\bAND_{i=l}^{m}C_{i}$.
By Lemma~\ref{lem.can_valid} and the application of the invertible
$\andl$ rule, we know that
\[
  \bAND_{i=1}^{l-1}C_i(s_1),\bAND_{i=l}^mC_i(s_1),\ldots,\bAND_{i=1}^{l-1}C_i(s_k),\bAND_{i=l}^mC_i(s_k) \seq
\]
is valid. By assumption, $C_i(t)=C_i$ for all terms $t$ and $1\leq i <l$, so by contraction we derive
\[
\bAND_{i=1}^{l-1}C_i,\bAND_{i=l}^mC_i(s_1),\ldots,\bAND_{i=l}^mC_i(s_k) \seq
\]
which implies $h_1$ since $A'(s_j)=\bAND_{i=l}^mC_i(s_j)$ and $A \seq C_i$ is valid for all $i,j$.
\end{proof}
We have now established all the results required for our method. Before we describe it, we need
some more notions.
A solution $A$ in CNF is called a {\em minimal solution}
if it has minimal symbol complexity among all solutions in CNF.
\begin{definition}
  Let $\Cons$ be an algorithm such that 
  $\Cons(A)$ is a finite set of propositional
  consequences of a formula $A$ (a {\em consequence generator}).
  We say that {\em $\Cons$ generates $A$ for $B$} if either
\begin{enumerate}
  \item $A\in\Cons(B)$ or
  \item there exists $B'\in\Cons(B)$ such that $\Cons$ generates
    $A$ for $B'$.
\end{enumerate}
$\Cons$ is {\em complete w.r.t.~$A$} if, for all minimal solutions $B$, $A\models B$ implies
that $\Cons$ generates $B$ for $A$.
$\Cons$ is {\em well-founded} 
if there exists a well-founded order $>$ such that $\Cons(A)>\Cons(A')$
for $A'\in\Cons(A)$. 
\end{definition}
Let $\Cons$ be a consequence generator and $A$ a solution.
Algorithm~\ref{alg:simp_sol} then describes
the solution-finding algorithm $\SF{\Cons}$.
\begin{algorithm}
\caption{\SF{\Cons}}
\label{alg:simp_sol}
\begin{algorithmic}
\Function{\SF{\Cons}}{$A$: solution in CNF}
  \State $A \gets A$ without $\alpha$-free clauses 
  \State $S \gets \{A\}$
  \For{$B \in \Cons(A)$}
    \If{$B\unsubst{\alpha}{s_1},\ldots,B\unsubst{\alpha}{s_k},\Gamma\seq$ is valid}
      \Comment $B$ is a solution
      \State $S \gets S\cup \SF{\Cons}(B)$
    \EndIf
  \EndFor
  \State \Return $\min(S)$
\EndFunction
\end{algorithmic}
\end{algorithm}
\begin{theorem}\label{thm:solution_finding}
Let $\Cons$ be a consequence generator, let $C$ be
the canonical solution in CNF, and let
$F$ be a minimal solution in CNF.
If $\Cons$ is complete w.r.t.~$C$, then $F\in\SF{\Cons}(C)$. 
If $\Cons$ is well-founded, then $\SF{\Cons}$ terminates
on every input.
\end{theorem}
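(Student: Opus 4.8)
The plan is to prove the two assertions separately; both rest essentially on lemmas already established.

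For completeness, the heart of the matter is this. Since $F$ is in particular a solution, Lemma~\ref{lem:can_gen} gives $C\models F$, and since $\Cons$ is complete w.r.t.\ $C$ this means $\Cons$ generates $F$ for $C$: there is a finite sequence $C=A_0,A_1,\ldots,A_r=F$ with $A_{i+1}\in\Cons(A_i)$ for all $i<r$. As $A_{i+1}\in\Cons(A_i)$ implies $A_i\models A_{i+1}$, chaining yields $C\models A_i\models F$ for every $i$, so by the Sandwich Lemma (Lemma~\ref{lem:sandwich}), applied with the solutions $C$ and $F$, each $A_i$ is itself a solution. Consequently, whenever $\SF{\Cons}$ processes a formula equal to some $A_i$, the consequence $A_{i+1}$ it generates passes the validity test of Algorithm~\ref{alg:simp_sol}, because by Lemma~\ref{lem:res_cip}~(3) that test decides exactly whether a consequence of the current formula is a solution; hence the recursion descends from $A_i$ to $A_{i+1}$, and started at $C$ it traverses the whole chain down to $F$.

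It then remains to check that $F$ survives into the output. A straightforward downward induction along $A_0,\ldots,A_r$ shows $F\in\SF{\Cons}(A_i)$ for each $i$ — using $S\gets\{A\}$ at the start of each call, $S\gets S\cup\SF{\Cons}(B)$ in the loop body, and the fact just established that the relevant guards succeed — so in particular $F\in\SF{\Cons}(C)$, provided $F$ is untouched by the normalization line $A\gets A$ without $\alpha$-free clauses. This is the case: that step maps solutions to solutions by Lemma~\ref{lem:clean_ground}, and a minimal solution cannot contain an $\alpha$-free clause at all, since deleting one would, again by Lemma~\ref{lem:clean_ground}, produce a solution of strictly smaller symbol complexity. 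Finally the closing $\min(S)$ does not drop $F$: every formula ever inserted into $S$ is a solution in CNF (the input is, normalization preserves this, and the loop only inserts consequences that pass the solution test), hence has symbol complexity at least that of $F$; since $F\in S$ we get $F\in\min(S)=\SF{\Cons}(C)$.

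For termination, suppose $\Cons$ is well-founded via a well-founded order $>$ with $\Cons(A)>\Cons(A')$ whenever $A'\in\Cons(A)$. The tree of recursive calls of $\SF{\Cons}$ is finitely branching, as each call loops over the finite set $\Cons(A)$ and spawns at most one recursive call per element; and it has no infinite branch, for along a branch the successive arguments $A,A',A'',\ldots$ satisfy $A'\in\Cons(A)$, $A''\in\Cons(A')$, and so on, whence $\Cons(A)>\Cons(A')>\Cons(A'')>\cdots$ would be an infinite strictly descending $>$-chain. By K\"onig's Lemma the call tree is finite, so $\SF{\Cons}$ halts on every input. I expect the only real difficulty to lie in the completeness direction, and to be one of bookkeeping: confirming that the recursion faithfully mirrors the generation chain — which needs a little care since each call first normalizes its argument, and for which the Sandwich Lemma is exactly the tool — and that neither the $\alpha$-free normalization nor the final $\min$ can lose the target $F$; the termination half is immediate from the definitions.
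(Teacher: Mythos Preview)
Your proof is correct and follows essentially the same approach as the paper's, though you are considerably more thorough: the paper's proof is a terse four sentences that invoke Lemma~\ref{lem:clean_ground}, Lemma~\ref{lem:res_cip}~(3), and Lemma~\ref{lem:res_cip}~(1), and then declare termination ``trivial.'' The one stylistic difference is that where the paper argues negatively via Lemma~\ref{lem:res_cip}~(1) (pruned branches cannot lead to solutions), you argue positively via the Sandwich Lemma (every node on the generation chain to $F$ is itself a solution and so is never pruned); since the Sandwich Lemma is proved from Lemma~\ref{lem:res_cip}~(1) and~(2), this is the same idea one level of packaging higher. Your explicit treatment of why $F$ survives both the $\alpha$-free normalization and the final $\min$, and your K\"onig's-Lemma argument for termination, go beyond what the paper spells out.
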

\begin{proof}
Termination is trivial. For completeness, note that
$\alpha$-free clauses can be removed from $C$ by Lemma~\ref{lem:clean_ground}.
Since $C$ is a solution, for $B\in\Cons(C)$ 
it suffices to check
by Lemma~\ref{lem:res_cip}~(3)
whether $B\unsubst{\alpha}{s_1},\ldots,B\unsubst{\alpha}{s_k},\Gamma\seq$ 
is valid to determine whether $B$ is a solution.
If it is not valid, then we know by Lemma~\ref{lem:res_cip}~(1)
that no iteration of $\Cons$ on $B$ will yield a solution. Since $\Cons$ is complete
w.r.t.~$C$, all minimal solutions will be generated.
\end{proof}
\subsection{Simplification by deductive closure}\label{ssec:saturation}
In this and the following section, we describe two concrete
consequence generators. Both will be based on the 
propositional resolution rule
which we now recall.
Let $A=\{C_i\mid 1\leq i \leq n\}$ be a formula in CNF with clauses $C_i=\{L_{i,j} \mid 1\leq j \leq n_i\}$, where
the $L_{i,j}$ are literals. 
By $\dual{L}$ we denote the dual of a literal $L$.
For two clauses $C_i, C_j$, 
if there exists exactly
one pair $(k, l)$ such that
$L_{i,k}=\dual{L_{j,l}}$, we define their {\em propositional resolvent} as the clause
\[
  \res(C_i, C_j)=(C_i\setminus L_{i,k})\cup(C_j \setminus L_{j,l})
\]
and leave $\res(C_i, C_j)$ undefined otherwise. We define the {\em deductive closure} 
$\dedClos(A)$ as the least superset of $A$ such that
for all $C_1,C_2\in\dedClos(A)$ there exists a $C\in\dedClos(A)$ such that $C\subseteq \res(C_1,C_2)$.
It is well-known that $\mathcal{D}(A)$ is finite and can be computed from $A$ by repeated application 
of $\res(\cdot,\cdot)$.
Finally, we define the {\em subset consequence generator}
\[
  \dedClosOp(A)=\{B\subset A \mid |B|=|A|-1\}.
\]
where $A,B$ are sets of clauses.
Towards showing completeness of $\dedClosOp$, 
we recall (a consequence of) a result from~\cite{Lee1967}:
\begin{theorem}\label{thm:lee}
Let $A$ be a formula in CNF and $C$ be a non-tautological clause such that $A \models C$. Then
there exists $C'\in\dedClos(A)$ such that $C'\subseteq C$.
\end{theorem}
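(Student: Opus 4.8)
The plan is to derive Theorem~\ref{thm:lee} --- which is the ground case of the classical \emph{subsumption theorem} of resolution --- from the observation that it suffices to produce an ordinary propositional resolution derivation \emph{from $A$} of some clause $D\subseteq C$. Indeed, because $\dedClos(A)$ is defined to contain, for any two of its clauses $C_1,C_2$, only \emph{some} subclause of $\res(C_1,C_2)$ rather than $\res(C_1,C_2)$ itself, I would first record an easy auxiliary lemma: if a clause $D$ is derivable from $A$ by resolution, then $\dedClos(A)$ contains a clause $C'\subseteq D$. This is proved by induction on the derivation; at a resolution step with premises $D_1,D_2$ and conclusion $D$, the induction hypothesis gives $E_1\subseteq D_1$ and $E_2\subseteq D_2$ in $\dedClos(A)$, and either the resolved literal is still present in both (so $\res(E_1,E_2)$ is defined and $\subseteq\res(D_1,D_2)=D$, and $\dedClos(A)$ contains a subclause of it), or it has already disappeared from, say, $E_1$, in which case $E_1\subseteq D$ already. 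Granting this, and observing that $C$ non-tautological is exactly what makes the partial assignment falsifying all literals of $C$ well-defined, we are reduced to the subsumption theorem.

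For the subsumption theorem I would argue by induction on the number of propositional variables occurring in $A$ but not in $C$. In the base case every variable of $A$ occurs in $C$. Let $\tau$ be the assignment on $\Var(C)$ that makes every literal of $C$ false (well-defined since $C$ is non-tautological). Then $\tau\not\models C$, and as $\tau$ is total on $\Var(A)\subseteq\Var(C)$ and $A\models C$, we get $\tau\not\models A$; hence some clause $D\in A$ is falsified by $\tau$. Each literal of $D$ lies over $\Var(C)$ and is false under $\tau$, but a literal over $\Var(C)$ is false under $\tau$ precisely when it occurs in $C$; therefore $D\subseteq C$, and $D\in A\subseteq\dedClos(A)$.

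For the induction step, pick a variable $p$ occurring in $A$ but not in $C$ (first discard the tautological clauses of $A$: this is harmless, since such clauses are semantically idle and $\dedClos$ is monotone in $A$). Perform one Davis--Putnam elimination of $p$: let $A'$ consist of all clauses of $A$ not mentioning $p$, together with all \emph{non-tautological} clauses $(D_1\setminus\{p\})\cup(D_2\setminus\{\dual{p}\})$ with $p\in D_1\in A$ and $\dual{p}\in D_2\in A$. One checks that any such non-tautological clause equals $\res(D_1,D_2)$ (so the clauses of $A'$ are obtained from $A$ by resolution), that $\Var(A')\subseteq\Var(A)\setminus\{p\}$, and --- the substantial point --- that $A'\models C$: given a model $\tau$ of $A'$ not mentioning $p$, extend it arbitrarily to $\Var(A)\setminus\{p\}$; if neither $p\mapsto 0$ nor $p\mapsto 1$ extended it to a model of $A$, there would be $D_1,D_2\in A$ with $p\in D_1$, $\dual{p}\in D_2$ whose ``$p$-free parts'' are both falsified by $\tau$, so $(D_1\setminus\{p\})\cup(D_2\setminus\{\dual{p}\})$ is falsified by $\tau$ --- impossible, as it is either tautological or a clause of $A'$. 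Since $A'$ has strictly fewer variables outside $\Var(C)$, the induction hypothesis yields a resolution derivation from $A'$ of some $D\subseteq C$; prepending the resolution steps producing the clauses of $A'$ from $A$ turns it into a derivation from $A$, and the auxiliary lemma then finishes the proof.

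The main obstacle is the combinatorial bookkeeping around the partiality of $\res(\cdot,\cdot)$, which is only defined when there is exactly one complementary literal pair: one has to fix the convention that a tautological Davis--Putnam resolvent is simply dropped (this is what makes $A'\models C$ go through) and, dually, that in the auxiliary lemma a resolution step may be ``skipped'' when the relevant literal has already vanished into a subclause. An alternative route --- applying refutation completeness of resolution to $A\cup\{\{\dual{L}\}\mid L\in C\}$ and then lifting the refutation to one using only clauses of $A$ --- hits the mirror-image difficulty (the leaves $\{\dual{L}\}$ have no counterpart over $A$) and is no shorter, so I would present the Davis--Putnam induction above, or else simply refer to~\cite{Lee1967} for the details.
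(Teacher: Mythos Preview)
The paper does not actually prove this theorem: it is stated as ``(a consequence of) a result from~\cite{Lee1967}'' and used as a black box. So there is nothing to compare your argument against---you are supplying a proof where the paper supplies only a citation.

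Your argument is the standard proof of the propositional subsumption theorem, and it is correct. The decomposition into (i) an auxiliary lemma reducing membership-up-to-subsumption in $\dedClos(A)$ to ordinary resolution derivability from $A$, and (ii) the subsumption theorem proper via Davis--Putnam elimination of the variables in $\Var(A)\setminus\Var(C)$, is clean and handles exactly the mismatch between the paper's partial $\res(\cdot,\cdot)$ and full resolution. One small point worth making explicit in your write-up of the auxiliary lemma: when you say ``$\res(E_1,E_2)$ is defined'', you are implicitly using that any complementary literal pair between $E_1\subseteq D_1$ and $E_2\subseteq D_2$ is already one between $D_1$ and $D_2$; since the Davis--Putnam construction only ever records \emph{non-tautological} resolvents, the parent pair $D_1,D_2$ in your derivation always has a unique complementary pair, so the same holds for $E_1,E_2$. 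You gesture at this in your ``combinatorial bookkeeping'' paragraph, but stating it once removes any doubt that the paper's restricted $\res$ suffices throughout.

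Your closing remark---that one could alternatively just cite~\cite{Lee1967}---is exactly what the paper does.
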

\begin{theorem}
Let $C$ be the canonical solution. Then $\dedClosOp$ is well-founded and
complete w.r.t.~$\dedClos(C)$.
Hence $F\in\SF{\dedClosOp}(\dedClos(C))$ for all minimal solutions $F$, 
and $\SF{\dedClosOp}$ always terminates.
\end{theorem}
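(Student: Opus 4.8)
The theorem asserts three things about $\dedClosOp$ relative to the canonical solution $C$: that $\dedClosOp$ is well-founded, that it is complete w.r.t.\ $\dedClos(C)$, and that consequently $\SF{\dedClosOp}$ applied to $\dedClos(C)$ finds all minimal solutions and terminates. The third claim is an immediate application of Theorem~\ref{thm:solution_finding}, so the real work is the first two claims, and of these well-foundedness is essentially trivial while completeness is the heart of the matter.

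\textbf{Well-foundedness.} Take the well-founded order $>$ to be comparison of cardinality of clause sets: $\dedClosOp(A) > \dedClosOp(A')$ whenever $A' \in \dedClosOp(A)$, since by definition every $B \in \dedClosOp(A)$ satisfies $|B| = |A| - 1 < |A|$. As clause-set cardinality is a non-negative integer, this is well-founded, and so $\SF{\dedClosOp}$ terminates on every input by the termination part of Theorem~\ref{thm:solution_finding}. (One should note $\dedClosOp(A)$ is a \emph{finite} set of consequences of $A$: each $B \subset A$ is a superset-of-clauses of $A$ in the sense that $A \models B$ because dropping a conjunct only weakens, and there are finitely many such $B$ — this is needed for $\dedClosOp$ to qualify as a consequence generator in the sense of the definition.)

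\textbf{Completeness w.r.t.\ $\dedClos(C)$.} Here I must show: for every minimal solution $B$ in CNF with $\dedClos(C) \models B$, the generator $\dedClosOp$ generates $B$ for $\dedClos(C)$. Write $D = \dedClos(C)$ as a set of clauses. First I would argue that $B$ may be assumed, without loss of generality via Lemma~\ref{lem:sol_equiv}, to be a set of non-tautological clauses each of which is logically implied by $D$; for each clause $B_\ell \in B$, since $D \models B_\ell$ and $B_\ell$ is non-tautological, Theorem~\ref{thm:lee} gives a clause $C'_\ell \in \dedClos(D) = D$ (deductive closure is idempotent) with $C'_\ell \subseteq B_\ell$, hence $C'_\ell \models B_\ell$. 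Now the key point: because $B$ is a \emph{minimal} solution, no clause of $B$ can be a proper superset of $C'_\ell$ — otherwise replacing $B_\ell$ by $C'_\ell$ would give a logically stronger formula that, by the Sandwich Lemma (Lemma~\ref{lem:sandwich}) together with Lemma~\ref{lem:can_gen} ($C \models$ any solution, and $D$ is equivalent to $C$ so $D$ is also a solution lying above $B$), is still a solution of strictly smaller symbol complexity, contradicting minimality. Wait — I must be careful here: shrinking one clause makes the formula stronger, and strength alone does not preserve solutionhood. The correct use of the Sandwich Lemma is: $D$ (equivalent to $C$) is a solution, $B$ is a solution, $D \models B$ (since each $B_\ell$ is implied by the subclause $C'_\ell \in D$), so any $D'$ with $D \models D' \models B$ is also a solution. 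The formula $B'$ obtained from $B$ by replacing $B_\ell$ with $C'_\ell$ satisfies $B \models B'$? No — $B' \models B$. So $B'$ need not lie between $D$ and $B$. This is the obstacle: I cannot directly shrink clauses of $B$.

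\textbf{Resolving the obstacle — the intended argument.} The generation should instead go \emph{downward from $D$ toward $B$}, deleting one clause at a time. Since $D \models B$ and $B$ is a set of non-tautological clauses, for each $B_\ell$ there is (by Theorem~\ref{thm:lee}) some $C'_\ell \in D$ with $C'_\ell \subseteq B_\ell$, hence $C'_\ell \models B_\ell$; let $D_0 = \{C'_\ell : 1 \le \ell \le |B|\} \subseteq D$, so $D_0 \models B$ and $B \models D_0$ only if the inclusions are equalities. The plan is then: the clauses of $D$ not needed, namely $D \setminus D_0$, are removed one at a time; at each stage the current clause set $A$ still satisfies $D_0 \subseteq A \subseteq D$, hence $A$ is equivalent modulo the kept clauses to something implying $B$, and — crucially — $A$ is still a solution because $A \supseteq D_0$ and $D_0 \models B$ where $B$ is a solution, so $D \models A \models$ (something equivalent to) ... — again the Sandwich Lemma needs $A$ between two \emph{solutions}; $D$ is a solution and $B$ is a solution, and $D \models A \models B$ holds iff $A$ sits between them, which it does since $A \subseteq D$ gives $D \models A$ and $A \supseteq D_0$ with $D_0 \models B$ gives $A \models B$. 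Hence by Lemma~\ref{lem:sandwich} every such $A$ is a solution, so at each $\dedClosOp$-step the validity check in $\SF{\dedClosOp}$ passes, the branch is not pruned (Lemma~\ref{lem:res_cip}(1) is only invoked to prune non-solutions), and we descend to $A$ with one fewer clause. Iterating, we reach $D_0$; if $D_0 \subsetneq B$ as clause-sets we would be stuck, but minimality of $B$ forces $B = D_0$ exactly (each $C'_\ell \subseteq B_\ell$ with strict inclusion for some $\ell$ would let us replace, within the solution $B$, the clause $B_\ell$ by the strictly smaller $C'_\ell \in D$ — and now $B' = (B \setminus B_\ell) \cup C'_\ell$ does satisfy $D \models B' \models B$? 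We have $B' \models B$ and need $D \models B'$: since $C'_\ell \in D$ and $B \setminus B_\ell \subseteq D_0 \subseteq D$... not quite, $B\setminus B_\ell$ need not be in $D$. The cleanest fix: just note $D_0$ itself is a solution by the sandwich argument above, $|D_0| \le |B|$, and symbol complexity of $D_0$ is $\le$ that of $B$ since $C'_\ell \subseteq B_\ell$; minimality of $B$ then gives equality throughout, so $D_0$ and $B$ coincide up to the logical-equivalence tolerance of Lemma~\ref{lem:sol_equiv}, and $B \in \SF{\dedClosOp}(D)$.) This shows $\dedClosOp$ generates $B$ for $D = \dedClos(C)$, i.e.\ $\dedClosOp$ is complete w.r.t.\ $\dedClos(C)$.

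\textbf{Conclusion.} Having established both hypotheses of Theorem~\ref{thm:solution_finding} with $C$ there taken to be $\dedClos(C)$ (which is itself a solution, equivalent to $C$), we conclude $F \in \SF{\dedClosOp}(\dedClos(C))$ for every minimal solution $F$, and $\SF{\dedClosOp}$ terminates on every input. The main obstacle, as flagged above, is getting the direction of the Sandwich Lemma right: one must descend from $\dedClos(C)$ toward $B$ by clause deletion (never clause shrinking), keeping the running clause set sandwiched between the two solutions $\dedClos(C)$ and $B$, and invoke Lee's theorem (Theorem~\ref{thm:lee}) to guarantee that the target clauses of $B$ are each subsumed by a clause already present in $\dedClos(C)$, so that the descent terminates exactly at $B$.
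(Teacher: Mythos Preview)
Your argument is correct and follows the same route as the paper: Lee's theorem gives, for each clause of a minimal solution $B$, a subsuming clause in $\dedClos(C)$; the resulting set $D_0\subseteq\dedClos(C)$ is sandwiched between $C$ and $B$ and hence a solution by Lemma~\ref{lem:sandwich}, minimality forces $D_0=B$, and $\dedClosOp$ generates every subset of $\dedClos(C)$ by single-clause deletion. Your additional verification that each intermediate clause set along the descent is itself a solution is sound but redundant for the completeness claim, since pruning-safety is already absorbed into the proof of Theorem~\ref{thm:solution_finding}.
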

\begin{proof}
$\dedClosOp$ is trivially well-founded.
Now let $\dedClos(C)\models B$ with $B=\{ B_i \mid 1\leq i\leq n\}$ a minimal solution.
Since $\dedClos(C)$ is
logically equivalent to $C$, $C\models B$ and by Theorem~\ref{thm:lee}, 
there exists $B'=\{B_1',\ldots,B_n'\}\subseteq \dedClos(C)$ such that
$B_i'\subseteq B_i$ and hence $C\models B'\models B$. 
By Lemma~\ref{lem:sandwich}, 
$B'$ is a solution, and $B'=B$ by minimality. We conclude by observing
that $\dedClosOp$ generates all $A\subseteq \dedClos(C)$ for $\dedClos(C)$,
hence $B$ in particular.
\end{proof}
\subsection{Simplification by forgetful resolution}\label{ssec:forgetful}
This section proposes another particular consequence generator that will
yield a more practical but incomplete algorithm based on the {\em forgetful resolution operator} which
resolves two clauses and then ``forgets'' them. Letting $A=\{C_i\mid 1\leq i \leq n\}$ we define
\[
  \FResOp(A)=\{ \{ \res(C_i,C_j) \}\cup (A\setminus\{C_i,C_j\})\mid 1\leq i<j\leq n, \res(C_i,C_j)\textrm{ defined}\}.
\]
Resolution is sound, hence $\FResOp$ is a consequence generator yielding the algorithm $\SF{\FResOp}$.
Furthermore, note that if $A'\in\FResOp(A)$ then $\com{A}>\com{A'}$, hence $\FResOp$ is well-founded
and $\SF{\FResOp}$ terminates on all inputs by Theorem~\ref{thm:solution_finding}.

We conclude our investigation of the improvement of the canonical solution in the case
of 1-grammars by applying $\SF{\FResOp}$ to a concrete example.
\begin{example}
Consider the sequent $S_n$ from Example~\ref{ex.improving} for $n=2$.
The canonical solution of $S_2$, written in conjunctive normal form, is
\[
C\qequiv Pa \land (\neg P\alpha \lor Pf\alpha) \land (\neg Pf\alpha \lor Pf^2\alpha) \land \neg Pf^4a.
\]
Application of Lemma~\ref{lem:clean_ground} yields
\[
C'\qequiv (\neg P\alpha \lor Pf\alpha) \land (\neg Pf\alpha \lor Pf^2\alpha).
\]
We have $\FResOp(C')=\{\neg P\alpha \lor Pf^2\alpha\}$.
By (2) of Lemma~\ref{lem:res_cip}, it suffices to check whether
\[
Pa, \neg Pa \lor Pf^2a, \neg Pf^2a \lor Pf^4a \seq Pf^4a
\]
is valid, which is the case. Since $\FResOp(\neg P\alpha \lor Pf^2\alpha)=\emptyset$, search terminates
and $\SF{\FResOp}$ has found the smaller solution $\neg P\alpha \lor Pf^2\alpha$.
\end{example}
\subsection{Improving the solution of multiple $\Pi_1$-cuts}\label{sec.improving.multiple}
This section is concerned with finding small solutions in the setting of
grammars $U\circ S_1\circ\cdots\circ S_n$, i.e.~in the setting of introduction of
$n$ cuts. As in the previous section, the problem of finding a minimal solution
is trivially decidable --- our aim here is to find an algorithm that
traverses the search space in a manner that takes into account the simplifying
results we have established so far.

The algorithm we present will be incomplete independently of whether $\Cons$ is complete,
but it will avoid the problem of having to deal at once with all
the components of the canonical solution.
More precisely, our algorithm will be based on an iteration of the algorithm $\SF{\Cons}$
for the $1$-cut introduction problem presented in Section~\ref{sec.improving.single}, and
hence gives rise to two concrete algorithms by plugging in the consequence generators
of Sections~\ref{ssec:saturation}~and~\ref{ssec:forgetful}.

Before we start to describe the algorithm, we will make
a short detour to define Herbrand-sequents for (some) non-prenex sequents.
This is done since, even though we have fixed a particularly simple sequent $S$,
such more general sequents will naturally appear in the description of the algorithm. 
For the remainder of this section, for notational simplicity
we will write $F_i(t)$ for $F_i\unsubst{\alpha_i}{t}$.
So consider sequents of the form
\[
M\qequiv F_1(\alpha_1) \impl \forall x_1 F_1(x_1),
\ldots, F_n(\alpha_n) \impl \forall x_n F_n(x_n),
\forall x F(x) \seq 
\]
such that $x_i\notin \Var(F_j(\alpha_j))$ for all $i,j\leq n$.
$M$ is logically equivalent to the sequent
\[
  M'\qequiv \forall x_1( F_1(\alpha_1) \impl F_1(x_1)),
  \ldots, \forall x_n ( F_n(\alpha_n) \impl F_n(x_n)),
\forall x F(x) \seq
\]
which has Herbrand sequents of the form
\[
H' \qequiv  ( F_1(\alpha_1) \impl F_1(s_1) )_{s_1\in S_1},
\ldots,  ( F_n(\alpha_n) \impl F_n(s_n) )_{s_n\in S_n},
(F(s))_{s\in S} \seq.
\]
$H'$ is logically equivalent to the sequent
\[
H \qequiv  F_1(\alpha_1) \impl \bAND_{s_1\in S_1} F_1(s_1),
\ldots,  F_n(\alpha_n) \impl \bAND_{s_n\in S_n}F_n(s_n),
(F(s))_{s\in S} \seq.
\]
Hence we may identify $M$ and $M'$ and $H$ and $H'$ to be able
to talk about sequents $M$ and their Herbrand sequents $H$.

%
%
We are now ready to describe our algorithm.
We fix a Herbrand-sequent $H=\bAND_{t\in T} F\unsubst{x}{t}\seq $ of our
previously fixed sequent $S$ such that $T=L(G)$ for the grammar
$G=U\circ_{\alpha_1}S_1\cdots\circ_{\alpha_n}S_n$.
\begin{definition}\label{def:inter_solution}
We define a {\em $k$'th intermediary solution} to be a valid sequent
of the form
\[
F_n(\alpha_n)\impl\bAND_{s_n\in S_n} F_n(s_n),
\ldots, F_\ell(\alpha_\ell)\impl\bAND_{s_\ell\in S_\ell} F_\ell(s_\ell),
F(t)_{t \in T_\ell} \seq 
\]
where $\ell=n-k+1$ and $T_\ell=L(U \circ_{\alpha_1} S_1 \circ_{\alpha_2} \cdots \circ_{\alpha_{\ell-1}} S_{\ell-1})$.
\end{definition}
By the discussion above, a $k$'th intermediary solution is a Herbrand-sequent of
\[
  F_n(\alpha_n)\impl \forall x_n F_n(x_n),
\ldots, F_\ell(\alpha_\ell)\impl \forall x_\ell F_\ell(x_\ell),
\forall x F(x) \seq 
\]
Clearly, the $0$'th intermediary solution is just the Herbrand-sequent
$ (F(t))_{t\in T}\seq$,
and an $n$'th intermediary solution is a valid sequent of the form
\[
  F_n(\alpha_n)\impl\bAND_{s_n\in S_n} F_n(s_n), \ldots, F_1(\alpha_1)\impl\bAND_{s_1\in S_1} F_1(s_1),
  (F(u))_{u \in U} \seq.
\]
which is actually an extended Herbrand-sequent of $S$.
In other words, the substitution $\unsubst{X_i}{\lambda \alpha_i.F_i(\alpha_i)}_{i=1}^n$ solves
the schematic extended Herbrand sequent induced by $S$ and $G$.
We show now how to obtain such an $n$'th intermediary solution by
iteration of the results on the introduction of a single cut.
As observed above, we have:
\begin{lemma}\label{lem:zero_solution}
There exists a $0$'th intermediary solution.
\end{lemma}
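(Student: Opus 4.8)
The plan is to observe that this lemma is just Definition~\ref{def:inter_solution} read at the boundary value $k = 0$. First I would substitute $k = 0$ into that definition: since $\ell = n - k + 1$, this gives $\ell = n+1$, so the block of cut-implications $F_n(\alpha_n)\impl\bAND_{s_n\in S_n} F_n(s_n),\ldots,F_\ell(\alpha_\ell)\impl\bAND_{s_\ell\in S_\ell} F_\ell(s_\ell)$ ranges over the empty set of indices and hence disappears, while the antecedent instances are indexed by $T_{n+1} = L(U\circ_{\alpha_1}S_1\circ_{\alpha_2}\cdots\circ_{\alpha_n}S_n) = L(G)$. Since we fixed $T = L(G)$, a $0$'th intermediary solution is by definition nothing but a valid sequent of the shape $(F(t))_{t\in T}\seq$.

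The second, equally short, step is to produce such a valid sequent, and this is exactly the Herbrand-sequent $H = \bAND_{t\in T} F\unsubst{x}{t}\seq$ fixed at the start of the section: by Definition~\ref{def.herbrand.sequent} (in its $\Sigma_1$-version) a Herbrand-sequent is required to be a tautology, and, reading the conjunction $\bAND_{t\in T} F\unsubst{x}{t}$ in the antecedent as the list of instances $(F(t))_{t\in T}$ --- a purely notational change that does not affect validity, and which is precisely the identification of $H$ with $H'$ made just above the lemma --- we conclude that $(F(t))_{t\in T}\seq$ is valid. Hence $H$ itself is a $0$'th intermediary solution, which proves the lemma.

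There is no real obstacle here; the only points that need a little care are the index bookkeeping ($\ell = n - k + 1$, so that the block of cut-implications is genuinely empty at $k=0$) and the two notational identifications set up immediately before the lemma, namely conjunction-in-the-antecedent versus a list of instances ($H$ versus $H'$), so that the phrase ``$0$'th intermediary solution'' and the fixed Herbrand-sequent $H$ are literally the same object.
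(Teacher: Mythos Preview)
Your proposal is correct and matches the paper's approach exactly: the paper simply notes, just before stating the lemma, that the $0$'th intermediary solution is the Herbrand-sequent $(F(t))_{t\in T}\seq$ and then records the lemma without further proof. Your write-up is a careful unpacking of precisely this observation, with the index bookkeeping ($\ell=n+1$, empty block of cut-implications, $T_{n+1}=L(G)=T$) made explicit.
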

Further, the following holds:
\begin{lemma}\label{lem:iterate_solution}
If there exists a $k$'th intermediary solution $I_k$,
then there exists a $(k+1)$'th intermediary solution $I_{k+1}$.
\end{lemma}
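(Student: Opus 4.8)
The plan is to reduce the construction of a $(k+1)$'th intermediary solution from a $k$'th one to the single-cut machinery of Section~\ref{sec.improving.single}. Suppose $I_k$ is a $k$'th intermediary solution, so with $\ell = n-k+1$ it is a valid (hence tautological) sequent
\[
F_n(\alpha_n)\impl\bAND_{s_n\in S_n} F_n(s_n),\ldots,
F_\ell(\alpha_\ell)\impl\bAND_{s_\ell\in S_\ell} F_\ell(s_\ell),
F(t)_{t\in T_\ell}\seq
\]
with $T_\ell = L(U\circ_{\alpha_1}S_1\cdots\circ_{\alpha_{\ell-1}}S_{\ell-1})$. The goal is to peel off one more layer: we want a valid sequent of the same shape but with the antecedent $F_{\ell-1}(\alpha_{\ell-1})\impl\bAND_{s\in S_{\ell-1}}F_{\ell-1}(s)$ added and the end-formula instances ranging over $T_{\ell-1} = L(U\circ_{\alpha_1}S_1\cdots\circ_{\alpha_{\ell-2}}S_{\ell-2})$ instead of $T_\ell$. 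The key observation is that $T_\ell = L(G' \circ_{\alpha_{\ell-1}} S_{\ell-1})$ where $G' = U\circ_{\alpha_1}S_1\cdots\circ_{\alpha_{\ell-2}}S_{\ell-2}$ generates exactly $T_{\ell-1}$, so by Lemma~\ref{lem.totrig_acyclic_language} $T_\ell = \{\, t\unsubst{\alpha_{\ell-1}}{s} \mid t\in T_{\ell-1},\ s\in S_{\ell-1}\,\}$. Thus the $F(t)_{t\in T_\ell}$-part of $I_k$ is precisely the set of $s$-instances (for $s\in S_{\ell-1}$) of the formula $\bAND_{t\in T_{\ell-1}}F(t)$, evaluated at the variable $\alpha_{\ell-1}$.

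Concretely, I would set $x := \alpha_{\ell-1}$, let $\Gamma$ be the (already fixed) list of antecedent formulas $F_n(\alpha_n)\impl\bAND F_n(s_n),\ldots,F_\ell(\alpha_\ell)\impl\bAND F_\ell(s_\ell)$ of $I_k$, let $G(x) := \bAND_{t\in T_{\ell-1}}F(t)$, and let $S_{\ell-1} = \{s_1,\ldots,s_k\}$. Then $I_k$ says exactly that $\Gamma, G(s_1),\ldots,G(s_k)\seq$ is valid. Now I invoke the single-cut theory of Section~\ref{sec.improving.single} with $G(x)$ playing the role of the end-formula: the canonical solution there (Lemma~\ref{lem.can_sol}, in the form used for Lemma~\ref{lem:can_gen}) guarantees that $X(\alpha_{\ell-1})\impl\bAND_{s\in S_{\ell-1}}X(s)$, together with $G(t)_{t\in T_{\ell-1}}$ and $\Gamma$, is solved by $\unsubst{X}{\lambda\alpha_{\ell-1}.\,\bAND_{t\in T_{\ell-1}} G(t)}$ --- but that canonical solution is $\bAND_{t\in T_{\ell-1}}F(t)$ composed down another level, which is itself logically equivalent to $C_{\ell-1}$ from Definition~\ref{def.can_subst}. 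More directly: since $I_k$ is valid and $F_{\ell-1}(x)$ is an as-yet-unused name, I take $F_{\ell-1}$ to be $\bAND_{t\in T_{\ell-1}}F(t)$ itself (i.e.\ the canonical choice); then $F_{\ell-1}(\alpha_{\ell-1})\impl\bAND_{s\in S_{\ell-1}}F_{\ell-1}(s)$ is the implication $\bAND_{t\in T_{\ell-1}}F(t) \impl \bAND_{s\in S_{\ell-1}}\bAND_{t\in T_{\ell-1}}F(t)\unsubst{\alpha_{\ell-1}}{s}$, whose succedent-conjunction is, by the $T_\ell = T_{\ell-1}\circ_{\alpha_{\ell-1}}S_{\ell-1}$ identity, just $\bAND_{t\in T_\ell}F(t)$. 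So validity of the desired sequent
\[
\Gamma,\ F_{\ell-1}(\alpha_{\ell-1})\impl\bAND_{s\in S_{\ell-1}}F_{\ell-1}(s),\ F(t)_{t\in T_{\ell-1}}\seq
\]
follows from: (i) validity of $I_k$, which gives $\Gamma, \bAND_{t\in T_\ell}F(t)\seq$; and (ii) the trivial implication $\bAND_{t\in T_{\ell-1}}F(t)\impl\bAND_{t\in T_\ell}F(t)$ being available from the new antecedent and $F(t)_{t\in T_{\ell-1}}$. A short propositional argument (first derive $\bAND_{t\in T_{\ell-1}}F(t)$ on the left from $F(t)_{t\in T_{\ell-1}}$, apply $\impll$ on the new implication to get $\bAND_{t\in T_\ell}F(t)$ on the left, then cut against $I_k$) finishes it; padding with dummies if an exact index match is wanted. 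Finally I check the eigenvariable/variable conditions: $\alpha_{\ell-1}$ does not occur in $\Gamma$ (as each $F_j(\alpha_j)$ involves only $\alpha_j,\ldots,\alpha_n$ and $\ell-1 < \ell \le j$), and $\Var(F_{\ell-1}(\alpha_{\ell-1}))\subseteq\{\alpha_{\ell-1},\ldots,\alpha_n\}$ by the variable condition on the grammar, so the result is genuinely a $(k+1)$'th intermediary solution.

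The main obstacle I anticipate is purely bookkeeping: correctly matching the recursive layer structure $T_\ell = T_{\ell-1}\circ_{\alpha_{\ell-1}}S_{\ell-1}$ to the distributivity of conjunction over the instance sets, and making sure the index shift $\ell = n-k+1$ is tracked consistently (the paper's footnoted reversal of indices in the grammar notation is a place to be careful). There is a real --- but already-solved --- choice to make about whether one wants the canonical $F_{\ell-1}$ or an improved one obtained via $\SF{\Cons}$ from Section~\ref{sec.improving.single}; for the bare existence statement of this lemma the canonical choice suffices and is what I would use, deferring the optimization to the surrounding algorithm. No interpolation is needed here because, exactly as in the remark following Lemma~\ref{lem.can_sol}, the canonical solution at each layer admits a linear-form proof.
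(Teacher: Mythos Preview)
Your proposal is correct and follows essentially the same strategy as the paper: construct $I_{k+1}$ from $I_k$ by taking the canonical solution for the new cut at level $\ell-1$, relying on the grammar identity $T_\ell = L(T_{\ell-1}\circ_{\alpha_{\ell-1}}S_{\ell-1})$.

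There is a minor packaging difference worth noting. The paper casts $I_k$ as a Herbrand sequent of a multi-formula sequent and applies the 1-grammar $(S_n,\ldots,S_\ell,T_{\ell-1})\circ_{\alpha_{\ell-1}}S_{\ell-1}$ to \emph{all} term-tuples, then invokes Lemma~\ref{lem.can_sol_gen} abstractly; its canonical solution therefore formally includes the $\Gamma$-conjuncts as well. You instead freeze $\Gamma$ as propositional context, take the leaner $F_{\ell-1}=\bAND_{t\in T_{\ell-1}}F(t)$, and verify validity by hand. Both work, and your choice is exactly what one gets from the paper's canonical solution after stripping the $\alpha_{\ell-1}$-free conjuncts via Lemma~\ref{lem:clean_ground} (legitimately, since $\alpha_{\ell-1}\notin\Var(S_i)$ for $i\geq\ell$ by the grammar's variable condition --- the same fact the paper uses to show $L(S_i\circ_{\alpha_{\ell-1}}S_{\ell-1})=S_i$). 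The paper's abstract formulation pays off in the description of Algorithm~\ref{alg:simp_sol_n}, where the call to $\SHS(I_k,(S_n,\ldots,S_\ell,T_{\ell-1})\circ_{\alpha_{\ell-1}}S_{\ell-1})$ needs the full multi-formula grammar; your direct argument is cleaner for the bare existence claim of the lemma.
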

\begin{proof}
We use the notation from Definition~\ref{def:inter_solution} to denote $I_k$.
It suffices to show that
\[
  D = (S_n,\ldots,S_\ell,T_{\ell-1})\circ_{\alpha_{\ell-1}} S_{\ell-1}
\] 
generates the terms of $I_k$, 
i.e.~$L(D)=(S_n,\ldots,S_\ell,T_\ell)$. Assume so, 
and consider the schematic extended Herbrand-sequent
\[
  F_n(\alpha_n) \impl \bAND_{s_n\in S_n} F_n(s_n) , \ldots, F_{\ell}(\alpha_{\ell})\impl \bAND_{s_{\ell}\in S_{\ell}} F_{\ell}(s_{\ell}), X(\alpha_{\ell-1})\impl \bAND_{s_{\ell-1}\in S_{\ell-1}} X(s_{\ell-1}),
  (F(t))_{t \in T_{\ell-1}}\seq.
\]
This is a schematic extended Herbrand sequent since $I_k$ is valid and
$D$ generates its terms.
Then by 
Lemma~\ref{lem.can_sol_gen} there exists a solution $\sigma=\{\lambda \alpha_{\ell-1}. A\}$.
Putting $F_{\ell-1}=A$ we have that
\[
F_n(\alpha_n) \impl \bAND_{s_n\in S_n} F_n(s_n) , \ldots, F_{\ell-1}(\alpha_{\ell-1})\impl \bAND_{s_{\ell-1}\in S_{\ell-1}} F_{\ell-1}(s_{\ell-1}),
(F(t))_{t \in T_{\ell-1}}\seq
\]
is valid, which means that a $(k+1)$'th intermediary solution exists.
We verify that indeed $L(D)=(S_n,\ldots,S_\ell,T_\ell)$:
\begin{itemize}
  \item For $n\geq i\geq \ell$, $L(S_i\circ_{\alpha_{\ell-1}}S_{\ell-1})=S_i$ 
    since $\alpha_{\ell-1}=\alpha_{n-k}\notin V(S_i)$ by the definition of grammar.
  \item $L(T_{\ell-1}\circ_{\alpha_{\ell-1}}S_{\ell-1})=L((U \circ_{\alpha_1} S_1 \circ_{\alpha_2} \cdots \circ_{\alpha_{n-k-1}} S_{n-k-1})\circ_{\alpha_{n-k}}S_{n-k})=T_\ell$.
\end{itemize}
\end{proof}
Hence there exists an $n$'th intermediary solution which
yields a solution for the $n$-cut introduction problem. 
Before describing the algorithm $\SFn{\Cons}$, we note
that a Herbrand sequent $s$ and a grammar $G$ for its termset
induce a  schematic Herbrand sequent in a canonical way (where the number
of variables $X_i$ depends on the grammar), we denote
this schematic Herbrand sequent by $\SHS(s,G)$. In particular,
if $G$ is of the form $(U_1,\ldots,U_n)\circ_\alpha V$, then
$\SHS(s,G)$ contains exactly one schematic variable $X_1$.
\begin{algorithm}
\caption{\SFn{\Cons}}
\label{alg:simp_sol_n}
\begin{algorithmic}
\Function{\SFn{\Cons}}{$I_k$: $k$'th intermediary solution in CNF}
  \If{ $k=n$ }
    \State \Return $I_k$
  \EndIf
  \State $\ell \gets n-k+1$
  \State $C \gets $ canonical solution of $\SHS(I_k, (S_n,\ldots,S_\ell,T_{\ell-1})\circ_{\alpha_{\ell-1}} S_{\ell-1})$
  \State $F_{\ell-1} \gets \SF{\Cons}(C)$
  \State $I_{k+1} \gets k+1$'th intermediary solution based on $I_k$ and $F_{\ell-1}$
  \State \Return $\SFn{\Cons}(I_{k+1})$
\EndFunction
\end{algorithmic}
\end{algorithm}
The above results (note that by the proof of Lemma~\ref{lem:iterate_solution},
{\em any} $k$'th intermediary solution can be used
in the iteration) and observations, together with Theorem~\ref{thm:solution_finding}, entail
the following.
\begin{theorem}\label{thm:sfn_main}
  Let $\Cons$ be a consequence generator. If $F\in\SFn{\Cons}(H)$ then $F$ is an
  extended Herbrand-sequent based on $G$ (in particular, $F$ gives rise to a solution
  to $\SHS(H,G)$).
  If $\Cons$ is well-founded, then $\SFn{\Cons}(H)$ terminates.
\end{theorem}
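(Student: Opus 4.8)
The plan is to verify the two claims separately, both by straightforward induction on $k$ using the lemmas already established, so that the theorem is essentially a bookkeeping assembly of the pieces developed in Sections~\ref{sec.improving.single} and~\ref{sec.improving.multiple}. The key observation is that the recursion in Algorithm~\ref{alg:simp_sol_n} maintains the invariant that its argument is always a $k$'th intermediary solution in CNF, starting from $I_0 = H$ (the $0$'th intermediary solution by Lemma~\ref{lem:zero_solution}) and terminating when $k = n$.

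For correctness, I would argue by downward induction along the recursion (equivalently, induction on $n-k$). Suppose the input $I_k$ is a $k$'th intermediary solution. If $k = n$, then by the discussion preceding Lemma~\ref{lem:zero_solution}, an $n$'th intermediary solution is precisely an extended Herbrand-sequent based on $G$, so returning $I_k$ is correct. If $k < n$, the algorithm sets $\ell = n-k+1$, forms the grammar $D = (S_n,\ldots,S_\ell,T_{\ell-1})\circ_{\alpha_{\ell-1}} S_{\ell-1}$, and considers the schematic extended Herbrand-sequent $\SHS(I_k, D)$; by the computation at the end of the proof of Lemma~\ref{lem:iterate_solution} we have $L(D) = (S_n,\ldots,S_\ell,T_\ell)$, which are exactly the terms of $I_k$, so $\SHS(I_k,D)$ really is a schematic extended Herbrand-sequent (its defining tautology condition holds because $I_k$ is valid). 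Lemma~\ref{lem.can_sol_gen} guarantees its canonical solution $C$ is a genuine solution, so $C$ is a solution in CNF that can be fed to $\SF{\Cons}$; any $F_{\ell-1} \in \SF{\Cons}(C)$ is again a solution of $\SHS(I_k,D)$ (each element of $S$ in Algorithm~\ref{alg:simp_sol} is a solution, as checked via Lemma~\ref{lem:res_cip}(3), and $\min(S)$ is among them). Setting the $(\ell-1)$'st cut-formula to $F_{\ell-1}$ and instantiating $X$ by $\lambda\alpha_{\ell-1}.F_{\ell-1}$ yields, exactly as in the body of the proof of Lemma~\ref{lem:iterate_solution}, a valid sequent which is a $(k+1)$'th intermediary solution $I_{k+1}$; this is what the algorithm recurses on. By the induction hypothesis applied to $I_{k+1}$, the final return value is an extended Herbrand-sequent based on $G$, and in particular the substitution $\unsubst{X_i}{\lambda\alpha_i.F_i(\alpha_i)}_{i=1}^n$ it determines solves $\SHS(H,G)$.

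For termination, assume $\Cons$ is well-founded. Each recursive call strictly decreases $n-k$ (it passes from a $k$'th to a $(k+1)$'th intermediary solution), so there are at most $n$ recursive invocations of $\SFn{\Cons}$; it therefore suffices that each invocation itself terminates. The only non-trivial work inside one invocation is the call $\SF{\Cons}(C)$, which by Theorem~\ref{thm:solution_finding} terminates on every input precisely because $\Cons$ is well-founded; forming the grammar $D$, the schematic sequent, and the canonical solution $C$ are all finite computations (the deductive closure / resolvent constructions are finite, and $C$ is a finite conjunction of instances of $F$). Hence $\SFn{\Cons}(H)$ terminates.

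The main obstacle I anticipate is not any single deep step but rather making the invariant airtight: one must check that every object the algorithm constructs genuinely has the type the lemmas require — in particular that $\SHS(I_k,D)$ satisfies the tautology side-condition of a schematic extended Herbrand-sequent (this rests on the grammar identity $L(D) = (S_n,\ldots,S_\ell,T_\ell)$ from Lemma~\ref{lem:iterate_solution} together with validity of $I_k$), and that replacing the canonical $C$ by the possibly-strictly-weaker $F_{\ell-1}$ returned by $\SF{\Cons}$ still yields a valid $(k+1)$'th intermediary solution, which is exactly why Lemma~\ref{lem:iterate_solution} was phrased for an arbitrary solution $\sigma$ of the schematic sequent rather than only for the canonical one. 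The parenthetical remark in the statement — that \emph{any} $k$'th intermediary solution may be used in the iteration — is precisely the hook that lets the induction go through with $F_{\ell-1}$ in place of $C$.
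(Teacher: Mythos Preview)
Your proposal is correct and follows essentially the same approach as the paper, which merely states that the theorem is entailed by the preceding results (Lemmas~\ref{lem:zero_solution} and~\ref{lem:iterate_solution}, the observation that any $k$'th intermediary solution can be used in the iteration, and Theorem~\ref{thm:solution_finding}). You have simply unpacked this terse justification into an explicit induction on $n-k$, correctly identifying the invariant, the role of the grammar identity $L(D)=(S_n,\ldots,S_\ell,T_\ell)$, and the two sources of termination.
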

Using forgetful resolution, we obtain the concrete algorithm $\CI$, which we
illustrate by an example.
\begin{example}\label{ex:improve_multiple}
Consider the example from Section~\ref{sec.motex} for $n=3$. That
is, we consider the sequent
\[
Pa, \forall x\, (Px\impl Pfx) \seq Pf^{8}a
\]
which has a Herbrand sequent $H$ with terms $T=\{a,fa,f^2a,\ldots,f^7a\}$.
$T$ is generated by the grammar $U\circ_{\alpha_1} S_1 \circ_{\alpha_2} S_2=
\{\alpha_1,f\alpha_1\}\circ\{\alpha_2,f^2\alpha_2\}\circ\{a,f^4a\}$.

From $H$, being the $0$'th intermediary solution,
we compute the first intermediary solution. To this end we
consider the grammar $L(U\circ S_1)\circ S_2=\{\alpha_2,f\alpha_2,f^2\alpha_2,f^3\alpha_2\}\circ \{a,f^4a\}$. 
This grammar leads to the schematic extended Herbrand sequent
\[
X\alpha_2 \impl (Xa \land Xf^4a), Pa, P\alpha_2\impl Pf\alpha_2,\ldots,Pf^3\alpha_2 \impl Pf^4\alpha_2 \seq Pf^{8}a.
\]
The canonical solution is $Pa\land (P\alpha_2\impl Pf\alpha_2 )\land \ldots \land (Pf^3\alpha_2 \impl Pf^4\alpha_2) \land \neg Pf^{8}a$ which,
when subjected to the simplification procedure for the 1-cut introduction problem, becomes
the solution
\[\neg P\alpha_2 \lor Pf^4\alpha_2.\]
Hence our first intermediary solution is
\[
  I_1= (\neg P\alpha_2 \lor Pf^4\alpha_2) \impl \bigwedge_{s\in S_2} (\neg Ps \lor Pf^4s), Pa, P\alpha_2\impl Pf\alpha_2,\ldots,Pf^3\alpha_2 \impl Pf^4\alpha_2 \seq Pf^{8}a
\]
which has the terms $(S_2,L(U\circ S_1))$ and is a Herbrand-sequent of
\[
  (\neg P\alpha_2 \lor Pf^4\alpha_2) \impl \forall x (\neg Px \lor Pf^4x), Pa, \forall x (Px\impl Pfx) \seq Pf^{8}a.
\]

We iterate the procedure to obtain the second intermediary solution. We consider
the grammar $(S_2, U)\circ_{\alpha_1} S_1=(\{a,f^4a\}, \{\alpha_1,f\alpha_1\})\circ \{\alpha_2,f^2\alpha_2\}$ of the terms of $I_1$. We obtain the schematic
extended Herbrand sequent
\[
\begin{array}{l}
X\alpha_1 \impl (X\alpha_2 \land Xf^2\alpha_2), Pa, (\neg P\alpha_2 \lor Pf^4\alpha_2) \impl \bigwedge_{s\in S_2} (\neg Ps \lor Pfs),\\
P\alpha_1\impl Pf\alpha_1,Pf\alpha_1 \impl Pf^2\alpha_1 \seq Pf^{8}a
\end{array}
\]
which has the canonical solution
\[
Pa \land ((\neg P\alpha_2 \lor Pf^4\alpha_2) \impl \bigwedge_{s\in S_2} (\neg Ps \lor Pfs)) \land
(P\alpha_1\impl Pf\alpha_1)\land(Pf\alpha_1 \impl Pf^2\alpha_1) \land \neg Pf^{8}a
\]
which simplifies to
\[
\neg P\alpha_1 \lor Pf^2\alpha_1.
\]
We finally obtain as the second intermediary solution the valid sequent
\[
\begin{array}{ll}
  I_2 = &(\neg P\alpha_1 \lor Pf^2\alpha_1) \impl \bigwedge_{s_1 \in S_1}(\neg Ps_1 \lor Pf^2s_1),\\
  &(\neg P\alpha_2 \lor Pf^4\alpha_2) \impl \bigwedge_{s_2 \in S_2}(\neg Ps_2 \lor Pf^4s_2),\\
  & Pa,
(Pu \impl Pfu)_{u\in U} \seq Pf^8a
\end{array}
\]
which is an extended Herbrand-sequent of $s$ and induces
the cut-formulas $\forall x(\neg Px \lor Pf^2x)$ and $\forall x(\neg Px \lor Pf^4x)$.
This can also be seen from the sequent for which $I_2$ is a Herbrand sequent:
\[
  \begin{array}{l}
(\neg P\alpha_1 \lor Pf^2\alpha_1) \impl \forall x_1 (\neg Px_1 \lor Pf^2x_1), 
(\neg P\alpha_2 \lor Pf^4\alpha_2) \impl \forall x_2 (\neg Px_2 \lor Pf^4x_2),\\
Pa, \forall x(Px \impl Pfx) \seq Pf^8a.
\end{array}
\]
\end{example}
%
%

\section{The Method CI}
\label{sec.method_CI}

In Section~\ref{sec.pth_instruct} we have shown that, for any rigid acyclic tree grammar $G$ generating the set of Herbrand terms $T$ of a cut-free proof
$\varphi$ of a sequent $S$, there exists a solution of the corresponding schematic extended Herbrand sequent $H$, the canonical solution. This canonical solution
yields cut formulas and an extended Herbrand sequent $H^*$ of $S$. By Theorem~\ref{thm.proof_extHseq}  we can construct a proof $\varphi^*$ with
cuts from $H^*$. In Section~\ref{sec.improving} we have presented several techniques to reduce the length of $\varphi^*$ (i.e. the logical complexity) under
preservation of the quantifier complexity. In combining all these transformations in a systematic way we obtain a
nondeterministic algorithm $\cutintro{A}{\Cons}$ where $A$ is an algorithm computing a minimal grammar; we may choose the algorithm {\DA} in 
Section~\ref{sec:computation_of_grammar} or the straightforward nondeterministic
guessing algorithm {\GG}. $\Cons$ is one of the consequence generators defined
Section~\ref{sec.improving}; e.g., we may choose $\Cons = \dedClosOp$ or $\Cons
= \FResOp$.

We now define $\cutintro{A}{\Cons}$:

\begin{itemize}
\item[] Input: a cut-free proof $\varphi$ of a $\Sigma_1$-sequent $S$.
\item[(1)] Compute the set of Herbrand terms $T$ of $\varphi$.
\item[(2)] Compute a minimal rigid acyclic tree grammar $G$ with $L(G) = T$ by $A$.
\item[(3)] Construct the canonical solution corresponding to $G$.
\item[(4)] Improve the canonical solution by $\SFn{\Cons}$.
\item[(5)] Construct the proof with the computed cut-formulas.
\end{itemize}
Note that, also without step (4), we obtain a full cut-introduction procedure.

\subsection{Inverting Cut-Elimination}

In this section we show that the method $\mathrm{CI}$ is complete by proving that
-- in a suitable sense -- it constitutes an inversion of Gentzen's procedure for
cut-elimination. To that aim we will rely on the results
of~\cite{Hetzl12Herbrand,HetzlXXHerbrand} which show that the language of a
grammar is a strong invariant of cut-elimination. To be more precise we quickly
repeat some notions from~\cite{Hetzl12Herbrand,HetzlXXHerbrand} here, for full
details the interested reader is referred to these papers.

\begin{definition}
We denote with $\cred$ the cut-reduction relation defined by
allowing the application
of the standard reduction rules without any strategy-restriction. The
standard reductions include rules such as e.g.
\[
\begin{array}{c}
\infer[\mathrm{cut}]{\Gamma, \Pi \seq \Delta, \Lambda}{
  \infer[\forall_\mathrm{r}]{\Gamma \seq \Delta, \forall x\, A}{
    \deduce{\Gamma \seq \Delta, A\unsubst{x}{\alpha}}{(\pi_1)}
  }
  &
  \infer[\forall_\mathrm{l}]{\forall x\, A, \Pi \seq \Lambda}{
    \deduce{A\unsubst{x}{t},\Pi\seq\Lambda}{(\pi_2)}
  }
}
\end{array}
\quad
\mapsto
\quad
\begin{array}{c}
\infer[\mathrm{cut}]{\Gamma, \Pi \seq \Delta, \Lambda}{
  \deduce{\Gamma \seq \Delta, A\unsubst{x}{t}}{(\pi_1\unsubst{\alpha}{t})}
  &
  \deduce{A\unsubst{x}{t},\Pi\seq\Lambda}{(\pi_2)}
}
\end{array}\ ,
\]
see~\cite[Figure 1]{Hetzl12Herbrand} for the complete list. With $\credne$
we denote the {\em non-erasing part} of $\cred$, i.e.\ we disallow application
of the reduction rule
\[
\begin{array}{c}
\infer[\mathrm{cut}]{\Gamma, \Pi \seq \Delta, \Lambda}{
  \infer[\mathrm{w}_\mathrm{r}]{\Gamma \seq \Delta, A}{
    \deduce{\Gamma \seq \Delta}{(\pi_1)}
  }
  &
  \deduce{A, \Pi\seq\Lambda}{(\pi_2)}
}
\end{array}
\quad
\mapsto
\quad
\begin{array}{c}
\infer[\mathrm{w}^*]{\Gamma, \Pi \seq \Delta, \Lambda}{
  \deduce{\Gamma \seq \Delta}{(\pi_1)}
}
\end{array}
\]
and its symmetric variant that removes a $\mathrm{w}_\mathrm{l}$-inference.
\end{definition}

\begin{definition}
A proof is called {\em simple} if every cut is of one of the following forms:
\[
\begin{array}{c}
\infer[\mathrm{cut}]{\Gamma, \Pi \seq \Delta, \Lambda}{
  \infer[\forall_\mathrm{r}]{\Gamma\seq\Delta, \forall x\, A}{
    \Gamma\seq\Delta, A\unsubst{x}{\alpha}
  }
  &
  \forall x\, A, \Pi \seq \Lambda
}
\end{array}
\quad
\mbox{or}
\quad
\begin{array}{c}
\infer[\mathrm{cut}]{\Gamma, \Pi \seq \Delta, \Lambda}{
  \Gamma\seq\Delta, \exists x\, A
  &
  \infer[\exists_\mathrm{l}]{\exists x\, A, \Pi \seq \Lambda}{
    A\unsubst{x}{\alpha}, \Pi \seq \Lambda
  }
}
\end{array}
\quad
\mbox{or}
\quad
\]
\[
\infer[\mathrm{cut}]{\Gamma, \Pi \seq \Delta, \Lambda}{
  \Gamma\seq\Delta, A
  &
  A, \Pi \seq \Lambda
}
\]
where $A$ is quantifier-free.
\end{definition}
Each proof with $\Pi_1\cup\Sigma_1$-cuts can be pruned to obtain a simple
proof by permuting $\forall_\mathrm{r}$- and $\exists_\mathrm{l}$-inferences
down and identifying their eigenvariables when needed. All of the reductions
of $\cred$ preserve simplicity with the exception of the following situation:
\[
\infer[\mathrm{cut}]{\Gamma, \Pi, \Sigma \seq \Delta, \Lambda, \Theta}{
  \Gamma\seq\Delta, A
  &
  \infer[\mathrm{cut}]{A, \Pi, \Sigma \seq \Lambda, \Theta}{
    \infer[\forall_\mathrm{r}]{A, \Pi \seq \Lambda, \forall x\, B}{
      A, \Pi \seq \Lambda, B\unsubst{x}{\alpha}
    }
    &
    \forall x\, B, \Sigma \seq \Theta
  }
}
\]
$\cred$
\[
\infer[\mathrm{cut}]{\Gamma, \Pi, \Sigma \seq \Delta, \Lambda, \Theta}{
  \infer[\mathrm{cut}]{\Gamma, \Pi \seq \Delta, \Lambda, \forall x\, B}{
    \Gamma\seq\Delta, A
    &
    \infer[\forall_\mathrm{r}]{A, \Pi \seq \Lambda, \forall x\, B}{
      A, \Pi \seq \Lambda, B\unsubst{x}{\alpha}
    }
  }
  &
  \forall x\, B, \Sigma \seq \Theta
}
\]
where the order of the two cuts is exchanged. We define a {\em reduction sequence
of simple proofs} as one where the above reduction is directly followed by
permuting down the $\forall_\mathrm{r}$-inference in order to arrive at:
\[
\infer[\mathrm{cut}]{\Gamma, \Pi, \Sigma \seq \Delta, \Lambda, \Theta}{
  \infer[\forall_\mathrm{r}]{\Gamma, \Pi \seq \Delta, \Lambda, \forall x\, B}{
    \infer[\mathrm{cut}]{\Gamma, \Pi \seq \Delta, \Lambda, B\unsubst{x}{\alpha}}{
      \Gamma\seq\Delta, A
      &
      A, \Pi \seq \Lambda, B\unsubst{x}{\alpha}
    }
  }
  &
  \forall x\, B, \Sigma \seq \Theta
}
\]
and symmetrically for the case of $\exists_\mathrm{l}$. We can now state the
part of the main result of~\cite{HetzlXXHerbrand} which is relevant for this paper:
\begin{theorem}\label{thm.grammar_invariance}
If $\pi\credne\pi^*$ is a cut-reduction sequence of simple proofs, then $\Lang(\Gram(\pi)) = \Lang(\Gram(\pi^*))$.
\end{theorem}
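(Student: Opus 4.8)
The plan is to reduce Theorem~\ref{thm.grammar_invariance} to a one-step invariance statement: if $\pi\credne\pi'$ is a single step inside a reduction sequence of simple proofs, then $\Lang(\Gram(\pi))=\Lang(\Gram(\pi'))$, and then to conclude by induction on the length of the sequence. To even state the one-step claim one first fixes the grammar $\Gram(\pi)$ of an arbitrary simple proof $\pi$ with $\Pi_1\cup\Sigma_1$-cuts: its non-terminals are $\tau$ together with the eigenvariables $\alpha_1,\ldots,\alpha_n$ of the cuts (ordered consistently with the dependency of the eigenvariables on each other), the $\tau$-productions are the instance terms of the end-sequent formulas read off from the $\foralll^*$- and $\existsr^*$-inferences acting on them, and the $\alpha_i$-productions are the instance terms of the $i$-th cut-formula, exactly as in Definitions~\ref{def.gram_extHseq} and~\ref{def.more_general_gram_extHseq}. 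Acyclicity and total rigidity hold just as there, so by Lemma~\ref{lem.totrig_acyclic_language} the generated language admits the composition-of-substitutions normal form and, in particular, does not depend on the order chosen for the non-terminals.

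With this in place I would run through the reduction rules of $\cred$ listed in \cite[Figure~1]{Hetzl12Herbrand}, restricted to the non-erasing ones. For the permutation reductions and the propositional logical and rank reductions nothing changes: no quantifier inference is created or destroyed and no eigenvariable is affected, so $\Gram(\pi)$ and $\Gram(\pi')$ coincide up to renaming of non-terminals. For the quantifier key reduction --- a cut on $\forall x\,A$ whose left premise ends in $\forallr$ with eigenvariable $\alpha_i$ and whose right premise instantiates it with a term $t$ --- the step substitutes $t$ for $\alpha_i$ throughout the left subproof and deletes one cut; in grammar terms the non-terminal $\alpha_i$ disappears while every occurrence of $\alpha_i$ in the surviving productions is replaced by $t$ (which may still contain outer eigenvariables). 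Using $\Lang(G\circ_{\alpha_i}S)=\{\,u\unsubst{\alpha_i}{s}\mid u\in\Lang(G),\ s\in S\,\}$ one checks that folding this substitution into the grammar leaves the generated language fixed; the $\existsl$ case is symmetric, and the combined cut--cut step appearing in the definition of a reduction sequence of simple proofs merely reorders two non-terminals, which is harmless by Lemma~\ref{lem.totrig_acyclic_language}.

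The main obstacle, as is typical for this kind of invariance, will be the contraction reduction, in which a subproof above a cut on a contracted cut-formula is duplicated. The duplication forces a renaming of the eigenvariables in the copied subproof, so $\Gram(\pi')$ gains fresh non-terminals carrying the same production rules as their originals, and a priori the generated language could grow. Showing that it does not is precisely where the rigidity condition is used: the two copies of a former non-terminal occur in derivations that are independent in the sense made precise in \cite{Hetzl12Herbrand,HetzlXXHerbrand}, and the totally-rigid derivation discipline forces every term of $\Lang(\Gram(\pi'))$ to arise from a single coherent choice of productions, hence already from a derivation in $\Gram(\pi)$. This bookkeeping, together with the observation that the $\credne$-restriction is exactly what rules out the (weakening) reductions that could shrink the language, is the technical heart of the argument; it is carried out in full in \cite{Hetzl12Herbrand,HetzlXXHerbrand}, and here I would simply invoke those results.
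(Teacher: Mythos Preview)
The paper's own proof is a single sentence: it cites Theorem~7.2 of~\cite{HetzlXXHerbrand} and nothing more. Your proposal ends in the same place --- you explicitly defer the hard case (contraction) to \cite{Hetzl12Herbrand,HetzlXXHerbrand} --- so in that sense the approaches agree.

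What you add is a sketch of the case analysis that those cited papers carry out. That sketch is broadly faithful: the permutation and propositional reductions leave the grammar untouched, the quantifier key step is a substitution that the normal form of Lemma~\ref{lem.totrig_acyclic_language} absorbs, and the contraction-induced duplication of eigenvariables is indeed the crux where total rigidity does the work. One small caution: your description of the quantifier key reduction (``the non-terminal $\alpha_i$ disappears'') is only literally right when the cut-formula has a single instance on the right; with several instances the reduction proceeds via rank steps and contractions first, so the bookkeeping is a bit more involved than your one-line summary suggests. But since you already point to the cited papers for the full argument, this does not create a gap --- it just means your sketch is a sketch, exactly as advertised.
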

\begin{proof}
This is the second part of Theorem 7.2 in~\cite{HetzlXXHerbrand}.
\end{proof}

\begin{definition}
We write $\pi_1 \approx \pi_2$ if $\pi_1$ and $\pi_2$ have the same end-sequent and
$\Gram(\pi_1) = \Gram(\pi_2)$.
\end{definition}
Note that for simple $\pi_1,\pi_2$ we have $\pi_1 \approx \pi_2$ iff the schematic extended Herbrand-sequents
of $\pi_1$ and $\pi_2$ are identical up to renaming of the $X_i$. Furthermore,
note that $\approx$ is an equivalence relation. We write $[\pi]$ for the $\approx$-class
of $\pi$. Using the non-deterministic grammar guessing algorithm {\GG}, it is then
enough to invert $\credne$ up to $\approx$, more precisely:
\begin{theorem}
If $\pi\credne\pi^*$ is a cut-reduction sequence of simple proofs and $\pi^*$ is cut-free,
then $[\pi] \in \nicefrac{\cutintro{{\GG}}{\Cons}(\pi^*)}{\approx}$ for any well-founded consequence
generator $\Cons$.
\end{theorem}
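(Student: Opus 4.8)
The plan is to trace the claimed statement through the four stages of $\cutintro{{\GG}}{\Cons}$ and show that, when $\pi \credne \pi^*$ with $\pi^*$ cut-free, one of the nondeterministic runs produces a proof in the class $[\pi]$. The key observation that makes this possible is that $\approx$ only depends on the end-sequent and the grammar $\Gram(\cdot)$, so it suffices to recover the grammar of $\pi$, not $\pi$ itself.

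First I would invoke Theorem~\ref{thm.grammar_invariance}: since $\pi \credne \pi^*$ is a cut-reduction sequence of simple proofs, $\Lang(\Gram(\pi)) = \Lang(\Gram(\pi^*))$. Because $\pi^*$ is cut-free, $\Gram(\pi^*)$ is the trivial grammar whose language is exactly the set $T$ of Herbrand terms of $\pi^*$; hence $T = \Lang(\Gram(\pi^*)) = \Lang(\Gram(\pi))$. So in step~(1) of $\cutintro{{\GG}}{\Cons}(\pi^*)$ the extracted termset is precisely $T = \Lang(\Gram(\pi))$. In step~(2) we use {\GG}, which nondeterministically guesses a grammar $G$ with $\Lang(G) = T$; since $\Gram(\pi)$ is a totally rigid acyclic tree grammar (being the grammar of the extended Herbrand-sequent of a simple proof with $\Pi_1\cup\Sigma_1$-cuts) generating $T$, one branch of {\GG} returns $G = \Gram(\pi)$. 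Fix that branch.

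Next, for this choice $G = \Gram(\pi)$, step~(3) builds the canonical solution of the schematic extended Herbrand-sequent $\SHS(H,G)$ induced by the Herbrand-sequent $H$ of $\pi^*$ and $G$; this yields, via Theorem~\ref{thm.proof_extHseq} / {\PCA}, a proof with $\Pi_1$-cuts whose grammar is again $G = \Gram(\pi)$. Step~(4) applies $\SFn{\Cons}$; by Theorem~\ref{thm:sfn_main}, since $\Cons$ is well-founded, $\SFn{\Cons}(H)$ terminates and every element $F\in\SFn{\Cons}(H)$ is an extended Herbrand-sequent \emph{based on $G$} — in particular it induces a solution to $\SHS(H,G)$ and so does not change the grammar. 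Thus the proof $\psi$ constructed in step~(5) has $\Gram(\psi) = G = \Gram(\pi)$ and the same end-sequent $S$ as $\pi$ (both prove $S$, and $\pi^*$ was obtained from $\pi$ by cut-reduction, which preserves the end-sequent). Therefore $\psi \approx \pi$, i.e.\ $[\pi] = [\psi] \in \nicefrac{\cutintro{{\GG}}{\Cons}(\pi^*)}{\approx}$.

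The main obstacle I anticipate is the bookkeeping that guarantees the grammar is genuinely \emph{preserved} through steps (3)–(5): one must check that $\Gram(\pi)$ really is a legal output of {\GG} (acyclicity and total rigidity, which hold because $\pi$ is simple with $\Pi_1\cup\Sigma_1$-cuts, so the analysis of Section~\ref{sec.pth_instruct}, as generalized in Section~\ref{sec:more_general}, applies), and that neither the passage to the canonical solution nor the simplification $\SFn{\Cons}$ touches the term-part of the extended Herbrand-sequent — which is exactly the content of Theorem~\ref{thm:sfn_main} together with the remark (following Definition~\ref{def.sch_ext_Hseq}) that a solution of a schematic extended Herbrand-sequent only instantiates the second-order variables $X_i$. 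A subtlety worth spelling out is the use of the parenthetical in Theorem~\ref{thm:sfn_main}'s surrounding discussion, namely that \emph{any} $k$'th intermediary solution can be used in the iteration, so the particular Herbrand-sequent $H$ of $\pi^*$ serves as the $0$'th intermediary solution and the whole iteration stays within the fixed grammar $G$. Modulo these checks, the argument is a direct chaining of Theorems~\ref{thm.grammar_invariance}, \ref{thm.proof_extHseq} and~\ref{thm:sfn_main}.
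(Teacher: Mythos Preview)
Your proposal is correct and follows essentially the same route as the paper: use Theorem~\ref{thm.grammar_invariance} to identify the Herbrand termset of $\pi^*$ with $\Lang(\Gram(\pi))$, pick the nondeterministic branch of {\GG} that returns $\Gram(\pi)$, and observe that the remaining steps of $\cutintro{{\GG}}{\Cons}$ preserve the grammar so that the output $\psi$ satisfies $\Gram(\psi)=\Gram(\pi)$ and hence $\psi\approx\pi$. Your write-up is in fact more careful than the paper's own proof in spelling out why steps (3)--(5) do not alter the term-part (via Theorem~\ref{thm:sfn_main}) and why $\Gram(\pi)$ is a legal totally rigid acyclic grammar; the paper simply asserts these points.
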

\begin{proof}
By definition of the Herbrand-sequent we have
\[
H(\pi^*) =  \{ F_i\unsubst{\bar{x}}{\bar{t}} \mid 1\leq i \leq p, f_i(\bar{t}) \in \Lang(\Gram(\pi^*)) \}
\seq \{ F_i\unsubst{\bar{x}}{\bar{t}} \mid p < i \leq q, f_i(\bar{t}) \in \Lang(\Gram(\pi^*)) \}
\]
and so from Theorem~\ref{thm.grammar_invariance} we obtain
\[
H(\pi^*) =  \{ F_i\unsubst{\bar{x}}{\bar{t}} \mid 1\leq i \leq p, f_i(\bar{t}) \in \Lang(\Gram(\pi)) \}
\seq \{ F_i\unsubst{\bar{x}}{\bar{t}} \mid p < i \leq q, f_i(\bar{t}) \in \Lang(\Gram(\pi)) \}.
\]
Hence $\Gram(\pi)$ can be obtained from {\GG} applied to $\Hseq(\pi^*)$. Let $\psi$ be the proof obtained from
the canonical solution of $\Gram(\pi)$, then $\psi \in \cutintro{{\GG}}{\Cons}(\pi^*)$ (which is well defined
by Theorem~\ref{thm:sfn_main}) and as $\Gram(\psi) = \Gram(\pi)$
we have $\psi\approx\pi$ and therefore $[\pi] \in \nicefrac{\cutintro{{\GG}}{\Cons}(\pi^*)}{\approx}$.
\end{proof}

\subsection{Proof Compression}
We will prove in this section that application of CI to a sequence of cut-free
proofs $\varrho_n$ of sequents $S_n$ can result in an exponential compression of
$\varrho_n$. But we should take care with respect to which {\em complexity measure} the proofs
are compressed. In fact, the steps (1)-(3) and (5) yield proofs $\chi_n$ of
$S_n$ with cuts such that $|\varrho_n|_q$ is exponential in $|\chi_n|_q$, resulting in an
exponential compression of {\em quantifier complexity}. But these steps alone do
not yield an exponential compression of {\em proof length} (taking into account
{\em all} logical inferences). The compression of proof length is then achieved
by using step (4) of the algorithm; step (5) then yields a sequence of proofs
$\varphi_n$ of the sequents $S_n$ with cut s.t. $|\varrho_n|$ is exponential in
$|\varphi_n|$. 

As input we take a sequence of shortest cut-free proofs $\varrho_n$ of the sequents
$$s_n\colon Pa, \forall  x(Px \impl Pfx) \seq Pf^{2^{n+1}}a.$$
from Section~\ref{sec.motex}

\subsubsection{Exponential compression of quantifier-complexity}

We apply step (1) to $\varrho_n$ and obtain a sequence of the corresponding minimal Herbrand sequents
$$s'_n\colon Pa, (Ps \impl Pfs)_{s \in T_n} \seq Pf^{2^{n+1}}a,$$
where
$$T_n = \{a,fa,\ldots,f^{2^{n+1}-1}a\}.$$
Note that the quantifier complexity of any cut-free proof of $S_n$ is $ \geq 2^{n+1}$ and thus exponential in $n$.\\[1ex]
We continue with step (2) using {\DA}: Let $n$ be a fixed (but arbitrary) number
$>0$. The sets $T_n$ can be generated by the grammars
$$G_n\colon \{\alpha_1,f\alpha_1\}\circ_{\alpha_1} \{\alpha_2,f^2\alpha_2\} \circ_{\alpha_2} \cdots \circ \{\alpha_n,f^{2^{n-1}}\alpha_n\} \circ_{\alpha_n} \{a, f^{2^n}a\}.$$
where the $\alpha_i$ are variables and $a$ is a constant symbol.

\smallskip

For describing the
steps of the grammar computation algorithm {\DA} we define
\begin{eqnarray*}
T_k &=& \{\alpha_{k+1},\ldots,f^{2^{k+1}-1}\alpha_{k+1}\} \mbox{ for } k=0,\ldots,n \mbox{ and } \alpha_{n+1}=a,\\
S_k &=& \{\alpha_{k+1},f^{2^k}\alpha_{k+1}\} \mbox{ for } k=0,\ldots,n \mbox{ and } \alpha_{n+1}=a.
\end{eqnarray*}

Note that $S_0 = T_0 = \{\alpha_1,f\alpha_1\}$.

\smallskip

We have assumed that $n >0$. So {\DA} starts with $T_n$ and computes the
 grammar $T_{n-1} \circ S_n$. Assume now that the grammar
$$T_{n-k} \circ S_{n-k+1} \circ \cdots \circ_{\alpha_n} S_n$$
is already computed. If $k=n$ we are done; otherwise we have
$$T_{n-k} = \{\alpha_{n-k+1},\ldots,f^{2^{n-k+1}-1}\alpha_{n-k+1}\}$$
which is decomposed via {\DA} into $T_{n-k-1} \circ S_{n-k}$. Putting things together we obtain the grammar
$$T_{n-k-1} \circ S_{n-k} \circ S_{n-k+1} \circ \cdots \circ S_n.$$
Thus, eventually, we obtain the grammar
$$S_0 \circ \cdots \circ S_n.$$
for $T_n$ which is just $G_n$.\\[1ex]
We now move to step (3) and compute the canonical solution of the schematic extended Herbrand sequent $H_n$ corresponding to $G_n$ where
$$H_n = Pa,(Ps \impl Pfs)_{s \in S_0},X_1\alpha_1 \impl \bAND_{s \in S_1}X_1s, \ldots,X_n\alpha_n \impl \bAND_{s \in S_n}X_n s \seq Pf^{2^{n+1}}a.$$
The canonical solution of $H_n$ is
$$ \theta_n\colon  [X_1 \ass \lambda \alpha_1.C_1 \mid, \ldots, X_n \ass \lambda \alpha_n.C_n]$$ where
\begin{eqnarray*}
F &=& Pa \land (Px \impl Pfx) \land \neg Pf^{2^{n+1}}a,\\
C_1 &=& \bAND_{s \in S_0}F\{x \ass s\},\\
C_{i+1} &=& \bAND_{s \in S_i}C_i\{\alpha_i \ass s\} \mbox{ for } 1 \leq i <n.
\end{eqnarray*}
Now let $H_n^*\colon H_n\theta_n$ be the corresponding extended Herbrand sequent. Then $|H_n^*| = 2(n+1)$. By Theorem~\ref{thm.proof_extHseq} we can construct proofs $\chi_n$
of $S_n$ with $|\chi_n|_q = |H_n^*| = 2(n+1)$. As $|\varrho_n|_q \geq 2^{n+1}$ we obtain
$$|\varrho_n|_q \geq 2^{|\chi_n|_q/2},$$
and so $|\varrho_n|_q$ is exponential in $|\chi_n|_q$.

\subsubsection{Exponential compression of proof length}

Let
$$F_k = (\neg P\alpha_k \lor Pf^{2^k}\alpha_k) \impl \bAND_{s \in S_k} (\neg Ps \lor Pf^{2^k}s)$$
for $k=1,\ldots,n$.

\smallskip

We will prove that using phase (4) of $\cutintro{\DA}{\FResOp}$ (we apply forgetful resolution) we can obtain the extended Herbrand sequent
$$s^*_n\colon F_n,\ldots,F_1, Pa, (Ps \impl Pfs)_{s \in S_0} \seq Pf^{2^{n+1}}a$$
from $s_n$ and the grammar $S_0 \circ \cdots \circ S_n$ for $T_n$.

For the proof we need two auxiliary lemmas.

\begin{lemma}\label{le.speedup1}
Let $k \geq 1$. Then the formula $\neg P\alpha_{n-k+1} \lor Pf^{2^{n-k+1}}\alpha_{n-k+1}$ is derivable by forgetful resolution from $\bAND_{s \in T_{n-k}}(Ps \impl Pfs)$.
\end{lemma}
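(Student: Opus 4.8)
The plan is to unfold the definitions and then carry out a standard ``chain resolution''. Write $\alpha := \alpha_{n-k+1}$ and $m := 2^{n-k+1}$; the hypotheses guarantee $1\le n-k+1\le n$, so $T_{n-k}$ is well defined and $m\ge 2$. By the definition of $T_{n-k}$ we have $T_{n-k}=\{\alpha, f\alpha,\ldots,f^{m-1}\alpha\}$, so, reading $\impl$ classically and passing to clause form, the formula $\bAND_{s\in T_{n-k}}(Ps\impl Pfs)$ is the set of clauses $A_0 = \{D_i\mid 0\le i\le m-1\}$ with $D_i = \neg Pf^i\alpha \lor Pf^{i+1}\alpha$. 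I first record the elementary fact that the atoms $Pf^0\alpha,\ldots,Pf^m\alpha$ are pairwise distinct; this is precisely what makes $\res(\cdot,\cdot)$ defined at every step below, since it guarantees that there is always exactly one complementary pair of literals.

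For $0\le j\le m-1$ put $E_j := \neg P\alpha \lor Pf^{j+1}\alpha$ (so that $E_0 = D_0$) and $A_j := \{E_j\}\cup\{D_i\mid j+1\le i\le m-1\}$. I claim that for every such $j$ there is a finite sequence $A_0,A_1,\ldots,A_j$ with $A_{i+1}\in\FResOp(A_i)$, and I prove this by induction on $j$. The base case $j=0$ is immediate, as $A_0$ is just the starting clause set. For the inductive step $j\to j+1$ (with $j+1\le m-1$): inside $A_j$ resolve $E_j$ with $D_{j+1} = \neg Pf^{j+1}\alpha\lor Pf^{j+2}\alpha$; their unique complementary pair is $Pf^{j+1}\alpha$ versus $\neg Pf^{j+1}\alpha$, hence $\res(E_j,D_{j+1}) = \neg P\alpha\lor Pf^{j+2}\alpha = E_{j+1}$, and forgetting the two parent clauses produces exactly $\{E_{j+1}\}\cup\{D_i\mid j+2\le i\le m-1\} = A_{j+1}\in\FResOp(A_j)$.

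Instantiating $j=m-1$ yields a forgetful-resolution derivation from $A_0$ ending in the singleton clause set $A_{m-1} = \{E_{m-1}\} = \{\neg P\alpha\lor Pf^{m}\alpha\}$, which as a formula is $\neg P\alpha_{n-k+1}\lor Pf^{2^{n-k+1}}\alpha_{n-k+1}$; since resolution is sound, this formula is indeed a propositional consequence of $\bAND_{s\in T_{n-k}}(Ps\impl Pfs)$, which proves the lemma. I do not expect any genuine obstacle here: the argument is an entirely routine induction, and the only point requiring a little care is the bookkeeping of which clauses survive each forgetful step together with the (trivial) verification that the atoms occurring are distinct, so that each intended resolvent actually exists.
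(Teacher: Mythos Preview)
Your proposal is correct and follows essentially the same approach as the paper: both unfold $T_{n-k}$ into the chain of clauses $\neg Pf^i\alpha \lor Pf^{i+1}\alpha$ and then repeatedly resolve the accumulated clause with the next one to collapse the chain to $\neg P\alpha \lor Pf^{m}\alpha$. Your version is simply more explicit, carrying out the induction formally and noting the distinctness of atoms needed for $\res(\cdot,\cdot)$ to be defined, whereas the paper just says ``by repeating resolving the first two clauses we eventually obtain $\neg x_0 \lor x_{2^{n-k+1}}$''.
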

\begin{proof}
Let $G_k = \bAND_{s \in T_{n-k}}(Ps \impl Pfs)$. By definition of $T_{n-k}$ $G_k$ is the formula
$$(P\alpha_{n-k+1} \impl Pf\alpha_{n-k+1}) \land \cdots \land (Pf^{2^{n-k+1}-1}\alpha_{n-k+1} \impl Pf^{2^{n-k+1}}\alpha_{n-k+1}).$$
Its conjunctive normal form $G'_k$ is
$$(\neg P\alpha_{n-k+1} \lor Pf\alpha_{n-k+1}) \land \cdots \land (\neg Pf^{2^{n-k+1}-1}\alpha_{n-k+1} \lor Pf^{2^{n-k+1}}\alpha_{n-k+1}).$$
By setting $x_i = Pf^i\alpha_{n-k+1}$ we obtain
$$G'_k = \bAND^{2^{n-k+1}-1}_{i=0}(\neg x_i \lor x_{i+1}).$$
The first step of forgetful resolution (resolving the first two clauses) gives us the CNF:
$$(\neg x_0 \lor x_2)\land \bAND^{2^{n-k+1}-1}_{i=2}(\neg x_i \lor x_{i+1}).$$
By repeating resolving the first two clauses we eventually obtain $\neg x_0 \lor x_{2^{n-k+1}}$ which is just $\neg P\alpha_{n-k+1} \lor Pf^{2^{n-k+1}}\alpha_{n-k+1}$.
\end{proof}

\begin{lemma}\label{le.speedup2}
Let $k<n$ and
$$s\colon X\alpha_{n-k} \impl \bAND_{s \in S_{n-k}} Xs, F_n,\ldots,F_{n-k+1},Pa,(Ps \impl Pfs)_{s \in T_{n-k-1}} \seq Pf^{2^{n+1}}a$$
be a schematic sequent. Then the substitution
$$[X \ass \lambda x.(\neg Px \lor Pf^{2^{n-k}}x)]$$
is a solution of $s$.
\end{lemma}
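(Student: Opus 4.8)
The plan is to apply the proposed substitution $\theta = [X \ass \lambda x.(\neg Px \lor Pf^{2^{n-k}}x)]$ and check directly that the resulting sequent is a tautology; the variable condition required of a solution is immediate, since $\neg Px \lor Pf^{2^{n-k}}x$ contains no free variable other than the bound $x$. First I would note that, by the definitions of $S_{n-k}=\{\alpha_{n-k+1},f^{2^{n-k}}\alpha_{n-k+1}\}$ and of $F_{n-k}$, applying $\theta$ turns the schematic cut-implication $X\alpha_{n-k}\impl\bAND_{s\in S_{n-k}}Xs$ into exactly $F_{n-k}$. Hence $s\theta$ becomes
\[
F_n,\ldots,F_{n-k},Pa,(Ps\impl Pfs)_{s\in T_{n-k-1}}\seq Pf^{2^{n+1}}a,
\]
and it only remains to show that this sequent is valid.

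The main step is a telescoping argument. For $1\le j\le n$ write $A_j$ for the clause $\neg P\alpha_j\lor Pf^{2^j}\alpha_j$ and, using the convention $\alpha_{n+1}=a$, write $A_{n+1}$ for $\neg Pa\lor Pf^{2^{n+1}}a$, so that $F_j=A_j\impl\bAND_{s\in S_j}(\neg Ps\lor Pf^{2^j}s)$. The key local fact is $F_j,A_j\models A_{j+1}$ for $1\le j\le n$: modus ponens on $F_j$ using $A_j$ yields the two clauses $\neg P\alpha_{j+1}\lor Pf^{2^j}\alpha_{j+1}$ and $\neg Pf^{2^j}\alpha_{j+1}\lor Pf^{2^{j+1}}\alpha_{j+1}$ (this is where $S_j=\{\alpha_{j+1},f^{2^j}\alpha_{j+1}\}$ enters), whose resolvent is $A_{j+1}$; for $j=n$ the convention $\alpha_{n+1}=a$ makes this resolvent precisely $A_{n+1}$. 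Iterating from $j=n-k$ up to $j=n$ then gives $F_{n-k},\ldots,F_n,A_{n-k}\models A_{n+1}$.

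It remains to produce $A_{n-k}$ from the purely ground part of the antecedent. Since $T_{n-k-1}=\{\alpha_{n-k},f\alpha_{n-k},\ldots,f^{2^{n-k}-1}\alpha_{n-k}\}$, the formulas $(Ps\impl Pfs)_{s\in T_{n-k-1}}$ form the chain $Pf^i\alpha_{n-k}\impl Pf^{i+1}\alpha_{n-k}$ for $0\le i<2^{n-k}$, and repeated resolution (hypothetical syllogism) along this chain yields $P\alpha_{n-k}\impl Pf^{2^{n-k}}\alpha_{n-k}$, which is $A_{n-k}$. Combining the two preceding facts gives $F_n,\ldots,F_{n-k},(Ps\impl Pfs)_{s\in T_{n-k-1}}\models A_{n+1}$; together with $Pa$ in the antecedent, one further modus ponens yields $Pf^{2^{n+1}}a$. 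Hence $s\theta$ is valid and $\theta$ is a solution of $s$.

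This is essentially the binary-counting speed-up of Statman, Orevkov and Boolos recast at the level of solutions, so I do not expect a real obstacle; the only point requiring care is the index arithmetic, in particular that the hypothesis $k<n$ is exactly what makes $F_{n-k}$ (defined only for indices $1,\ldots,n$) and $T_{n-k-1}$ well-defined, and that the boundary convention $\alpha_{n+1}=a$ is what lets the telescoping chain land on $Pf^{2^{n+1}}a$.
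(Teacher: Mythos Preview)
Your proof is correct and follows essentially the same route as the paper: apply the substitution to obtain the sequent $F_n,\ldots,F_{n-k},Pa,(Ps\impl Pfs)_{s\in T_{n-k-1}}\seq Pf^{2^{n+1}}a$, derive $\neg P\alpha_{n-k}\lor Pf^{2^{n-k}}\alpha_{n-k}$ from the chain of implications indexed by $T_{n-k-1}$, and then telescope through the $F_j$ by modus ponens and a single resolution step to reach $\neg Pa\lor Pf^{2^{n+1}}a$. The only cosmetic difference is that the paper packages the chain-resolution step as the preceding Lemma~\ref{le.speedup1} and uses the notation $C_j$ where you write $A_j$, but the argument is identical.
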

\begin{proof}
We prove that $s'\colon s[X \ass \lambda x.(\neg Px \lor Pf^{2^{n-k}})]$ is a valid sequent. $s'$ is of the form
$$F_n,\ldots,F_{n-k+1},F_{n-k}, Pa,(Ps \impl Pfs)_{s \in T_{n-k-1}} \seq Pf^{2^{n+1}}a$$
where
$$F_{n-k} = (\neg P\alpha_{n-k} \lor Pf^{2^{n-k}}\alpha_{n-k}) \impl \bAND_{s \in S_{n-k}}(\neg Ps \lor Pf^{2^{n-k}}s).$$
By Lemma~\ref{le.speedup1} we have $(Ps \impl Pfs)_{s \in T_{n-k-1}} \vdash \neg P\alpha_{n-k} \lor Pf^{2^{n-k}}\alpha_{n-k}$ by forgetful resolution. By modus ponens
with $F_{n-k}$ we obtain
$$(\neg P\alpha_{n-k+1} \lor Pf^{2^{n-k}}\alpha_{n-k+1}) \land (\neg Pf^{2^{n-k}}\alpha_{n-k+1} \lor Pf^{2^{n-k+1}}\alpha_{n-k+1})$$
from which, by resolution, we obtain
$$C_{n-k+1}\colon \neg P\alpha_{n-k+1} \lor Pf^{2^{n-k+1}}\alpha_{n-k+1}.$$
If $k=0$ we obtain $C_{n+1}$ and the sequent
$$(*)\ Pa, Pa \impl Pf^{2^{n+1}}a, \ldots, \seq Pf^{2^{n+1}}a$$
which is valid.

\smallskip

If $k>0$ then by modus ponens on $C_{n-k+1}$ and $F_{n-k+1}$ and resolving the result we obtain $C_{n-k+2}$, and so forth. Eventually we obtain $C_{n+1}$ and $(*)$.
\end{proof}

\begin{proposition}\label{prop.speedup1} Let $k \leq n$. Then the
solution-finding algorithm $\SFn{\FResOp}$ constructs the intermediary solution
$$s^+_k\colon F_{n},\ldots,F_{n-k+1}, Pa, (Ps \impl Pfs)_{s \in T_{n-k}} \seq Pf^{2^{n+1}}a.$$
\end{proposition}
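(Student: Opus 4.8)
The plan is to prove the statement by induction on $k$, following exactly the recursion of the algorithm $\SFn{\FResOp}$ as described in Algorithm~\ref{alg:simp_sol_n}. For the base case $k=0$, the claimed intermediary solution $s^+_0$ is $Pa, (Ps\impl Pfs)_{s\in T_n}\seq Pf^{2^{n+1}}a$, which is precisely the minimal Herbrand sequent $s'_n$ of $s_n$ computed in step~(1); by Lemma~\ref{lem:zero_solution} this is the $0$'th intermediary solution, so there is nothing to do. For the induction step, assume $\SFn{\FResOp}$ has produced $s^+_k$ with $k<n$. Then $\SFn{\FResOp}$ sets $\ell = n-k+1$ (in the local indexing, but recall from Example~\ref{ex:improve_multiple} and the footnote in Section~\ref{sec:generalization} that the grammar indices run in the reverse direction, so the relevant new block is $S_{n-k}$), forms the grammar $D=(S_n,\ldots,S_{n-k+1},T_{n-k-1})\circ_{\alpha_{n-k}} S_{n-k}$ of the terms of $s^+_k$, and computes the canonical solution $C$ of the induced schematic extended Herbrand sequent. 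We then apply $\SF{\FResOp}$ to $C$ and must check that its output is $F_{n-k}$, i.e.\ the formula $(\neg P\alpha_{n-k}\lor Pf^{2^{n-k}}\alpha_{n-k})$ (viewed as the $\lambda$-abstracted solution, which then yields the cut-implication $F_{n-k}$ in $s^+_{k+1}$).

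The heart of the argument is the analysis of one step of $\SF{\FResOp}$ applied to the canonical solution $C$. By Definition~\ref{def.can_subst} (generalized) and Lemma~\ref{lem:clean_ground}, after deleting $\alpha_{n-k}$-free clauses the canonical solution reduces to $\bAND_{s\in T_{n-k-1}}(\neg Ps\lor Pfs)$ with the roles of terms as in Lemma~\ref{le.speedup2} — more precisely, on the relevant eigenvariable $\alpha_{n-k}$ the retained clauses are exactly $\bAND_{i=0}^{2^{n-k}-1}(\neg Pf^i\alpha_{n-k}\lor Pf^{i+1}\alpha_{n-k})$. This is the CNF $\bAND_{i=0}^{2^{n-k}-1}(\neg x_i\lor x_{i+1})$ analyzed in Lemma~\ref{le.speedup1}, and repeated resolution of the first two clauses (the behaviour of $\FResOp$ when we choose, at each stage, to resolve the leading pair) collapses it through a chain of $2^{n-k}-1$ forgetful-resolution steps to the single clause $\neg P\alpha_{n-k}\lor Pf^{2^{n-k}}\alpha_{n-k}$, which is $F_{n-k}$'s premise. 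By Lemma~\ref{le.speedup2} (with its $k$ set to the present $n-k-1$, under the matching of indices), the substitution $[X\ass\lambda x.(\neg Px\lor Pf^{2^{n-k}}x)]$ is a solution of the schematic sequent, so the validity test inside $\SF{\FResOp}$ succeeds, the recursion on $\FResOp$ of this clause terminates (since $\FResOp(\neg P\alpha_{n-k}\lor Pf^{2^{n-k}}\alpha_{n-k})=\emptyset$, there being nothing to resolve it with), and by minimality among the candidates traversed, $\SF{\FResOp}(C)$ returns $F_{n-k}$. Feeding this back, the $(k{+}1)$'th intermediary solution based on $s^+_k$ and $F_{n-k}$ is exactly $s^+_{k+1}\colon F_n,\ldots,F_{n-k},Pa,(Ps\impl Pfs)_{s\in T_{n-k-1}}\seq Pf^{2^{n+1}}a$, as the Herbrand terms of the new intermediary solution are $(S_{n-k},L(U\circ S_1\circ\cdots\circ S_{n-k-1})) = (S_{n-k}, T_{n-k-1})$ by the computation of $L(D)$ carried out in the proof of Lemma~\ref{lem:iterate_solution}.

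The main obstacle I anticipate is bookkeeping rather than mathematical depth: one must be careful that $\SF{\FResOp}$, which explores \emph{all} forgetful resolvents and returns the minimal result, does actually return $F_{n-k}$ and not some other, possibly smaller-looking, sequent reachable by a different resolution order. The cleanest way around this is to observe that every clause in the reduced canonical solution on $\alpha_{n-k}$ is of the uniform ``implication-chain'' shape $\neg Pf^i\alpha_{n-k}\lor Pf^{i+1}\alpha_{n-k}$, so that any maximal sequence of forgetful resolutions that ends in a solution must splice these links into a single clause $\neg Pf^i\alpha_{n-k}\lor Pf^j\alpha_{n-k}$ spanning the full chain; among those, only $\neg P\alpha_{n-k}\lor Pf^{2^{n-k}}\alpha_{n-k}$ (and its logical equivalents, ruled out by Lemma~\ref{le.speedup2} being the tightest) passes the solution test, which pins down the output. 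A secondary, purely notational, point to handle with care is the index reversal between the ``$S_0\circ\cdots\circ S_n$'' presentation of $G_n$ used in this compression section and the ``$U\circ_{\alpha_1}S_1\cdots$'' presentation used in Section~\ref{sec.improving.multiple}; matching $T_{n-k}$ here with the $T_\ell$ of Definition~\ref{def:inter_solution} must be done explicitly, but is routine once the correspondence $\ell\leftrightarrow n-k$ is fixed.
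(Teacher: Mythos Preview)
Your proposal is correct and follows essentially the same inductive argument as the paper: the base case is the Herbrand sequent $s'_n$, and in the inductive step you apply Lemma~\ref{lem:clean_ground} to strip the $\alpha_{n-k}$-free clauses from the canonical solution, invoke Lemma~\ref{le.speedup1} for the forgetful-resolution collapse of the remaining chain, and invoke Lemma~\ref{le.speedup2} to certify that the resulting single clause is a solution; this yields $s^+_{k+1}$ exactly as claimed.

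You are in fact more scrupulous than the paper on one point: the paper simply asserts that forgetful resolution ``constructs'' the target clause (reading ``constructs'' nondeterministically, consistent with CI being declared a nondeterministic algorithm), whereas you worry about whether $\SF{\FResOp}$, which returns $\min(S)$, could return something else. Your handling of this worry needs a small correction, however. It is not true that only the final clause passes the solution test: by the Sandwich Lemma (Lemma~\ref{lem:sandwich}) every intermediate conjunction between the cleaned canonical solution and the final clause is also a solution. The correct argument is rather that (i) in the chain $\bAND_{i}(\neg Pf^i\alpha_{n-k}\lor Pf^{i+1}\alpha_{n-k})$ only adjacent links share a dual literal, so every maximal $\FResOp$-sequence terminates in the \emph{same} single clause $\neg P\alpha_{n-k}\lor Pf^{2^{n-k}}\alpha_{n-k}$, and (ii) this single clause has strictly smaller symbol complexity than any proper intermediate conjunction, hence it is the minimum returned. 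With that adjustment your argument is complete.
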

\begin{proof}

For $k=0$ we just have the (valid) input sequent $s_n$ and the Herbrand
instances from $T_n$, trivially constructed by $\SFn{\FResOp}$.

\smallskip

Assume that $0<k<n$ and $\SFn{\FResOp}$ has constructed the intermediary solution
$$s^+_k\colon F_n,\ldots,F_{n-k+1},Pa, (Ps \impl Pfs)_{s \in T_{n-k}}  \seq Pf^{2^{n+1}}a$$
with Herbrand terms $T_{n-k}$. By definition of the $S_i,T_i$ we have that $T_{n-k-1} \circ S_{n-k}$ is a grammar for $T_{n-k}$. Via this grammar we obtain the
schematic extended Herbrand sequent
$$s\colon X\alpha_{n-k} \impl \bAND_{s \in S_{n-k}}Xs,F_n,\ldots,F_{n-k+1},Pa, (Ps \impl Pfs)_{s \in T_{n-k-1}} \seq Pf^{2^{n+1}}a.$$
As $V(T_{n-k-1}) = \{\alpha_{n-k}\}$ and $\alpha_{n-k}$ does not occur in
$F_n,\ldots,F_{n-k+1}$, the improved canonical solution, according to Lemma~\ref{lem:clean_ground},
constructed by $\SFn{\FResOp}$ is
$$[X \ass \lambda x.\bAND_{s \in T_{n-k-1}}(Ps \impl Pfs)[\alpha_{n-k} \ass x]].$$
By Lemma~\ref{le.speedup1} forgetful resolution constructs the formula
$$\neg Px \lor Pf^{2^{n-k}}x \mbox{ from } \bAND_{s \in T_{n-k-1}}(Ps \impl Pfs)\unsubst{\alpha_{n-k}}{x}.$$
By using Lemma~\ref{le.speedup2} we conclude that
$$[X \ass \lambda x.(\neg Px \lor Pf^{2^{n-k}}x)]$$
is a solution of $s$ yielding the sequent
$$F_n,\ldots,F_{n-k+1},F_{n-k}, Pa, (Ps \impl Pfs)_{s \in T_{n-k-1}} \seq Pf^{2^{n+1}}a.$$

\end{proof}

\begin{corollary}\label{coro.speedup1}
Let $s^*_n$ be the schematic extended Herbrand sequent
$$X_n\alpha_n \impl \bAND_{s \in S_n}X_n s, \ldots, X_1\alpha_1 \impl \bAND_{s \in S_1}X_1 s, Pa, (Ps \impl Pfs)_{s \in S_0} \seq Pf^{2^{n+1}}a$$
corresponding to $s_n$ and the grammar $G_n$. Then $\SFn{\FResOp}$ constructs a solution $\vartheta$ for $s^*_n$ where
$$\vartheta = [X_1 \ass \lambda x.(\neg Px \lor Pf^2x), \ldots, X_n \ass \lambda x.(\neg Px \lor Pf^{2^n}x)].$$

\end{corollary}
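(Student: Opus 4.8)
The plan is to verify that the algorithm $\SFn{\FResOp}$, when run on the input consisting of $s_n$ together with the grammar $G_n = S_0 \circ \cdots \circ S_n$, produces exactly the claimed substitution $\vartheta$, and then to observe that $\vartheta$ is indeed a solution of the schematic extended Herbrand sequent $s^*_n$. The first part is essentially a bookkeeping exercise: by Proposition~\ref{prop.speedup1} (applied with $k = n$), $\SFn{\FResOp}$ constructs the intermediary solution
\[
s^+_n\colon F_n, \ldots, F_1, Pa, (Ps \impl Pfs)_{s \in T_0} \seq Pf^{2^{n+1}}a,
\]
and since $T_0 = S_0 = \{\alpha_1, f\alpha_1\}$, this is precisely the $n$'th intermediary solution, i.e.\ an extended Herbrand sequent of $s_n$ based on $G_n$. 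By Theorem~\ref{thm:sfn_main} the output of $\SFn{\FResOp}(H)$ is an extended Herbrand-sequent based on $G_n$, and by the construction in the proof of Lemma~\ref{lem:iterate_solution} the cut-formula introduced at stage $k$ (obtaining $I_{k}$ from $I_{k-1}$) is exactly the formula $F_{\ell-1} = \SF{\FResOp}(C)$ computed from the canonical solution $C$ of the corresponding single-cut schematic sequent.

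The heart of the matter is therefore to track, at each iteration $k = 1, \ldots, n$, which formula $\SF{\FResOp}$ returns. Here I would reuse the analysis already carried out in the proof of Proposition~\ref{prop.speedup1}: at stage $k$ the relevant grammar for the terms of $s^+_k$ is $T_{n-k-1} \circ S_{n-k}$, the canonical solution of the induced single-cut schematic sequent simplifies via Lemma~\ref{lem:clean_ground} to $\lambda x.\, \bAND_{s \in T_{n-k-1}}(Ps \impl Pfs)\unsubst{\alpha_{n-k}}{x}$, and by Lemma~\ref{le.speedup1} forgetful resolution reduces this to the single clause $\neg Px \lor Pf^{2^{n-k}}x$. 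Since $\FResOp$ applied to a single clause yields $\emptyset$, the search in $\SF{\FResOp}$ terminates there and $\min(S)$ returns (the CNF of) $\neg Px \lor Pf^{2^{n-k}}x$, using Lemma~\ref{le.speedup2} to confirm that this is indeed a solution of the single-cut schematic sequent (so that it is actually admitted into the set $S$ in Algorithm~\ref{alg:simp_sol}). Re-indexing, at the stage that fixes the cut-formula $F_j$ we obtain $X_j \ass \lambda x.(\neg Px \lor Pf^{2^j}x)$, which is exactly the $j$'th component of $\vartheta$.

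Finally, one must check that $s^*_n \vartheta$ is a tautology, i.e.\ that $\vartheta$ is a genuine solution of $s^*_n$ and not merely the aggregate of locally-chosen simplifications. This follows from the proof of Lemma~\ref{lem:iterate_solution}: each intermediary solution $I_k$ is valid by construction, and $I_n = s^+_n\vartheta$ is valid, hence $s^*_n\vartheta$ — being logically equivalent to $I_n$ after the $\impl$/$\bigwedge$ manipulations identified in Section~\ref{sec.improving.multiple} — is a tautology. The one point requiring a little care, and the place I expect the only real friction, is the index arithmetic: the representation $G_n = S_0 \circ \cdots \circ S_n$ has its indices reversed relative to the $U \circ_{\alpha_n} S_n \cdots \circ_{\alpha_1} S_1$ convention used in Section~\ref{sec:generalization} (cf.\ the footnote there), so one has to be scrupulous about matching $\alpha_i$, $S_i$, $T_i$ and the variable $X_i$ of $s^*_n$ against the stage counter $k$ and the auxiliary index $\ell = n-k+1$ in Algorithm~\ref{alg:simp_sol_n}. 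Once the dictionary $j \leftrightarrow \ell-1 \leftrightarrow n-k$ is fixed, the corollary is an immediate consequence of Proposition~\ref{prop.speedup1} and Theorem~\ref{thm:sfn_main}.
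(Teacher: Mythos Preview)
Your proposal is correct and follows essentially the same route as the paper: apply Proposition~\ref{prop.speedup1} with $k=n$ to obtain the $n$'th intermediary solution $s^+_n = F_n,\ldots,F_1,Pa,(Ps\impl Pfs)_{s\in S_0}\seq Pf^{2^{n+1}}a$, and then read off the components of $\vartheta$ from the definition of the $F_i$. The paper's proof is a one-liner (``Obvious by Proposition~\ref{prop.speedup1} and by definition of the $F_i$''); your middle paragraph re-traces the internal mechanics of that proposition's proof, and your final paragraph's validity check and index-bookkeeping, while correct, are already absorbed into the notion of intermediary solution and the statement of Proposition~\ref{prop.speedup1}.
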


\begin{proof}
Obvious by Proposition~\ref{prop.speedup1} and by definition of the $F_i$.
\end{proof}

%
New we construct the proofs.
We define a sequence of proofs $\varphi_n$ (via $\psi_n$) in the following way:\\[1ex]
$\psi_0$ is a cut-free proof of the sequent $\all x(Px \impl Pfx) \seq \all x(Px \impl Pf^2x)$
and\\[1ex]
$\psi_{k+1} =$
\[
\infer[\cut]{\all x(Px \impl Pfx) \seq \all x(\neg Px \lor Pf^{2^{k+2}}x)}
 { \deduce{\all x(Px \impl Pfx) \seq \all x(\neg Px \lor Pf^{2^{k+1}}x)}{(\psi_k)}
   &
   \deduce{\all x(\neg Px \lor Pf^{2^{k+1}}x) \seq \all x(\neg Px \lor Pf^{2^{k+2}}x)}{(\chi_{k+1})}
 }
\]
where the eigenvariables of $\psi_k$ are $\alpha_1,\ldots,\alpha_{k+1}$ and $\chi_{k+1}$ is a cut-free proof of constant length
with eigenvariable $\alpha_{k+2}$ and $\foralll$-instances $\{\alpha_{k+2}, f^{2^{k+1}}\alpha_{k+2}\}$.

\smallskip

We can now define $\varphi_n$ for $n > 0$:\\[1ex]
$\varphi_n=$
\[
\infer[\cut]{Pa, \all x(Px \impl Pfx) \seq Pf^{2^{n+1}}a}
 { \deduce{\all x(Px \impl Pfx) \seq \all x(\neg Px \lor Pf^{2^n}x)}{(\psi_{n-1})}
   &
   \deduce{ Pa, \all x(\neg Px \lor Pf^{2^n}x) \seq Pf^{2^{n+1}}a}{(\sigma_n)}
 }
\]
where $\sigma_n$ is a constant length proof with Herbrand terms $\{a, f^{2^n}a\}$.

\begin{proposition}\label{prop.speedup2}
Given a proof of $s_n$ for $n>0$ with minimal Herbrand sequent $s'_n$ (and instances $T_n$) and the grammar $S_0 \circ \ldots \circ S_n$ for $T_n$, the
proof construction algorithm {\PCA} constructs the proof $\varphi_n$.
\end{proposition}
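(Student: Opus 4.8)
The plan is to identify the output of steps (1)--(4) of $\cutintro{\DA}{\FResOp}$ and then follow {\PCA} on it. By Corollary~\ref{coro.speedup1}, steps (1)--(4) produce the extended Herbrand-sequent
\[
H^\ast\ \equiv\ F_n,\ldots,F_1,\ Pa,\ (Ps\impl Pfs)_{s\in S_0}\ \seq\ Pf^{2^{n+1}}a
\]
together with the solution $\vartheta=[X_1\ass\lambda x.(\neg Px\lor Pf^2x),\ldots,X_n\ass\lambda x.(\neg Px\lor Pf^{2^n}x)]$; hence the cut-formulas handed to {\PCA} are $C_i:=\all x\,(\neg Px\lor Pf^{2^i}x)$ with eigenvariable $\alpha_i$, for $i=1,\ldots,n$. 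These are exactly the cut-formulas of $\varphi_n$ (the cut of $\psi_{k+1}$ is on $C_{k+1}$, the outermost cut of $\varphi_n$ is on $C_n$), so the statement reduces to checking that ${\PCA}(H^\ast)$ coincides, up to trivial proof rewriting, with the proof assembled by the recursion $\psi_0,\chi_{k+1},\sigma_n$.

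First I would show that the interpolation carried out inside {\PCA} (proof of Theorem~\ref{thm.proof_extHseq}) is vacuous on $H^\ast$, so that {\PCA} keeps the cut-formulas $C_i$ rather than refining them to $C_i\land I_i$. In the notation of that proof, with the $i$-th cut-formula component taken to be $A_i=\neg P\alpha_i\lor Pf^{2^i}\alpha_i$, the leaf $R_i$ is already valid: for $i<n$ we have $S_i=\{\alpha_{i+1},f^{2^i}\alpha_{i+1}\}$ and $\bAND_{s\in S_i}(\neg Ps\lor Pf^{2^i}s)\models \neg P\alpha_{i+1}\lor Pf^{2^{i+1}}\alpha_{i+1}=A_{i+1}$ by one resolution step (as in the proof of Lemma~\ref{le.speedup2}), while for $i=n$ the ground instances contribute $Pa,Pf^{2^{n+1}}a$ and $R_n$ is the valid sequent $Pa,\bAND_{s\in S_n}(\neg Ps\lor Pf^{2^n}s)\seq Pf^{2^{n+1}}a$; validity of the $L_i$ propagates downward from $L_{n+1}=H^\ast$ as in that proof. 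Hence each interpolant may be chosen to be $\top$, and {\PCA} returns a proof of linear form whose cuts are on $C_n,\ldots,C_1$ (outermost to innermost), cut $i$ flanked by a $\forall_\mathrm{l}^\ast$-inference instantiating $C_i$ at the terms of $S_i$ and a $\forall_\mathrm{r}$-inference with eigenvariable $\alpha_i$, and the end-sequent formula $\all x(Px\impl Pfx)$ instantiated only at $U=S_0$.

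Finally I would match this linear form with $\varphi_n$ by induction on $n$. The key point is that, since the left side of each leaf $R_i$ propositionally entails the single instance $\neg P\alpha_{i+1}\lor Pf^{2^{i+1}}\alpha_{i+1}$ of the next lemma $C_{i+1}$ by two resolution steps (the chaining underlying Lemmas~\ref{le.speedup1}~and~\ref{le.speedup2}), the subderivations {\PCA} inserts at the leaves are of constant size and, with the surrounding $\forall_\mathrm{l}^\ast/\forall_\mathrm{r}$-inferences, form exactly the blocks $\sigma_n$ (at the outermost cut, instances $S_n=\{a,f^{2^n}a\}$) and $\chi_{k+1}$ (eigenvariable $\alpha_{k+2}$, instances $S_{k+1}=\{\alpha_{k+2},f^{2^{k+1}}\alpha_{k+2}\}$); the innermost, cut-free leaf is $\psi_0$, a constant-size proof of $\all x(Px\impl Pfx)\seq\all x(Px\impl Pf^2x)$, which is $C_1$ after rewriting $Px\impl Pf^2x$ as $\neg Px\lor Pf^2x$. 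I expect the main obstacle to be precisely this last matching step: one must argue that {\PCA}'s generic linear form, for the special sequent $H^\ast$ (which, crucially, arises by iterating single-cut solutions, Proposition~\ref{prop.speedup1}), reorganises into the nested-lemma shape of $\varphi_n$ --- in particular that the side formulas $A_{i+2},\ldots,A_n,Pa,Pf^{2^{n+1}}a$ carried along the generic construction are pruned or absorbed by weakening --- while the accompanying checks (interpolation vacuous; eigenvariable and instance-set bookkeeping; the $\impl$-versus-$\lor$ rewriting at the base) are routine.
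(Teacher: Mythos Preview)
Your proposal is correct and follows essentially the same route as the paper: invoke Corollary~\ref{coro.speedup1} to obtain the extended Herbrand sequent $H^\ast$, read off the cut-formulas $C_i=\forall x(\neg Px\lor Pf^{2^i}x)$ together with their eigenvariables $\alpha_i$ and instance sets $S_i$, and then match the resulting proof against the recursive definition of $\varphi_n$ via $\psi_0,\chi_{k+1},\sigma_n$ by an inside-out induction.

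One point worth noting: you explicitly verify that the interpolation step inside {\PCA} (proof of Theorem~\ref{thm.proof_extHseq}) is vacuous on $H^\ast$ by checking that each leaf $R_i$ is already valid. The paper's proof does not address this; it treats {\PCA} informally as ``read off cut-formulas and instances, then assemble the proof'', tacitly relying on the remark after Lemma~\ref{lem.can_sol} (canonical solutions admit linear form without interpolation) together with the fact that the improved solution $\vartheta$ still permits this. Your direct verification is the cleaner way to justify why the cut-formulas remain $C_i$ rather than $C_i\land I_i$. The ``main obstacle'' you flag---reorganising the generic linear form into the nested-lemma shape of $\varphi_n$ and pruning the carried side formulas---is exactly what the paper handles by the phrase ``it remains to order the inferences appropriately''; neither proof makes this fully formal, and it is indeed routine.
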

\begin{proof}
By Corollary~\ref{coro.speedup1} CI constructs the extended Herbrand sequent
$$F_n,\ldots,F_1, Pa, (Ps \impl Pfs)_{s \in S_0} \seq Pf^{2^{n+1}}a$$
where
\begin{eqnarray*}
F_1 &=& (\neg P\alpha_1 \lor Pf^2\alpha_1) \impl \bAND_{s \in S_1} (\neg Ps \lor Pf^2s)\\
\ldots & & \ldots\\
\ldots & & \ldots\\
F_n &=& (\neg P\alpha_n \lor Pf^{2^n}\alpha_k) \impl \bAND_{s \in S_n} (\neg Ps \lor Pf^{2^n}s)
\end{eqnarray*}
From the $F_i$ we read off the cut-formulas $\all x(\neg Px \lor Pf^{2^i}x)$
with eigenvariable substitution $\unsubst{x}{\alpha_i}$ for the left side of the
cut and substitutions $\unsubst{x}{\alpha_{i+1}}, \unsubst{x}{f^{2^i}\alpha_{i+1}}$
for the right side. Note that these are exactly the quantifier substitutions in
the proofs $\psi_i$. For $F_n$ the corresponding substitutions are
$\unsubst{x}{\alpha_n}$ and $\unsubst{x}{a}, \unsubst{x}{f^{2^n}a}$, respectively, corresponding to the last cut in
$\varphi_n$.

\smallskip

Now taking $F_1$ we construct the proof $\psi_1$  via $\all x(Px \impl Pfx)$
(which occurs in the end-sequent) and the cut-formula $\neg Px \lor Pf^2x$ and
via the substitutions $\unsubst{x}{\alpha_1}$ for the eigenvariable of the cut,
$\{ \unsubst{x}{s} \mid s \in S_0 \}$ for the instantiations of
$\all x(Px \impl Pfx)$,  $\{ \unsubst{x}{s} \mid s \in S_1\}$ for the right side of the cut. This gives us all quantifier-inferences of $\psi_1$; it remains to order the inferences appropriately to obtain
$\psi_1$. Note that $Pa$ on the left hand side of the end-sequent is not needed for constructing the proof.

\smallskip

Having constructed $\psi_k$ ($k<n$) with the cut-formulas $\neg Px \lor Pf^2x$, \ldots, $\neg Px \lor Pf^{2^k}x$ we have processed the formulas $F_1,\ldots,F_k$. Now
$$F_{k+1} = (\neg P\alpha_{k+1} \lor Pf^{2^{k+1}}\alpha_{k+1}) \impl \bAND_{s \in S_{k+1}} (\neg Ps \lor Pf^{2^{k+1}}s)$$
We know by construction that the last eigenvariable substitution in $\psi_k$ is
$\unsubst{x}{\alpha_{k+1}}$ and the end-formula on the right hand side is $\all x(\neg
Px \lor Pf^{2^{k+1}}x)$, which by $F_{k+1}$ is the next cut-formula. The
instantiations for the right hand side of the cut are $\{ \unsubst{x}{s} \mid s \in S_{k+1}\}$. If
$k+1<n$ we have $F_{k+2}$, from which $\all x(\neg Px \lor Pf^{2^{k+2}}x)$ is the cut-formula, which is also the end-formula of $\psi_{k+1}$. With this information we
have all necessary quantifier substitutions and formulas to construct $\psi_{k+1}$. If $k+1 =n$ we have
$$F_n = (\neg P\alpha_n \lor Pf^{2^{n}}\alpha_n) \impl \bAND_{s \in \{a,f^{2^n}a\}} (\neg Ps \lor Pf^{2^n}s)$$
and thus $\{a,f^{2^n}a\}$ as substitutions of the right side of the cut. We also know the end sequent $s_n$. This information eventually yields the proof $\varphi_n$.
\end{proof}

\begin{theorem}\label{the.speedup}
Let $(\rho_n)_{(n \in \Nat)}$ be a sequence of shortest cut-free proofs of
$(s_n)_{(n \in \Nat)}$. Then $|\rho_n|> 2^n$ and there are constants $a$,$b$
s.t. the cut-introduction algorithm, applied to $\rho_n$, constructs a sequence
of proofs $\varphi_n$ of $s_n$ s.t. $|\varphi_n| \leq a*n + b$ for all $n \geq 2$.
\end{theorem}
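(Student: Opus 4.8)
The plan is to combine the lower bound for cut-free proofs of $s_n$ with the constructions established in Proposition~\ref{prop.speedup1}, Corollary~\ref{coro.speedup1} and Proposition~\ref{prop.speedup2}. The statement has two halves: first, that every cut-free proof $\rho_n$ of $s_n$ has $|\rho_n| > 2^n$; second, that the cut-introduction algorithm, run on such a $\rho_n$, outputs a proof $\varphi_n$ whose total length is linear in $n$. These are almost independent; the first is a lower-bound argument, the second is an upper-bound computation tracing the algorithm's output.

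\medskip
\noindent\textbf{The lower bound.} For the first half I would argue that any cut-free proof of $s_n \colon Pa, \forall x\,(Px \impl Pfx) \seq Pf^{2^{n+1}}a$ must, via Theorem~\ref{thm.proof_Hseq2}, give rise to a Herbrand-sequent, and every Herbrand-sequent of $s_n$ must contain the instances $Pf^i a \impl Pf^{i+1}a$ for all $0 \le i < 2^{n+1}$: if any instance were missing, the propositional sequent would fail to be a tautology (one checks that dropping the $i$-th implication leaves the model assigning $Pf^j a$ true for $j \le i$ and false for $j > i$, which satisfies all remaining antecedent formulas and falsifies the succedent). Hence $\|\rho_n\|_q \ge 2^{n+1}$, and since $|\rho_n| \ge \|\rho_n\|_q$ (the length counts all inferences, quantifier inferences among them) we get $|\rho_n| \ge 2^{n+1} > 2^n$ for $n \ge 0$. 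This is more than enough; I'd just state it as $|\rho_n| > 2^n$.

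\medskip
\noindent\textbf{The upper bound.} For the second half I would simply assemble the pipeline already verified earlier in this section. Step (1) of the algorithm extracts the minimal Herbrand sequent $s'_n$ with termset $T_n = \{a, fa, \ldots, f^{2^{n+1}-1}a\}$. Step (2), using {\DA}, produces the grammar $G_n = S_0 \circ \cdots \circ S_n$ for $T_n$ (shown in the discussion preceding Proposition~\ref{prop.speedup1}). Step (3) produces the canonical solution of the induced schematic extended Herbrand sequent $s^*_n$, and step (4), via $\SFn{\FResOp}$, produces the solution $\vartheta$ of Corollary~\ref{coro.speedup1}. Finally, by Proposition~\ref{prop.speedup2}, step (5) ({\PCA}) constructs exactly the proof $\varphi_n$ defined just before that proposition. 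It remains to bound $|\varphi_n|$. By inspection of the recursive definition, $\varphi_n$ consists of $\sigma_n$ (constant length, say $\le c_0$), and $\psi_{n-1}$; and $\psi_{k+1}$ is obtained from $\psi_k$ by adding one $\cut$, one instance of $\chi_{k+1}$ (constant length $\le c_1$), plus a bounded number of structural/propositional inferences to merge the two subproofs (say $\le c_2$). Thus $|\psi_{k+1}| \le |\psi_k| + c_1 + c_2 + 1$, and with $|\psi_0| \le c_3$ we get $|\psi_{n-1}| \le c_3 + (n-1)(c_1+c_2+1)$; adding the outermost cut and $\sigma_n$ gives $|\varphi_n| \le a\cdot n + b$ for suitable constants $a = c_1+c_2+1$ and $b = c_0 + c_3 + c_1 + c_2 + 1$, valid for all $n \ge 2$ (indeed for all $n \ge 1$, but $n\ge 2$ is what is claimed). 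Combining the two halves, $|\rho_n| > 2^n \ge \frac{2^n}{a\cdot n + b}\,|\varphi_n|$, which is the asserted exponential compression.

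\medskip
\noindent\textbf{Main obstacle.} The genuinely delegated work — that {\DA} yields $G_n$, that $\SFn{\FResOp}$ yields $\vartheta$, and that {\PCA} yields $\varphi_n$ — is already discharged by Proposition~\ref{prop.speedup1}, Corollary~\ref{coro.speedup1} and Proposition~\ref{prop.speedup2}, so the only remaining care is the bookkeeping: being precise that "length" counts \emph{all} inferences (so that the constant-length subproofs $\chi_k$, $\sigma_n$, $\psi_0$ each contribute only $O(1)$, independent of $n$, which requires noting that the quantifier rules $\forall_\mathrm{l}^*$ introduce a whole block at once and that each $\chi_{k+1}$ has only the two $\foralll$-instances $\{\alpha_{k+2}, f^{2^{k+1}}\alpha_{k+2}\}$), and that the per-step overhead in gluing $\psi_k$ to $\chi_{k+1}$ through a cut is bounded independently of $k$. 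I expect that to be the only place where one has to be slightly careful rather than merely cite.
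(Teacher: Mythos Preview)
Your proposal is correct and follows essentially the same approach as the paper: both establish the lower bound by observing that any Herbrand-sequent of $s_n$ requires all $2^{n+1}$ instances (you supply the explicit countermodel argument the paper leaves implicit), then invoke the grammar computation, Corollary~\ref{coro.speedup1} and Proposition~\ref{prop.speedup2} to identify the algorithm's output as $\varphi_n$, and finally unwind the recursive definition of $\psi_k$ to get a linear length bound. Your extra constant $c_2$ for merging overhead is harmless but unnecessary, since by the definition of $\psi_{k+1}$ the only inference added to $\psi_k$ and $\chi_{k+1}$ is the single cut; the paper accordingly writes $|\psi_{k+1}| = |\psi_k| + |\chi_{k+1}| + 1$ exactly.
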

\begin{proof}
We have $|\rho|>2^n$ for any cut-free proof $\rho$ of $s_n$ and, in particular, for a shortest proof $\rho_n$. The Herbrand instances of shortest proofs of
$s_n$ must be $T_n$, as $T_n$ is a minimal set of Herbrand terms for $s_n$, and
therefore the Herbrand instances for $\rho_n$ are just $T_n$. So the {\DA} constructs the
grammar $S_0 \circ \cdots \circ S_n$ of $T_n$. Then by Proposition~\ref{prop.speedup2} the algorithm constructs the sequence of proofs $\varphi_n$ of $s_n$ for $n \geq 2$.
The length of $\varphi_n$ is linear in $n$. In fact $|\varphi_n| = |\psi_{n-1}| + |\sigma_n| + 1$, where $|\sigma_n| = c$ for a constant $c$.

Moreover, $|\psi_0| = d$ for some constant $d$ and $|\psi_{k+1}| = |\psi_k| + |\chi_{k+1}| + 1$,
where $|\chi_{k+1}| = e$ for some constant $e$. So $|\psi_k| = d +
k*(e+1)$ and
$$|\varphi_n| = c+ 1 + d + (n-1)*(e+1).$$

\end{proof}

\section{Implementation and Experiments}
\label{sec.implementation_experiments}

The algorithm described in this paper was implemented in the
gapt-system\footnote{\url{http://www.logic.at/gapt/}} for introducing a
single $\Pi_1$-cut into a sequent calculus proof. Gapt is a framework for implementing proof
transformations written in the programming language Scala. It was initially
developed for eliminating cuts of proofs by using resolution (CERES), 
but it has proven to be general enough so that other transformations
could be implemented in it. In this section, we explain how to
run the cut-introduction algorithm to compress a proof.

In order to install gapt, you need to have a Java Runtime Environment
(JRE\footnote{\url{http://www.oracle.com/technetwork/java/javase/downloads/jre7-downloads-1880261.html}})
installed.
Then, go to \url{http://www.logic.at/gapt} and the \texttt{gapt-cli-1.4.zip}
file should be
available in the ``downloads'' section. After uncompressing this file, you
should see a directory where you can find the running script \texttt{cli.sh} and
a README file. To run the program, just execute this script.

Gapt opens in a Scala interactive shell (\texttt{scala>}) from where you
can run all the commands provided by the system. To see a list of them, type
\texttt{help}. The commands are separated in categories, and we are interested in
the ones listed under ``Cut-Introduction'' and ``Proof Examples''. In this list
you can see the types of the functions and a brief description of what they do.
Observe that there exist functions for each of the steps described in this
paper for the introduction of cuts, and there is also a ``cutIntro'' command
that does all steps automatically. Under ``Proof Examples'' there is a set of
functions that generate (minimal) cut-free proofs of some parametrized
end-sequents. For example, \texttt{LinearExampleProof(n)} will generate a
cut-free proof of the end-sequent $P0, \forall x (Px \impl Psx)
\rightarrow Ps^n0$ for some $n$.

We will use as input one of these
proofs generated by the system, namely, \texttt{LinearExampleProof(9)}. But the
user can also, for example, write his own proofs in
\texttt{hlk}\footnote{\url{http://www.logic.at/hlk/}} and input these files to
the system. Currently, there is also an effort to implement a parser for proofs
obtained from the TPTP and SMT-LIB problem libraries so that large scale experiments can be
carried out. Meanwhile, we use the motivating example (Section \ref{sec.motex}) for
a brief demonstration.

First of all, we instantiate the desired proof and store this in a variable:
{\small
\begin{alltt}
scala> val p = LinearExampleProof(9) 
\end{alltt}
}
You will see that a big string representing the proof is printed. Gapt also
contains a viewer for proofs and other elements \cite{ProoftoolUITP2012}. You can open it to view
a proof $p$ at any moment with the command \texttt{prooftool(p)}.
It is possible to
see some information about a proof on the command line by calling:
{\small
\begin{alltt}
scala> printProofStats(p)
------------- Statistics ---------------
Cuts: 0
Number of quantifier rules: 9
Number of rules: 28
Quantifier complexity: 9
----------------------------------------
\end{alltt}
}
Now we need to
extract the terms used to instantiate the $\forall$ quantifiers of the
end-sequent:
{\small
\begin{alltt}
scala> val ts = extractTerms(p)
\end{alltt}
}
The system indicates how many terms were extracted, which is nine for this case,
as expected. The next step consists in computing grammars that generate this
term set (Section \ref{sec:computation_of_grammar}):
{\small
\begin{alltt}
scala> val grms = computeGrammars(ts)
\end{alltt}
}
The number of grammars found is shown, 693 in this case. They are ordered
by size, and one can see the first ones by calling:
{\small
\begin{alltt}
scala> seeNFirstGrammars(grms, 5)
\end{alltt}
}
This will print on the screen the first 5 grammars, and we can choose which one
to use for compressing the proof, in this case we take the second one:
{\small
\begin{alltt}
scala> val g = grms(1)
\end{alltt}
}
Given the end-sequent of the proof and a grammar, the extended Herbrand sequent can be computed:
{\small
\begin{alltt}
scala> val ehs = generateExtendedHerbrandSequent(p.root, g)
\end{alltt}
}
As was shown in Lemma \ref{lem.can_sol}, the cut-introduction problem has a
canonical solution:
{\small
\begin{alltt}
scala> val cs = computeCanonicalSolution(p.root, g)
\end{alltt}
}
The canonical solution is then printed on the screen. The extended Herbrand
sequent generated previously has the canonical solution as
default, but this solution can be improved, as demonstrated in Section
\ref{sec.improving}. 
{\small
\begin{alltt}
scala> minimizeSolution(ehs)
\end{alltt}
}
The user can see then that the canonical solution: $\forall x.((P(x)\impl
P(s(x))) \wedge ((P(s(s(x))) \impl P(s(s(s(x))))) \wedge (P(s(x)) \impl
P(s(s(x))))))$ is transformed into a simpler one: $\forall x.(P(s(s(s(x)))) \vee
\neg P(x))$.
Finally, the proof with cut is constructed:
{\small
\begin{alltt}
scala> val fp = buildProofWithCut(ehs)
\end{alltt}
}
In order to compare this with the initial proof, one can again count
the number of rules:
{\small
\begin{alltt}
scala> printProofStats(fp)
------------- Statistics ---------------
Cuts: 1
Number of quantifier rules: 7
Number of rules: 25
Quantifier complexity: 6
----------------------------------------
\end{alltt}
}
We showed how to run the cut-introduction algorithm step by step. There is,
though, a command comprising all these steps:
{\small
\begin{alltt}
scala> val fp2 = cutIntro(p)
\end{alltt}
}
Regarding the choice of the grammar, this command will compute the proofs with
all minimal grammars, and will output the smallest proof (with respect to the
number of rules). Of course, there might be two grammars that generate equally small
proofs. In this case, any grammar/proof can be chosen as a solution.

Besides \texttt{LinearExampleProof}, other sequences of cut-free proofs 
--- similar in spirit, but technically different from \texttt{LinearExampleProof} --- are encoded
in the gapt-system. To emphasize the potential of our cut-introduction method, we
present in Table~\ref{tab:example_results} the results (i.e.~the generated cut-formulas 
and the compression ratio obtained by dividing the number of inferences of the generated
proof by the number of inferences of the input proof)
of applying the implementation to some instances of these sequences. 
All of the displayed proofs involving $=$ use a usual axiomatization of equality based
on reflexivity, symmetry, transitivity, and congruence.
Of course, these 
results do not constitute a systematic empirical
investigation, and evaluation of the method using larger data sets is a necessity. Such experiments
are left for future work and are discussed in the following section.
\begin{table}
\begin{tabular}{|llll|}
\hline
Name & End-sequent & Cut-formula & CR\\
\hline
\texttt{SquareDiagonal} & \parbox{6cm}{$P(0,0), \forall x, y. P(x,y) \impl P(s(x),y),$\\ $\forall x, y. P(x,y) \impl P(x,s(y))$\\ $\seq P(s^8(0),s^8(0))$} & $\forall x. P(x,x) \impl P(s^2(x),s^2(x))$ & $0.53$\\
\hline
\texttt{LinearEq} & $\forall x.f(x)=x \seq f^8(a)=a$ & $\forall x. x=a \impl f^2(x)=a$ & $0.54$\\
\hline
\texttt{SumOfOnes} & \parbox{6cm}{$\forall x. x+0 = x, \forall x, y. x+s(y)=s(x+y)$\\ $\seq 1+1+1+1+1+1=s^6(0)$} & $\forall x . x+s(0) = s(x)$ & $0.56$ \\
\hline
\texttt{SumOfOnesF} & \parbox{6cm}{$\forall x. x+0 = x, \forall x, y. x+s(y)=s(x+y)$\\ $f(0)=0, \forall x.f(s(x)) = f(x) + s(0)$\\ $\seq f(s^4(0)) =s^4(0)$} & $\forall x. f(s(x)) = s(f(x))$& $0.69$\\
\hline
\texttt{SumOfOnesF2} & \parbox{6cm}{$\forall x. x+0 = x, \forall x, y. x+s(y)=s(x+y)$\\ $f(0)=0, \forall x.f(s(x)) = f(x) + s(0)$\\ $\seq f(s^4(0)) =s^4(0)$} & $\forall x. f(x) = x \impl f(s(x)) = s(x)$& $0.45$\\
\hline
\end{tabular}
\caption{Initial experimental results}
\label{tab:example_results}
\end{table}

\section{Conclusion and Future Work}

We have described a method for the inversion of Gentzen's cut-elimination method
by the introduction of quantified cuts into an
existing proof. Our method is based on separating the problem into two phases:
first the minimization of a tree grammar and secondly: finding a solution of
a unification problem. This separation is based on proof-theoretic results
which makes the method computationally feasible as demonstrated by its implementation.

The work presented in this paper is only a first step and opens
up several important directions for future work: a straightforward extension of
this algorithm is to use blocks of quantifiers
in the cut-formulas. This ability is useful for obtaining additional abbreviations. 
It would require modifications in the computation of the $\Delta$-vector and, given
the length of the paper, we have decided to leave this feature to future work.
An equally obvious -- but less straightforward -- extension of this method is
to cover cut-formulas with quantifier alternations. As prerequisite for this work, the
extension of the connection between cut-elimination and tree languages
established in~\cite{Hetzl12Applying} to the corresponding class of formulas is necessary.

On the empirical side an important aspect of future work will be to assess
the abilities for compression of proofs produced by theorem provers: we
intend to carry out large scale experiments with proofs produced from the
TPTP-library~\cite{Sutcliffe09TPTP} and the SMT-LIB~\cite{Barret10Satisfiability}.
Additional features that we consider important for such applications are to include the
ability to work modulo simple theories and to systematically compute grammars
whose language is a superset of the given set of terms.

On the theoretical side, we could only scratch the surface of many questions
in this paper: What is the complexity of grammar minimization? What are
good exact algorithms? Can we find incompressible tree languages? And more generally:
study the complexity of cut-free proofs along the lines of measures such as automatic
complexity~\cite{Shallit01Automatic} and automaticity~\cite{Shallit96Automaticity}.
Also the unification problem poses a number of interesting theoretical
challenges: Is the general unification problem of monadic predicate variables
modulo propositional logic decidable? If yes, what is its complexity, what are good algorithms?
What is the structure of the solution space of unification problems induced by
cut-introduction? How can we navigate this structure systematically to find
solutions of minimal size? How can we prover lower bounds on the size of such
solutions?
\bibliography{references}
\bibliographystyle{plain}

\end{document}